\pdfoutput=1
\documentclass[runningheads,a4paper]{llncs}
\usepackage{amsmath}
\usepackage{amssymb}
 \usepackage{listings}  
\usepackage{stmaryrd}
\usepackage{graphicx}
\usepackage{euscript}
\usepackage{bbm}
\usepackage{paralist}
\usepackage{color}
\usepackage{mathbbol}
\usepackage{mathabx}
\usepackage{paralist}
\usepackage{rotating}
\usepackage{url}
\lstnewenvironment{code}{%
  \lstset{frame=single,
   basicstyle=\ttfamily,
  mathescape,
  escapeinside=||,}
}{}


\newcommand{\bigfract}[2]{\frac{^{\textstyle #1}}{_{\textstyle #2}}}
\newcommand{\rulename}[1]{{\sc (#1)}}
\newcommand{\rulenamex}[1]{\mbox{\scriptsize\sc(#1)}}

\def \mathrule #1#2#3{\begin{array}{l} 
                       {\rulenamex{#1}}
                       \\ \bigfract{#2}{#3}
                      \end{array}}

\newcommand{\subst}[2]{[\raisebox{.5ex}{\footnotesize$#1$}  /
                        \raisebox{-.5ex}{\footnotesize$#2$} ]}


\newcommand{\M}{\mathtt{f}}
\newcommand{\Q}{\mathtt{l}}
\newcommand{\PP}{\mathtt{h}}
\newcommand{\N}{\mathtt{g}}

\newcommand{\CP}{\mathfrak{L}}
\renewcommand{\S}{\mathtt{S}}
\renewcommand{\P}{\mathtt{L}}
\newcommand{\Mplus}{\mathbb{F}}

\newcommand{\f}{{\tt f}}




\newcommand{\seq}{\raisebox{.2ex}{\small ~{\bf +}~}}
\newcommand{\sparop}{\binampersand}

\renewcommand{\prod}{\sparop}

\newcommand{\transclosure}{\oplus}
\newcommand{\lessc}{{\scriptstyle <}}

\newcommand{\ordermut}[1]{\mathfrak{o}_{#1}}
\newcommand{\suh}[1]{{}^\mathfrak{h}{#1}}
\newcommand{\suhs}[2]{{}^{#1}{#2}}
\newcommand{\addh}[2]{{\it addh}(#1,#2)}
\newcommand{\lcm}{{\it lcm}}
\newcommand{\lessfb}{\preceq^{{\tt fb}}}

\newcommand{\flatt}[1]{\flat(#1)}
\newcommand{\flatm}[2]{\flat_{#1}(#2)}

\newcommand{\lred}[1]{\stackrel{#1}{\longrightarrow}}

\newcommand{\pslinearized}{\Supset_{\tt pl}}
\newcommand{\linearized}{\Supset_{\tt lin}}

\newcommand{\rechis}[1]{{\it rechis}(#1)}
\newcommand{\head}[1]{{\it head}(#1)}

\newcommand{\transfuno}[2]{#1 \; \stackrel{{\tt pl} \mapsto {\tt l}}{\Longmapsto}_1 \; #2}
\newcommand{\transfdue}[2]{#1 \; \stackrel{{\tt pl} \mapsto {\tt l}}{\Longmapsto}_2 \; #2}

\newcommand{\transflam}[1]{\stackrel{#1}{\Longmapsto}}

\newcommand{\T}{{\tt T}}



\newcommand{\por}[1]{\mathbb{#1}}

\newcommand{\lam}[1]{\bigl\langle #1 \bigr\rangle}
\newcommand{\ilam}[2]{\bigl\langle #1 \bigr\rangle_{#2}}

\newcommand{\dom}[1]{{\it dom}(#1)}

\newcommand{\closure}[1]{{\it closure}(#1)}

\newcommand{\obj}{x}
\newcommand{\objb}{y}



\newcommand{\wt}[1]{\widetilde{#1}}
\newcommand{\pinull}{{\tt 0}}

\newcommand{\sem}[1]{[\![ #1 ]\!]}

\newcommand{\eqdef}{\stackrel{\it def}{=}}


\renewcommand{\emptyset}{\varnothing}


\newcommand{\var}[1]{{\it var}(#1)}

\newcommand{\mut}[1]{\mathbb{\Lparen} \, #1 \, \mathbb{\Rparen}}
\newcommand{\elem}[1]{#1}
\newcommand{\mutpair}[2]{\ilam{#2, #1}{}}


\begin{document}

\lstset{
language=Java,
basicstyle=\ttfamily\small, 
numbers=left, 
numberstyle=\ttfamily\small,
stepnumber=2,
numbersep=2pt}

\title{Deadlock detection in linear recursive programs}

\author{Elena Giachino\and Cosimo Laneve}
\institute{Dept.~of Computer Science and Egineering, Universit\`a di Bologna -- 
           INRIA FOCUS
           {\tt $\{$giachino,laneve$\}$@cs.unibo.it} }

\maketitle

\pagestyle{plain}

\begin{abstract}
Deadlock detection in recursive programs that admit dynamic resource 
creation is extremely complex and solutions either give imprecise answers or do not
scale.

We define an algorithm for detecting deadlocks of \emph{linear recursive programs} of
a basic model. 
The theory that underpins the algorithm is a generalization of the theory of 
permutations of names to so-called \emph{mutations}, which transform tuples by
introducing duplicates and fresh names.

Our algorithm realizes the back-end of
deadlock analyzers for object-oriented programming languages,
once the association programs/basic-model-programs has been defined as front-end.
\end{abstract}

\section{Introduction}
\label{sec.introduction}

Deadlocks in concurrent programs are detected by 
building graphs of dependencies 
$(x,y)$ between resources, meaning that the release of a resource referenced by 
$x$ depends on the release of the resource referenced by $y$.
The absence of cycles in the graphs entails deadlock freedom. 
When programs have infinite states, 
the deadlock detection tools use finite models that are excerpted
from the dependency graphs to ensure termination.

The most powerful deadlock analyzer we are aware of is {\sc TyPiCal},
a tool developed for pi-calculus by 
Kobayashi~\cite{Typicaltool,Kobayashi1998,Kobayashi2004,Kobayashi06}.
This tool uses a clever technique for deriving 
inter-channel dependency information and
is able to deal with several recursive behaviors and the creation of new 
channels without
using any pre-defined order of channel names.
Nevertheless, since 
{\sc TyPiCal} is based on an inference system, there are recursive behaviors
that escape its accuracy.
For instance, it returns false positives when recursion is mixed up 
with \emph{delegation}. To illustrate the issue we consider the following deadlock-free pi-calculus factorial program
\begin{lstlisting}[numbers=none]
*factorial?(n,(r,s)).
  if n=0 then r?m. s!m else new t in 
                       (r?m. t!(m*n)) | factorial!(n-1,(t,s)) 
\end{lstlisting}
In this code, {\tt factorial} returns the value (on the channel {\tt s}) by 
\emph{delegating}
this task to the recursive invocation, if any. In particular,
the initial invocation of {\tt factorial}, which is {\tt r!1 | factorial!(n,(r,s))}, 
performs a synchronization between {\tt r!1} and the input {\tt r?m} in 
the continuation of {\tt factorial?(n,(r,s))}. In turn, this may delegate 
the computation of the factorial to a subsequent synchronization on a 
new channel {\tt t}.
{\sc TyPiCal} signals a deadlock on the two inputs {\tt r?m} because it fails in connecting
the output {\tt t!(m*n)} with them. 

\smallskip

\emph{The technique we develop in this paper allows us to demonstrate the 
deadlock freedom of programs 
like the one above}.

\smallskip

To ease program reasoning, our technique relies on an abstraction process that 
extracts the dependency constraints in programs
\begin{itemize}
\item
by dropping primitive data types and values;
\item
by highlighting dependencies between pi-calculus actions; 
\item
by overapproximating statement behaviors, namely 
collecting the dependencies and the invocations in the two branches of the
conditional (the set union operation is modeled by
    $\sparop$).
\end{itemize}
This abstraction process is currently performed by a formal inference system 
that does not target pi-calculus, but it is defined for a {\sc Java}-like
programming language, called {\tt ABS}~\cite{ABS}, see 
Section~\ref{sec.assessments}. 
Here, pi-calculus has been considered for expository 
purposes. The {\tt ABS} program corresponding to the pi-calculus {\tt factorial} may 
be downloaded from~\cite{DAT}; readers that are familiar with {\sc Java} may find
the code in the Appendix~\ref{sec.Java}.
As a consequence of the abstraction operation we get the function

\smallskip

$\qquad{\tt factorial}(r,s) = (r,s) \sparop (r,t) \sparop {\tt factorial}(t,s)$

\smallskip

\noindent
where $(r,s)$ shows the dependency between the actions {\tt r?m} and {\tt s!m} and
$(r,t)$ the one between {\tt r?m} and {\tt t!(m*n)}.
The semantics of the abstract ${\tt factorial}$ is  defined operationally by unfolding the recursive
invocations. In particular, the unfolding of ${\tt factorial}(r,s)$ yields the
sequence of abstract states (free names in the definition of ${\tt factorial}$ are replaced
by fresh names in the unfoldings)

\smallskip

\noindent
$\begin{array}{@{\!}r@{\,}l}
 {\tt factorial}(r,s) \lred{} & (r,s) \sparop (r,t) \sparop {\tt factorial}(t,s) 
 \\
 \lred{} & (r,s) \sparop (r,t) \sparop (t,s) \sparop (t,u) \sparop {\tt factorial}(u,s) 
 \\
 \lred{} & (r,s) \sparop (r,t) \sparop (t,s) \sparop (t,u) \sparop 
 (u,s) \sparop (u,v) 
 \\
 &  \sparop {\tt factorial}(v,s)
 \\
 \lred{} & \quad \cdots
\end{array}
$

\smallskip

We demonstrate that 
the abstract {\tt factorial} (and, therefore, the
foregoing pi-calculus code) never manifests a 
circularity by using a \emph{model checking} technique. This despite the fact that
the model of {\tt factorial} has infinite states. 
In particular, we are able to
decide the deadlock freedom by analyzing finitely many states -- precisely
three -- of  {\tt factorial}.

\paragraph{Our solution.}
We introduce a basic recursive model, 
called \emph{lam programs} -- lam is an acronym for \emph{deadLock Analysis Model} 
-- that are collections of function definitions and a main term to evaluate.
For example, 

\smallskip

$\begin{array}{rl}
\bigl( & {\tt factorial}(r,s) = (r,s) \sparop (r,t) \sparop {\tt factorial}(t,s) \; , 
{\tt factorial}(r,s) \; \; \bigr)
\end{array}$

\smallskip

\noindent
defines ${\tt factorial}$ and the main term ${\tt factorial}(r,s)$.
Because lam programs feature recursion and dynamic name creation -- \emph{e.g.}~the 
free name $t$ in the definition of 
${\tt factorial}$ -- the  model is
not finite state (see Section~\ref{sec.fulllanguage}).

In this work we address the %
\begin{question}\label{question}
Is it decidable whether the computations of a lam 
program will ever produce a circularity?
\end{question}
and the main contribution is the positive answer when programs are \emph{linear 
recursive}. 

To begin the description of our solution, we notice that, if lam programs are
non-recursive then detecting circularities is as simple as unfolding the invocations 
in the main term. 
In general, as in case of ${\tt factorial}$, the unfolding may not terminate.
Nevertheless, the following two conditions may ease our answer:
\begin{itemize}
\item[(i)] the functions in the program are \emph{linear recursive},
that is (mutual) recursions have at most one recursive invocation -- such as
${\tt factorial}$;
\item[(ii)] function invocations do not show duplicate arguments 
and function definitions do not have free names.
\end{itemize}
When (i) and (ii) hold, as in the program

\smallskip

\qquad \qquad $\bigl(\M(x,y,z) = (x,y) \sparop \M(y,z,x), \; \M(u,v,w) \bigr)\;$ ,

\smallskip
\noindent
recursive functions may be considered as \emph{permutations of names}  -- technically 
we define a notion of \emph{associated (per)mutation} --  and
the corresponding theory~\cite{Comtet} guarantees that, by repeatedly  
applying  a same permutation 
to a tuple of names, at some point, one obtains the initial tuple. 
This point, which is known as the \emph{order} of the permutation, allows one to 
define the following algorithm for Question~\ref{question}:
\begin{enumerate}
\item
compute the order of the permutation associated to the 
function in the lam and 
\item
correspondingly unfold the term
to evaluate.
\end{enumerate}
For example, the permutation of $\M$ has order 3. Therefore,
it is possible to stop the evaluation of $\M$ after the third unfolding (at the state
$(u,v) \sparop (v,w) \sparop (w,u) \sparop \M(u,v,w)$) 
because every dependency pair produced afterwards will belong to the relation $(u,v) \sparop (v,w) \sparop (w,u)$.

When the constraint (ii) is dropped, as in ${\tt factorial}$,
the answer to Question~\ref{question} 
is not simple anymore. However, the above analogy with permutations has been a 
source of inspiration for us. 
\begin{figure*}
 \centering
\includegraphics[width=1.2\textwidth]{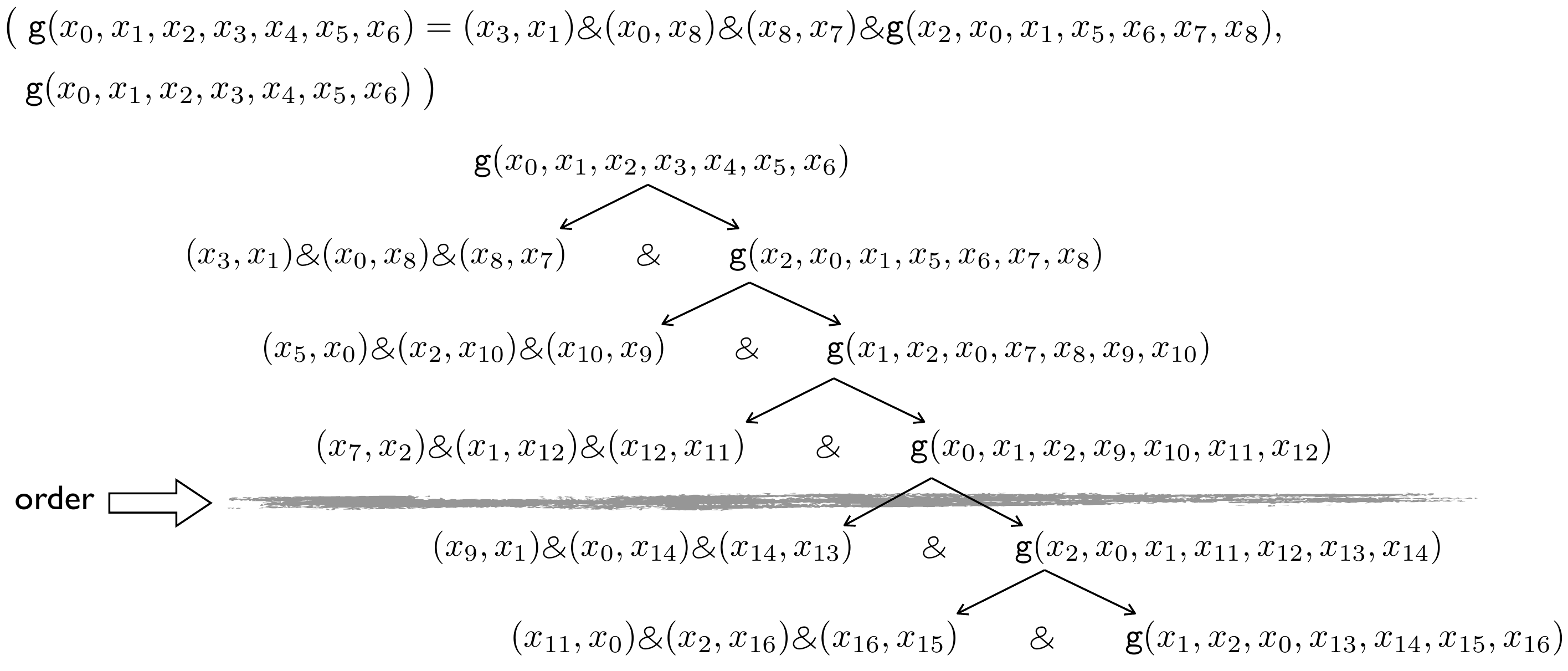}
\caption{\label{fig.lamunfolding}A lam program and its unfolding}
\end{figure*}

Consider the main term ${\tt factorial}(r,s)$. 
Its evaluation will never display 
${\tt factorial}(r,s)$ twice, as well as any other invocation
in the states, because the first argument of the recursive invocation 
is free. Nevertheless, we notice that, 
from the second state -- namely $(r,s) \sparop (r,t) \sparop {\tt factorial}(t,s)$ --  onwards, the invocations of ${\tt factorial}$
are not identical, but \emph{may be identified by a map} that
\begin{itemize}
\item[--]
associates names created in the last evaluation step to past names,
\item[--]
is the identity on other names.
\end{itemize}
The definition of this map, called \emph{flashback}, requires that the transformation associated to a lam function, called 
\emph{mutation}, also records the name creation. 
In fact, the theory of mutations allows us to map ${\tt factorial}(t,s)$ 
back to ${\tt factorial}(r,s)$ by recording 
that $t$ has been created after $r$, \emph{e.g.}~ $r \lessc t$.

We generalize the result about permutation orders (Section~\ref{sec.mutationsandflashbacks}):
\begin{quote}
\emph{by repeatedly applying a same mutation 
to a tuple of names, %
at some point we obtain a
tuple that is identical, up-to a flashback, to a tuple in the past.}
\end{quote}
As for permutations, this point is
the \emph{order} of the mutation, which (we prove) it is possible to compute in similar ways.

However, unfolding a function  
as many times as the order of the associated mutation may not be sufficient
for displaying circularities. 
This is unsurprising because the arguments about mutations and flashbacks
focus on function invocations and do not account for dependencies. In the 
case of lams where (i) and (ii) hold, these arguments were sufficient because permutations 
reproduce \emph{the same} dependencies of past invocations. In the case of
mutations, this is not true anymore as displayed by the function  $\N$ in
Figure~\ref{fig.lamunfolding}. This function
has order 3 and the first three unfoldings of 
$\N(x_0, x_1, x_2, x_3, x_4, x_5, x_6)$  are those above the horizontal line. 
While there is a flashback from 
$\N(x_0, x_1, x_2, x_9, x_{10}, x_{11}, x_{12})$ to $\N(x_0, x_1, x_2, x_3, x_4, x_5, x_6)$, the pairs produced up-to the third unfolding

\smallskip

$\begin{array}{l}
(x_3,x_1) \sparop (x_0,x_8) \sparop (x_8,x_7) 
\sparop 
(x_5,x_0) \sparop (x_2,x_{10}) \sparop (x_{10},x_9) 
\\ 
\sparop 
(x_7,x_2) \sparop (x_1,x_{12}) \sparop (x_{12},x_{11})
\end{array}$

\smallskip
\noindent
do not manifest any circularity.  Yet, two additional unfoldings
(displayed below the horizontal line of Figure~\ref{fig.lamunfolding}),
show the circularity

\smallskip

$\begin{array}{c}
  (x_0,x_8) \sparop (x_8,x_7) \sparop (x_7,x_2) \sparop (x_2, x_{10})  \sparop (x_{10}, x_9)
\\
  \sparop (x_9,x_1) \sparop (x_1,x_{12})\sparop (x_{12},x_{11}) \sparop (x_{11}, x_0) \; .
\end{array}$

\smallskip

\begin{figure}
 \centering
\includegraphics[width=0.6\textwidth]{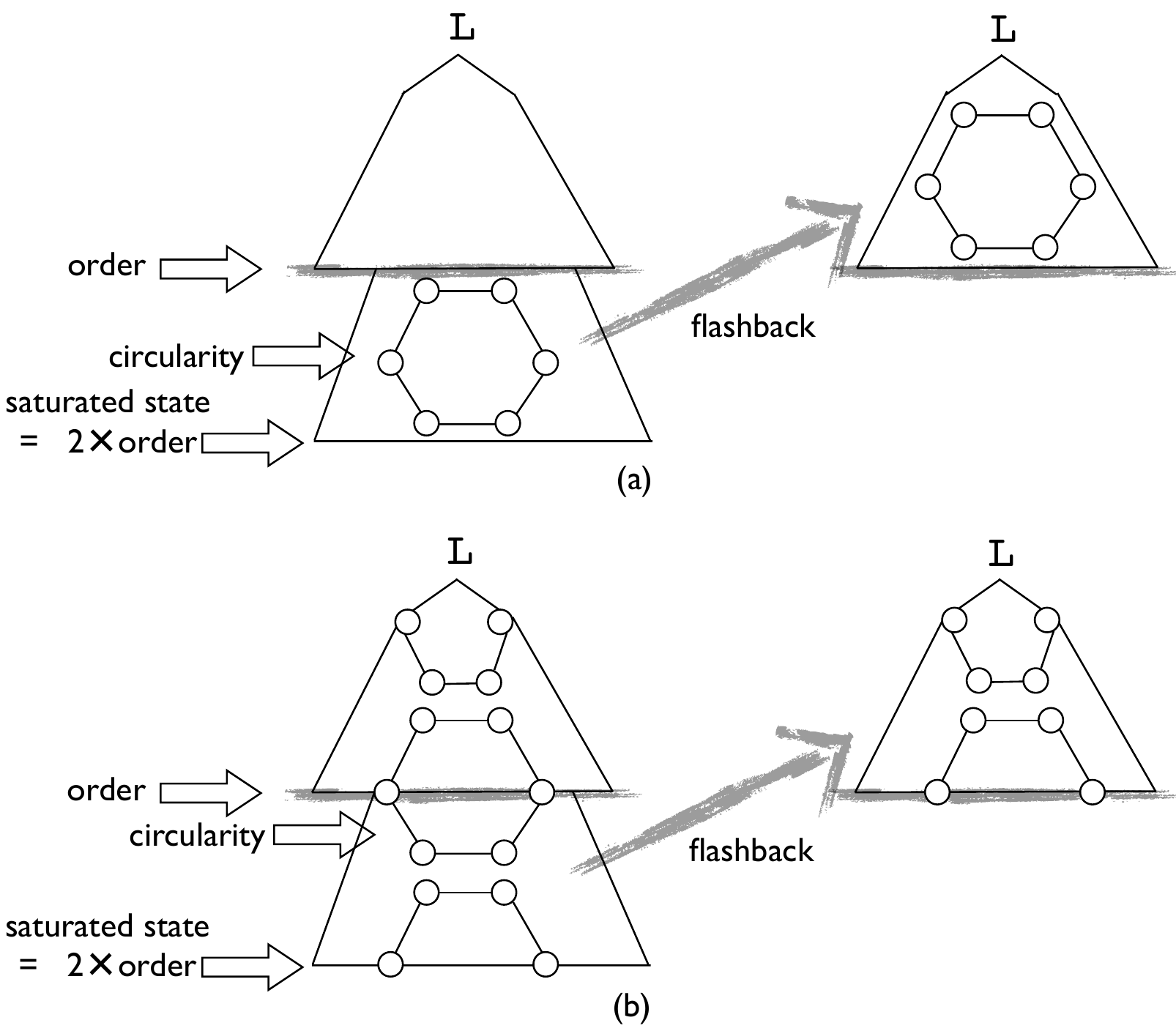}
\caption{\label{fig.flashbackcirc}Flashbacks of circularities}
\vspace{-.5cm}
\end{figure}
In Section~\ref{sec.algorithm} we prove that a sufficient condition 
for deciding whether a lam program as in Figure~\ref{fig.lamunfolding} will ever produce 
a circularity is to unfold the function $\N$
\emph{up-to two times} the order of the associated mutation -- this state will be 
called \emph{saturated}. If no circularity is manifested in the saturated state then 
the lam is ``circularity-free''. 
This supplement of evaluation is due to the existence of two alternative 
ways for
creating circularities. A first way is when the circularity is given by
the dependencies produced by the unfoldings from the order to the saturated state. Then,
our theory guarantees that the circularity is also present in the unfolding of 
$\N$ till the order -- see Figure~\ref{fig.flashbackcirc}.a. A second way is when 
the dependencies of the circularity are produced by (1) the unfolding till the order 
\emph{and} by (2) the unfolding from the order till the saturated state -- these are the 
so-called \emph{crossover circularities} -- see 
Figure~\ref{fig.flashbackcirc}.b. Our theory allows us to map dependencies
of the evaluation (2)  to those of
the evaluation (1) and the flashback may break the circularity -- in this case,
the evaluation till the saturated state is necessary to collect enough informations.
Other ways for creating circularities are excluded. The intuition behind this fact is that the behavior of the function (the dependencies) repeats itself following the same pattern every order-wise unfolding. Thus it is not possible to reproduce a circularity that crosses more than one order without having already a shorter one.
The algorithm for detecting circularities in linear recursive lam programs is
detailed in Section~\ref{sec.thealgorithm-vero}, together with a discussion about its
computational cost.

We have prototyped our algorithm~\cite{DAT}.
In particular, the prototype (1) uses a (standard but not straightforward) 
\emph{inference system} that we developed for deriving behavioral types with 
dependency informations out of {\tt ABS} programs~\cite{GL2013a} and
(2) has an add-on translationg these behavioral types into lams. We have been
 able to verify an 
industrial case study 
developed by SDL Fredhoppper -- more than 2600 lines of code -- in 31 seconds.
Details about our prototype and a comparison 
with other deadlock analysis tools can be found in Section~\ref{sec.assessments}.
There is no space in this contribution to discuss the inference system: the interested
readers are referred to~\cite{GL2013a}.

\section{Generalizing permutations: mutations and flashbacks}
\label{sec.mutationsandflashbacks}

Natural numbers are ranged over by $a$, $b$, $i$, $j$, $m$, $n$, $\dots$, possibly indexed.
Let ${\sf V}$ be an infinite set of names, ranged over by $x, y, z, \cdots$.
We will use partial order relations  on names
-- relations that are reflexive, antisymmetric,
and transitive --, ranged over by
$\por{V}, \por{V}', \por{I}, \cdots$.
Let $x \in \por{V}$ if, for some $y$, either 
 $(x,y) \in \por{V}$ or $(y,x) \in \por{V}$. Let also $\var{\por{V}} = 
 \{ x \; | \; x \in \por{V} \}$.
For notational convenience, we write $\wt{x}$ 
when we refer to a list of names $x_1,\dots,x_n$.

Let $\por{V} \transclosure \wt{x} \lessc \wt{z}$, 
with $\wt{x} \in \por{V}$ and $\wt{z} \notin \por{V}$,
be the least partial order containing the set 
$\por{V} \cup \{ (y,z) \; | \; x \in \wt{x} \; \mbox{and} \; (x,y) \in \por{V} 
\; \mbox{and} \; z\in \wt{z} \}$.
That is, $\wt{z}$ become \emph{maximal names} in
$\mathbb{V} \transclosure \wt{x} \lessc 
\wt{z}$.
For example, 
\begin{itemize}
\item[--]
$\{ (x,x) \} \transclosure x \lessc z = \{ (x,x), (x,z) , (z,z) \}$;
\item[--]
if $\por{V} = \{ (x,y), (x',y') \}$ (the reflexive pairs
are omitted) then 
$\por{V} \transclosure y\lessc z$ is the reflexive and transitive closure of 
$\{
(x,y), (x',y'),$ $(y,z) \}$;

\item[--]
if $\por{V} = \{ (x,y), (x,y') \}$ (the reflexive pairs
are omitted) then 
$\por{V} \transclosure x \lessc 
z$ is the reflexive and transitive closure of 
$\{
(x,y), (x,y'),$ $(y,z), (y',z)
\}$.
\end{itemize}
Let $x \leq y \in \por{V}$ be $(x,y) \in \por{V}$. 

\begin{definition}
A \emph{mutation} of a tuple of names, denoted $\mut{\elem{a}_1, \cdots , \elem{a}_n}$ 
where $1 \leq \elem{a}_1, \cdots, a_n \leq 2\times n$,
transforms a pair
$\mutpair{(x_1, \cdots,  x_n)}{\por{V}}$ into a pair $\mutpair{(x_1', \cdots, x_n')}{\por{V}'}$ 
as follows. Let $\{b_{1}, \cdots , b_{k}\} = \{a_1, \cdots , a_n\} \setminus
\{1,2, \cdots , n\}$ and let $z_{b_{1}}, \cdots , z_{b_{k}}$ be $k$ pairwise 
different fresh names. [That is names not occurring either in $x_1, \cdots , x_n$ 
or in $\por{V}$.] Then
\begin{itemize}
\item[--]
if $1 \leq \elem{a}_i \leq n$ then $x_i' = x_{\elem{a}_i}$;
\item[--]
if $\elem{a}_i > n$ then
$x_i'= z_{a_i}$;
\item[--]
$\por{V}' = \por{V} \transclosure x_1, \cdots, x_n \lessc z_{{i_1}}, \cdots , z_{{i_k}}$.
\end{itemize}

The mutation $\mut{\elem{a}_1, \cdots , \elem{a}_n}$ of 
$\mutpair{(x_1, \cdots, x_n)}{\por{V}}$ 
will be written $\mutpair{(x_1, \cdots, x_n)}{\por{V}}$ $\lred{\mut{\elem{a}_1, \cdots , 
\elem{a}_n}} \mutpair{(x_1', \cdots, x_n')}{\por{V}'}$ and the label $\mut{\elem{a}_1, \cdots , \elem{a}_n}$ is omitted when 
the mutation is clear from the context.
Given a mutation $\mu=\mut{\elem{a}_1, \cdots , \elem{a}_n}$, we define the application of $\mu$ to an index $i$, $1\leq i\leq n$, as $\mu(i)=\elem{a}_i$.
\end{definition}

Permutations are mutations $\mut{\elem{a}_1, \cdots , \elem{a}_n}$ where the elements
are pairwise different and belong 
to the set $\{1, 2, \cdots,  n\}$ (e.g. $\mut{2,3,5,4,1}$).
In this case the partial order $\por{V}$ never changes and therefore it is useless.
Actually, our terminology and statements below are inspired by the corresponding 
ones for permutations.
A mutation differs from a permutation because it can exhibit repeated elements, or even new elements (identified by $n+1 \leq \elem{a}_i 
\leq 2\times n$, for some $\elem{a}_i$). 
For example, by successively applying the mutation $\mut{2,3,6,1,1}$
to $\mutpair{(x_1, x_2, x_3, x_4, x_5)}{\por{V}}$, with $\por{V} = 
\{ (x_1,x_1), \cdots , (x_5,x_5)\}$ and $\wt{x} = x_1, x_2, x_3, x_4, x_5$, we obtain

\smallskip

$\begin{array}{r@{\quad}l}
\mutpair{(x_1, x_2, x_3, x_4, x_5)}{\por{V}}
 \quad \lred{} & \mutpair{(x_2, x_3, y_1, x_1, x_1)}{\por{V}_1}
\\
\lred{} & \mutpair{(x_3, y_1, y_2, x_2, x_2)}{\por{V}_2}
\\
\lred{} & \mutpair{(y_1, y_2, y_3, x_3, x_3)}{\por{V}_3}
\\
\lred{}& \mutpair{(y_2, y_3, y_4, y_1, y_1)}{\por{V}_4}
\\
\lred{}& \cdots
\end{array}$

\smallskip
\noindent
where $\por{V}_1 = \por{V} \transclosure \wt{x} \lessc y_1$ and, for $i\geq 1$,
$\por{V}_{i+1} = \por{V}_{i} \transclosure y_{i} \lessc y_{i+1}$.
In this example, $6$ identifies a new name to be added at each application of the 
mutation. The new name created at each step is a maximal one for the partial order.

We observe that, by definition, $\mut{2,3,6,1,1}$ and $\mut{2,3,7,1,1}$
define a same transformation of names. That is, the choice of the natural between 6 and 10 
is irrelevant in the definition of the mutation. Similarly for the mutations
$\mut{2,3,6,1,6}$ and $\mut{2,3,7,1,7}$.

\begin{definition}
\label{def.eqmutations}
Let $\mut{a_1, \cdots , a_n} \approx \mut{a_1' , \cdots , a_n'}$ if
there exists a bijective function $f$ from $[n+1..2\times n]$ to $[n+1..2\times n]$
such that:
\begin{enumerate}
\item
$1 \leq a_i \leq n$ implies $a_i' = a_i$;
\item
$n+1 \leq a_i \leq 2 \times n$ implies $a_i' = f(a_i)$.
\end{enumerate}
\end{definition}
We notice that $\mut{2,3,6,1,1} \approx \mut{2,3,7,1,1}$ and
$\mut{2,3,6,1,6} \approx \mut{2,3,7,1,7}$. However 
$\mut{2,3,6,1,6} \not \approx \mut{2,3,6,1,7}$; in fact these two mutations define
different transformations of names.

\begin{definition}
Given a partial order $\por{V}$, a $\por{V}$-\emph{flashback} is an injective renaming 
$\rho$ on names such that 
$\rho(x) \leq x \in \por{V}$.
\end{definition}
In the above sequence of mutations of $(x_1, x_2, x_3, x_4, x_5)$
there is a $\por{V}_4$-flashback from  $(y_2, y_3, y_4, y_1, y_1)$ to 
$(x_2, x_3, y_1, x_1, x_1)$.
In the following, flashbacks will be also applied to tuples: 
$\rho(x_1, \cdots,$ $x_n) \eqdef (\rho(x_1), \cdots, \rho(x_n))$.

In case of mutations that are permutations, a flashback
is the identity renaming  and the following statement is folklore. 
Let $\mu$ be a mutation. We write  $\mu^m$ for the application of $\mu$
$m$ times, namely $\mutpair{(x_1, \cdots , x_n)}{\por{V}} \lred{\mu^m}
\mutpair{(y_1, \cdots , y_n)}{\por{V}'}$ abbreviates
$\underbrace{\mutpair{(x_1, \cdots , x_n)}{\por{V}} \lred{\mu} \cdots \lred{\mu} \mutpair{(y_1, \cdots , y_n)}{\por{V}'}}_{m \;
{\rm times}}$. 

\begin{proposition}
Let $\mu = \mut{\elem{a}_1, \cdots , \elem{a}_n}$ and

\smallskip

$\begin{array}{rl}
\mutpair{(x_1, \cdots, x_n)}{\por{V}} \lred{\mu} 
& \mutpair{(x_1', \cdots, x_n')}{\por{V}'} 
\\
\lred{\mu^m} & \mutpair{(y_1, \cdots, y_n)}{\por{V}''}  
\\
\lred{\mu} & \mutpair{(y_1', \cdots, y_n')}{\por{V}'''}
\end{array}$

\smallskip

If there is a $\por{V}''$-flashback $\rho$ such that  $\rho(y_1, \cdots, y_n) = 
(x_1, \cdots, x_n)$ then there is a 
$\por{V}'''$-flashback from $(y_1', \cdots, y_n')$ to $(x_1', \cdots, x_n')$.
\end{proposition}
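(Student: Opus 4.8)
The statement is a "one-step lifting" lemma: a flashback that identifies the tuple after $m+1$ mutations with the tuple after one mutation can be pushed forward through one more application of $\mu$. The plan is to unfold the definition of the single mutation step in three places — the step producing $(x_1',\dots,x_n')$ from $(x_1,\dots,x_n)$, the step producing $(y_1',\dots,y_n')$ from $(y_1,\dots,y_n)$, and (implicitly) the relationship between $\por{V}'$ and $\por{V}'''$ — and then exhibit the desired $\por{V}'''$-flashback explicitly, checking the two defining conditions of a flashback (that it is an injective renaming, and that $\rho'(x)\leq x\in\por{V}'''$).

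First I would fix notation for the fresh names. Applying $\mu=\mut{a_1,\dots,a_n}$ to $\mutpair{(x_1,\dots,x_n)}{\por{V}}$ introduces fresh names which I will call $\wt{z}$, so that $x_i' = x_{a_i}$ when $a_i\leq n$ and $x_i'$ is one of the $\wt{z}$ when $a_i>n$, with $\por{V}' = \por{V}\transclosure x_1,\dots,x_n\lessc\wt{z}$. Similarly, applying $\mu$ to $\mutpair{(y_1,\dots,y_n)}{\por{V}''}$ introduces fresh names $\wt{z'}$, with $y_i'=y_{a_i}$ when $a_i\leq n$, $y_i'\in\wt{z'}$ when $a_i>n$, and $\por{V}'''=\por{V}''\transclosure y_1,\dots,y_n\lessc\wt{z'}$. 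The candidate flashback $\rho'$ is then defined as: $\rho'$ agrees with $\rho$ on $\var{\por{V}''}$, and $\rho'$ maps each fresh name $z'_{a_i}$ (for the indices $i$ with $a_i>n$) to the corresponding fresh name $z_{a_i}$ created in the first step; on all other names $\rho'$ is the identity. The point is that $\rho$ already sends $y_i$ to $x_i$, so for $a_i\leq n$ we get $\rho'(y_i') = \rho'(y_{a_i}) = x_{a_i} = x_i'$ automatically, and for $a_i>n$ we have arranged $\rho'(y_i')=\rho'(z'_{a_i})=z_{a_i}=x_i'$. Hence $\rho'(y_1',\dots,y_n') = (x_1',\dots,x_n')$ as required.

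It then remains to verify that $\rho'$ is genuinely a $\por{V}'''$-flashback, i.e. injective and order-decreasing into $\por{V}'''$. Injectivity follows because $\rho$ is injective on $\var{\por{V}''}$, the $z_{a_i}$ are pairwise distinct (being distinct fresh names from the first step, and $\mut{a_1,\dots,a_n}$ is fixed so distinct $i$ with $a_i>n$ may repeat an index — here I need the convention, used implicitly in the running example with $\mut{2,3,6,1,1}$, that a repeated $a_i>n$ denotes the \emph{same} fresh name, so $z'_{a_i}$ and $z_{a_i}$ are well-defined as functions of the value $a_i$), and the images $\{z_{a_i}\}$ are disjoint from $\rho(\var{\por{V}''})\subseteq\var{\por{V}}$ since the $z_{a_i}$ are fresh for $\por{V}$. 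For the order condition: if $x\in\var{\por{V}''}$ then $\rho'(x)=\rho(x)\leq x$ in $\por{V}''$, and since $\por{V}'''\supseteq\por{V}''$ this still holds in $\por{V}'''$; if $x=z'_{a_i}$ is one of the new names then I must check $z_{a_i}\leq z'_{a_i}$ in $\por{V}'''$. This is the step that needs the structure of $\transclosure$: $z'_{a_i}$ is maximal in $\por{V}'''$ and was added above all of $y_1,\dots,y_n$, while $\rho(y_j)=x_j$ with $x_j\leq$ (something below $z_{a_i}$) — more precisely $z_{a_i}$ was added in the first step above $x_1,\dots,x_n$, and one shows $z_{a_i}$ sits below $z'_{a_i}$ by tracing the chain $z_{a_i} \geq$ some $x_j = \rho(y_j) \leq y_j \leq z'_{a_i}$ and using antisymmetry/transitivity carefully, or more cleanly by observing that $\rho$ maps the "predecessors in $\por{V}$ of $z_{a_i}$" to predecessors in $\por{V}''$ of the $y_j$'s, which all lie below $z'_{a_i}$.

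I expect the last point — checking $\rho'(z'_{a_i}) = z_{a_i} \leq z'_{a_i}$ in $\por{V}'''$ — to be the main obstacle, because it is the only place where the interaction between the freshness/maximality discipline of $\transclosure$ and the flashback's order-decreasing requirement is actually exercised; everything else is bookkeeping. The cleanest route is probably to prove a small auxiliary fact first: if $\rho$ is a $\por{V}''$-flashback with $\rho(y_1,\dots,y_n)=(x_1,\dots,x_n)$ and $\wt{z}$ are the fresh names added to $\por{V}$ over $x_1,\dots,x_n$ to form $\por{V}'$, then extending $\rho$ by $z'_k\mapsto z_k$ yields a flashback from $\por{V}''\transclosure y_1,\dots,y_n\lessc\wt{z'}$, using that in $\por{V}$ every predecessor of a $z_k$ is a predecessor of some $x_j$, whose $\rho$-preimage $y_j$ is a predecessor of $z'_k$ in $\por{V}'''$ by construction of $\transclosure$. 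With that lemma in hand the proposition is immediate. I would also double-check the degenerate cases where $\mu$ introduces no fresh names (then $\rho'=\rho$ and the claim is trivial) and where $\por{V}''\neq\por{V}'$ is not assumed — note the proposition only hypothesizes the flashback on tuples, not equality of the ambient orders, so the argument must not secretly use $\por{V}''=\por{V}'$; fortunately the construction above does not.
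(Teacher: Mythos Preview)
Your overall strategy coincides with the paper's: build the candidate $\rho'$ coordinate\-wise and then verify well-definedness, injectivity, and the order-decreasing condition. The paper takes the minimal route and simply sets $\rho'(y_i') \eqdef x_i'$ for each $i$, rather than extending $\rho$ globally; your larger definition is harmless, but the paper's version spares you the side checks about what $\rho$ does on names outside $\{y_1,\dots,y_n\}$ (which you silently assume are in its domain).

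Where your plan has a real gap is precisely the step you flagged as the obstacle: showing $z_{a_i} \leq z'_{a_i}$ in $\por{V}'''$ when $a_i>n$. Neither of the two routes you sketch actually closes it. The ``chain'' you write, $z_{a_i} \geq x_j = \rho(y_j) \leq y_j \leq z'_{a_i}$, mixes a $\geq$ with $\leq$'s and cannot be collapsed by transitivity or antisymmetry. And the auxiliary fact you propose is essentially the proposition itself; the justification you offer for it (``every predecessor of $z_k$ in $\por{V}$ is a predecessor of some $x_j$ \ldots'') confuses predecessors with successors---in $\por{V}'$ the elements below $z_k$ are the $w$'s that lie \emph{above} some $x_j$ in $\por{V}$, not below---and in any case comparing the down-sets of $z_k$ and $z'_k$ does not yield $z_k\leq z'_k$.

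The missing ingredient, which the paper compresses into the phrase ``by definition of mutation'', is to exploit the \emph{intermediate} applications of $\mu$ between $(x_1',\dots,x_n')$ and $(y_1,\dots,y_n)$. Since $a_i>n$, position $i$ receives a fresh name at every step; writing the tuple after $k$ applications as $(x^{(k)}_1,\dots,x^{(k)}_n)$, each $x^{(k+1)}_i$ is a fresh name added above the step-$k$ tuple, hence in particular above $x^{(k)}_i$. This gives an ascending chain
\[
x_i' = x^{(1)}_i \;\leq\; x^{(2)}_i \;\leq\; \cdots \;\leq\; x^{(m+1)}_i = y_i \quad\text{in }\por{V}'',
\]
and then $y_i \leq y_i'$ in $\por{V}'''$ by the last application of $\transclosure$. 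Transitivity now yields $x_i' \leq y_i'$. Note that this argument does not use the flashback hypothesis at all in the case $a_i>n$; the flashback is only needed for the indices with $1\leq a_i\leq n$, where your argument (and the paper's) is already complete.
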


\begin{proof}
Let $\rho'$ be the relation 
$y_i' \mapsto x_i'$, for every $i$. Then

1)~$\rho'$ is a mapping: $y_i' = y_j'$ implies $x_i' = x_j'$. In 
fact, $y_i' = y_j'$ means that either (i) $1 \leq \elem{a}_i, \elem{a}_j \leq n$ or (ii) $\elem{a}_i, \elem{a}_j > n$.
In subcase (i) $y_{\elem{a}_i} = y_{\elem{a}_j}$, by definition of
mutation. Therefore $\rho(y_{\elem{a}_i}) = \rho( y_{\elem{a}_j})$ that in turn
implies  $x_{\elem{a}_i} = x_{\elem{a}_j}$. From this last equality we obtain
$x_i' = x_j'$. In subcase (ii), $\elem{a}_i = \elem{a}_j$ and the implication follows
by the fact that $\mut{\elem{a}_1, \cdots , \elem{a}_n}$ is a mutation.

2)~$\rho'$ is injective: $x_i' = x_j'$ implies $y_i' = y_j'$. If
$x_i' \in \{ x_1, \cdots , x_n \}$ then $1 \leq \elem{a}_i, \elem{a}_j 
\leq  n$.
Therefore, by the definition of mutation, $x_{\elem{a}_i} = x_{\elem{a}_j}$ and,
because $\rho$ is a flashback, $y_{\elem{a}_i} = y_{\elem{a}_j}$. By this last 
equation $y_i' = y_j'$.
If $x_i' \notin \{ x_1, \cdots , x_n \}$ then $\elem{a}_i >  n$ and $\elem{a}_i = \elem{a}_j$. Therefore $y_i' = y_j'$ by definition
of mutation.

3)~$\rho'$ is a flashback: $x_i' \neq y_i'$ implies $ x_i' \leq y_i' \in \por{V}'''$.
If $1 \leq \elem{a}_i \leq n$ then $y_i' = y_{\elem{a}_i}$ and $x_i' = x_{\elem{a}_i}$.
Therefore $y_{\elem{a}_i} \neq x_{\elem{a}_i}$ and we conclude by the hypothesis about $\rho$ that $\rho'(y_{\elem{a}_i})$ satisfies the constraint in the
definition of flashback.
If $\elem{a}_i > n$ then $x_1, \cdots , x_n \leq x_i' \in \por{V}'$.
Since $\rho(y_i) = x_i$, by the hypothesis about $\rho$,  $x_i \leq y_i  \in 
\por{V}''$. Therefore, by definition of mutation, $x_i' \leq y_i \in  
\por{V}''$. We derive $x_i' \leq y_i' \in  
\por{V}'''$ by transitivity because $\por{V}'' \subseteq \por{V}'''$ and $y_i 
\leq y_i' \in \por{V}'''$.
\end{proof}

The following 
 Theorem~\ref{thm.mainthm}
  generalizes the property that every permutation has an
\emph{order}, which is
the number of applications that return the initial tuple. 
In the theory of permutations, the order 
is the least common multiple, in short $\lcm$, of the lengths of the cycles 
of the permutation. %
This result is clearly false for mutations because of the
presence of duplications and of fresh names. The generalization that holds in our
setting uses flashbacks instead of identities.
We begin by extending the notion of cycle.

\begin{definition}[Cycles and sinks] 
Let $\mu=\mut{\elem{a}_1,\cdots, \elem{a}_n}$ be a mutation
and let $1 \leq \elem{a}_{i_1}, \, \cdots, \, \elem{a}_{i_\ell} \leq n$ be pairwise different naturals. Then:
\begin{enumerate}
\item[i.] the term $(\elem{a}_{i_1} \, \cdots \, \elem{a}_{i_\ell})$ is a \emph{cycle}
of $\mu$ whenever $\mu(\elem{a}_{i_j})=\elem{a}_{i_{j+1}}$, with $1\leq j \leq \ell-1$, and  $\mu(\elem{a}_{i_\ell})=\elem{a}_{i_1}$ (i.e., $(\elem{a}_{i_1} \, \cdots \, \elem{a}_{i_\ell})$ is the ordinary permutation cycle);

\item[ii.] the term $[\elem{a}_{i_1} \, \cdots \, \elem{a}_{i_{\ell-1}}]_{\elem{a}_{i_{\ell}}}$ is a \emph{bound sink} of $\mu$ whenever $a_{i_1} \notin 
\{ \elem{a}_1,\cdots, \elem{a}_n \}$,
$\mu(\elem{a}_{i_j})=\elem{a}_{i_{j+1}}$, with $1\leq j \leq \ell-1$, and 
$\elem{a}_{i_{\ell}}$ belongs to a cycle;

\item[iii.] the term $[\elem{a}_{i_1} \, \cdots \, \elem{a}_{i_{\ell}}]_{\elem{a}}$, with $n< a \leq 2 \times n$,  is a
\emph{free sink} of $\mu$ 
whenever $a_{i_1} \notin 
\{ \elem{a}_1,\cdots, \elem{a}_n \}$ and
$\mu(\elem{a}_{i_j})=\elem{a}_{i_{j+1}}$, with $1\leq j \leq \ell-1$ and
$\mu(\elem{a}_{i_\ell})= a$.
\end{enumerate}
The \emph{length of a cycle} is the number of elements in the cycle; the 
\emph{length of
a sink} is the number of the elements in the square brackets.
\end{definition}
For example the mutation $\mut{5, 4, 8, 8, 3, 5, 8, 3, 3}$ has cycle $(3,8)$ and
has bound sinks $[1,5]_3$, $[6,5]_3$, $[9]_3$, $[2,4]_8$, and $[7]_8$.
 The mutation $\mut{6,3,1,8,7,1,8}$ has cycle $(1,6)$, has bound sink
 $[2,3]_1$ and free sinks $[4]_8$ and $[5,7]_{8}$. 

Cycles and sinks are an alternative description of a mutation. 
For instance
$(3,8)$ means that the mutation moves the element in position $8$  to the element 
in position $3$
and the one in position $3$ to the position $8$; the free sink $[5,7]_{8}$
means that the element in position $7$ goes to the position $5$, whilst a 
fresh name goes in position $7$.

\begin{theorem}
\label{thm.mainthm}
Let $\mu$ be a mutation, $\ell$ be the {\lcm} of the length  of its cycles, 
$\ell'$ and $\ell''$ be the lengths of its longest bound sink and free
sink, respectively. Let also 
$k \eqdef {\tt max}\{ \ell+\ell', \; \ell'' \}$. 
Then there exists $0 \leq h < k$ such that 
$\mutpair{(x_1, \cdots, x_n)}{\por{V}} \lred{\mu^{h}} 
\mutpair{(y_1, \cdots, y_n)}{\por{V}'}$ $\lred{\mu^{k-h}} 
\mutpair{(z_1, \cdots, z_n)}{\por{V}''}
$ and $\rho(z_1, \cdots, z_n) = (y_1, \cdots, y_n)$, for some
$\por{V}''$-flashback $\rho$.
The value $k$ is called \emph{order of $\mu$} and denoted by
$\ordermut{\mu}$.
\end{theorem}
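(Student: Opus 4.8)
The plan is to decompose the mutation $\mu$ into its cycles, bound sinks, and free sinks, analyze how each of these three components evolves independently under iterated application of $\mu$, and then combine the component-wise information to locate the flashback. The key observation is that the position $i$ of a name in the tuple is governed entirely by the combinatorial structure $(a_1,\dots,a_n)$: the value in position $i$ after one step depends on the value in position $a_i$ before, and this "backward chasing" is exactly what cycles and sinks record. So I would track, for each position $i$, the sequence of positions $i, a_i, a_{a_i}, \dots$ obtained by iterating $\mu$ backwards, and classify where this sequence lands.

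\medskip

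\textbf{Step 1: component analysis.} Fix a starting pair $\mutpair{(x_1,\dots,x_n)}{\por{V}}$ and write $\mutpair{(x^{(m)}_1,\dots,x^{(m)}_n)}{\por{V}_m}$ for the result of applying $\mu^m$. For a position $i$ lying on a cycle of length $c$, I would show by the folklore permutation argument that $x^{(m+c)}_i = x^{(m)}_i$ once $m \geq 0$ — the values on cycles simply rotate with period $c$, so after $\ell = \lcm$ of all cycle lengths, all cycle-positions have returned to their values at step $0$ (here the flashback is the identity on these names). For a position $i$ governed by a bound sink $[a_{i_1}\cdots a_{i_{\ell-1}}]_{a_{i_\ell}}$ of length $\ell'_i \leq \ell'$: after $\ell'_i$ steps the backward chase from $i$ has entered the cycle attached to the sink, so from step $\ell'_i$ onward the value in position $i$ is a value that sits on a cycle and is therefore periodic with period dividing $\ell$; after $\ell + \ell'$ steps the value in position $i$ equals the value it had at step $\ell'$ (again identity flashback here, because these values live on cycles which just permuted). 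For a position $i$ governed by a free sink $[a_{i_1}\cdots a_{i_{\ell_i}}]_a$ with $\ell_i \leq \ell''$: the backward chase from $i$ reaches a "fresh slot" after $\ell_i$ steps, meaning $x^{(m)}_i$ is a fresh name created at step $m - \ell_i + 1$ whenever $m \geq \ell_i$; these are the names that the flashback must genuinely move — it will send the fresh name created at step $k$ back to the fresh name created at step $h$ in the corresponding position.

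\medskip

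\textbf{Step 2: choice of $h$ and construction of $\rho$.} I would set $k = \max\{\ell+\ell', \ell''\}$ as in the statement and take $h$ to be whatever value makes the two "checkpoints" align — concretely, $h = k - \ell$ handles the cycle and bound-sink positions (whose period from their stabilization point is $\ell$), while the residual $k - h = \ell$ must also be large enough relative to $\ell''$ so that at step $h$ all free-sink positions are already "saturated" with fresh names (this needs $h \geq \ell''$, which holds because $k = \max\{\ell+\ell',\ell''\}$ forces $h = k-\ell \geq \ell'' - \ell$... and one must check the bookkeeping here carefully — possibly $h$ is not simply $k-\ell$ but chosen so that both $h \geq \ell''$ and the residual is a multiple of $\ell$; I expect the intended $h$ satisfies $h \equiv 0$ or something compatible, and verifying existence of such $h < k$ is a small arithmetic lemma). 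Define $\rho$ on the names appearing in $(z_1,\dots,z_n) = (x^{(k)}_1,\dots,x^{(k)}_n)$: for a name $z_i$ that sits on a cycle or comes from a bound sink, Step 1 gives $z_i = y_i$ already, so set $\rho(z_i) = z_i$; for a name $z_i = x^{(k)}_i$ that is a fresh name from a free sink, Step 1 gives $y_i = x^{(h)}_i$ a fresh name created $k - h$ steps earlier, and I set $\rho(z_i) = y_i$. One checks $\rho$ is well-defined (equal $z_i$'s force equal $y_i$'s — this is exactly the argument in the Proposition's proof, iterated), injective (same), and that $\rho(z_i) \leq z_i$ in $\por{V}''$ because the fresh name $y_i$ was created at an earlier step than $z_i$ and the $\transclosure$-construction makes later-created names larger.

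\medskip

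\textbf{The main obstacle} will be Step 1's interaction between bound sinks and cycles: a bound sink's tail feeds into a cycle, so the value in a bound-sink position at step $m$ (for $m$ past the sink length) is a cycle-value, and I must confirm that after exactly $\ell + \ell'$ total steps every bound-sink position has returned to its step-$\ell'$ value — this requires that the cycle the sink attaches to has length dividing $\ell$ (true by definition of $\ell$) and that the "phase" picked up while traversing the sink tail is consistent. The free-sink bookkeeping (getting the single value $h$ to simultaneously satisfy the cycle/bound-sink periodicity requirement and the free-sink saturation requirement, all while keeping $h < k$) is the other delicate point; I anticipate it reduces to showing $k - \ell \geq \ell''$ or, if that fails, choosing $h$ in the range $[\ell'', k)$ congruent to the right residue mod $\ell$, which is possible precisely because $k \geq \ell + \ell'' \geq \ell + 1$ whenever a free sink is present. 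Everything else — well-definedness, injectivity, and the flashback inequality for $\rho$ — is a routine repetition of the three-part argument already carried out in the Proposition, now applied across $k - h$ steps at once rather than one step.
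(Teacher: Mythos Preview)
Your approach is essentially the paper's: decompose positions into cycles, bound sinks, and free sinks, analyze each component under iteration, and choose $h = k - \ell$ so that the residual $k-h=\ell$ realigns all cycle and bound-sink positions while the partial-order growth handles the free-sink positions. Your hedging about $h$ is unnecessary --- the paper simply takes $h=\ell'$ when $\ell+\ell'\geq\ell''$ and $h=\ell''-\ell$ otherwise (both equal $k-\ell$), and it does \emph{not} require free-sink positions to already hold fresh names at step $h$, since a flashback may send a fresh $z_i$ to any $y_i$ with $y_i\leq z_i$, fresh or not.
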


\begin{proof}
Let $\mu=\mut{\elem{a}_1,\cdots,\elem{a}_n}$ be a mutation, and let
$A=\{1,2,\dots,n\}\setminus \{\elem{a}_1,\cdots,\elem{a}_n\}.$ 

If $A=\varnothing$, then $\mu$ is a permutation; hence, by the theory of
permutations, the theorem is immediately proved taking $\rho$ as the identity 
and $h=0$.

If $A\neq\varnothing$ then let $a\in A$. By definition, $a$ must be the
first element of (i) a  bound sink or (ii) a free sink of $\mu$.
We write either
$a\in A_{(i)}$ or $a\in A_{(ii)}$ if $a$ is the first element of
a bound or free sink, respectively.

In subcase (i), let $\ell_a'$ be the length  of the bound sink with subscript $a'$
and $\ell_{a'}$ be the 
length of the cycle of $a'$. 
We observe that in
$\mutpair{(x_1, \cdots, x_n)}{\por{V}} \lred{\mu^{\ell_a'}} 
\mutpair{(x_1', \cdots, x_n')}{\por{U}}$ $\lred{\mu^{\ell_{a'}}} 
\mutpair{(x_1'', \cdots, x_n'')}{\por{W}}
$ we have $x_{a'}' = x_{a'}''$.

In subcase (ii), let $\ell_a''$ be the length of the free sink.  
We observe that in
$\mutpair{(x_1, \cdots, x_n)}{\por{V}} \lred{\mu^{\ell_a''}} 
\mutpair{(x_1', \cdots, x_n')}{\por{U}}$ we have $x_{a} \leq x_{a}' \in 
\por{U}$, by definition of mutation.

Let $\ell$, $\ell'$ and $\ell''$ as defined in the theorem. Then, if
$\ell+\ell' \geq \ell''$ we have that
$\mutpair{(x_1, \cdots, x_n)}{\por{V}} \lred{\mu^{\ell'}} 
\mutpair{(y_1, \cdots, y_n)}{\por{V}'}$ $\lred{\mu^{\ell}} 
\mutpair{(z_1, \cdots, z_n)}{\por{V}''}$ and 
$\rho(z_1, \cdots, z_n) = (y_1, \cdots, y_n)$, where $\rho = 
[z_1 \mapsto y_1, \cdots , z_n \mapsto y_n]$ is a 
$\por{V}''$-flashback.
If $\ell + \ell' < \ell''$ then
$\mutpair{(x_1, \cdots, x_n)}{\por{V}} \lred{\mu^{\ell''-\ell}} 
\mutpair{(y_1, \cdots, y_n)}{\por{V}'}$ $\lred{\mu^{\ell}} 
\mutpair{(z_1, \cdots, z_n)}{\por{V}''}$ and 
$\rho(z_1, \cdots, z_n) = (y_1, \cdots, y_n)$, where $\rho = 
[z_1 \mapsto y_1, \cdots , z_n \mapsto y_n]$ is a 
$\por{V}''$-flashback.
\end{proof}

For example, $\mu=\mut{6,3,1,8,7,1,8}$, has a cycle $(1,6)$, bound sink
$[2,3]_1$ and free sinks $[4]_8$ and $[5, 7]_8$. Therefore
$\ell=2$, $\ell'=2$ and $\ell'' =2$. In this case, the values $k$ and $h$ of
Theorem~\ref{thm.mainthm} are 4 and 2, respectively. In fact,
if we apply the mutation $\mu$ four times to the pair $\mutpair{(x_1,x_2,x_3,x_4,x_5,x_6,x_7)}{\por{V}}$, where $\por{V}=\{(x_i,x_i)\,|\,1\leq i \leq 7\}$ we  obtain

\smallskip

$\begin{array}{ll}
\mutpair{(x_1,x_2,x_3,x_4,x_5,x_6,x_7)}{\por{V}} 
& \lred{\mu} \; \mutpair{(x_6,x_3,x_1,y_1,x_7,x_1,y_1)}{\por{V}_1}\\
& \lred{\mu} \; \mutpair{(x_1,x_1,x_6,y_2,y_1,x_6,y_2)}{\por{V}_2}\\
& \lred{\mu} \; \mutpair{(x_6,x_6,x_1,y_3,y_2,x_1,y_3)}{\por{V}_3}\\
& \lred{\mu} \;  \mutpair{(x_1,x_1,x_6,y_4,y_3,x_6,y_4)}{\por{V}_4}
\end{array}$

\smallskip
\noindent
where $\por{V}_1 = \por{V} \transclosure x_1,x_2,x_3,x_4,x_5,x_6,x_7\lessc y_1$
and, for $i \geq 1$,  $\por{V}_{i+1} = \por{V}_i \transclosure y_{i-1} \lessc y_i$.
We notice that there is a $\por{V}_4$-flashback $\rho$ from
$(x_1,x_1,x_6,y_4,y_3,x_6,y_4)$ (produced by $\mu^4$) to 
$(x_1,x_1,x_6,y_2,y_1,$ $x_6,y_2)$ (produced by $\mu^2$).

\section{The language of lams}
\label{sec.fulllanguage}

We use an infinite set of
\emph{function names}, ranged over $\M$, $\M'$, $\N$, $\N'$,$\ldots$, which is
disjoint from the set ${\sf V}$ of Section~\ref{sec.mutationsandflashbacks}. 
A \emph{lam program} is a tuple $\bigl(\M_1(\wt{x_1}) = \P_{1}, \cdots , 
\M_\ell(\wt{x_\ell}) = \P_{\ell}, \P \bigr)$ where $\M_i(\wt{x_i}) = \P_{i}$ 
are \emph{function definitions} and $\P$ is the
\emph{main lam}. The syntax of $\P_{i}$ and $\P$ is

\smallskip

$\begin{array}{rl}
\P \quad ::= \qquad & \pinull \quad | \quad (x,y) 
 \quad | \quad \M(\wt{x}) \quad | \quad \P 
\sparop \P  \quad | \quad \P \seq \P
\end{array}$

\smallskip

Whenever parentheses are omitted, the operation ``$\sparop$'' has precedence
over ``$\seq$''. We will shorten $\P_1 \sparop \cdots \sparop \P_n$ into 
$\prod_{i \in 1..n}\P_i$.
Moreover, we use $\T$ to range over lams that do not contain function invocations.

Let $\var{\P}$ 
be the set of names in $\P$.
In a function definition 
$\M(\wt{x}) = \P$, $\wt{x}$ are the \emph{formal parameters} and
 the occurrences of names $x \in \wt{x}$ 
in $\P$ are \emph{bound}; the names $\var{\P} \setminus 
\wt{x}$ are \emph{free}.

In the syntax of $\P$, the operations ``$\sparop$'' and ``$\seq$'' are associative, commutative with $\pinull$ being the identity. 
Additionally the following axioms hold ($\T$ does not contain function invocations)

\smallskip

$\begin{array}{c}
\T \sparop \T = \T 
\qquad 
\T \seq \T =  \T
\qquad
\T \sparop (\P' \seq \P'')  = \T\sparop \P' \seq \T \sparop \P''
\end{array}$

\smallskip
\noindent
and, in the rest of the paper, we will never distinguish equal lams. For instance,
$\M(\wt{u}) \seq (x,y)$ and $(x,y) \seq \M(\wt{u})$ will be always
identified. These axioms permit to rewrite a lam without function invocations as
a \emph{collection} (operation $\seq$) \emph{of relations} (elements of a relation are gathered by
the operation $\sparop$).

\begin{proposition}
\label{prop.normalform}
For every $\T$, there exist $\T_1, \cdots ,
\T_n$ that are dependencies composed with $\sparop$, such that $\T = \T_1 \seq \cdots 
\seq \T_n$. 
\end{proposition}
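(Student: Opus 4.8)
The plan is to prove Proposition~\ref{prop.normalform} by structural induction on $\T$, using the three equational axioms that hold for terms without function invocations, in particular the distributivity law $\T \sparop (\P' \seq \P'') = \T\sparop \P' \seq \T \sparop \P''$. The statement asserts that any invocation-free lam $\T$ can be put into a normal form $\T_1 \seq \cdots \seq \T_n$ where each $\T_j$ is built only from dependencies $(x,y)$ and $\pinull$ combined with $\sparop$ — i.e., the ``$\seq$'' operator is pulled to the outermost level.

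First I would set up the induction on the structure of $\T$. The base cases are $\T = \pinull$ and $\T = (x,y)$: both are already in the desired form with $n = 1$ (a single $\sparop$-collection, trivially). For the inductive step there are two cases corresponding to the two binary operators. If $\T = \T' \seq \T''$, then by the induction hypothesis $\T' = \T'_1 \seq \cdots \seq \T'_p$ and $\T'' = \T''_1 \seq \cdots \seq \T''_q$ with each factor a $\sparop$-only term; then by associativity of ``$\seq$'' we simply concatenate the two lists to get $\T = \T'_1 \seq \cdots \seq \T'_p \seq \T''_1 \seq \cdots \seq \T''_q$, which is of the required form with $n = p+q$. The interesting case is $\T = \T' \sparop \T''$: here I apply the induction hypothesis to both $\T'$ and $\T''$, and then use the distributivity axiom repeatedly to push the $\sparop$ inside the $\seq$-collections. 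Concretely, writing $\T' = \bigseq_i \T'_i$ and $\T'' = \bigseq_j \T''_j$, a repeated application of $\T \sparop (\P' \seq \P'') = \T\sparop \P' \seq \T\sparop \P''$ (together with commutativity of $\sparop$, to also distribute on the left) rewrites $\T' \sparop \T''$ into $\bigseq_{i,j} (\T'_i \sparop \T''_j)$; each $\T'_i \sparop \T''_j$ is a $\sparop$-combination of dependencies, hence invocation-free in the right form, and the whole thing is a $\seq$-collection of such terms.

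The main obstacle I anticipate is purely bookkeeping: making the repeated application of distributivity precise when both operands are non-trivial $\seq$-collections. One clean way is to first prove an auxiliary sublemma by induction on the list length — namely that $\T_0 \sparop (\T_1 \seq \cdots \seq \T_m) = (\T_0 \sparop \T_1) \seq \cdots \seq (\T_0 \sparop \T_m)$ whenever $\T_0$ is a single $\sparop$-term — which is a direct induction using one instance of the axiom per step, and then apply this sublemma $p$ times (once for each $\T'_i$, using commutativity to move it into leftmost position) to reduce the general $\T' \sparop \T''$ case. Since the axioms explicitly state ``$\sparop$'' and ``$\seq$'' are associative and commutative with $\pinull$ as identity, reassociating and reordering the resulting $\seq$-list into the flat form $\T_1 \seq \cdots \seq \T_n$ is justified, and the $T \sparop T = T$, $T \seq T = T$ idempotence axioms are not even needed for existence (they would only matter for uniqueness, which is not claimed). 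The only care needed is to observe that each intermediate term in the rewriting remains invocation-free, so that every use of the distributivity axiom is legitimate — but this is immediate since $\T$ and all its subterms contain no function invocations by hypothesis.
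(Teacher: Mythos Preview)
Your proof is correct. The paper states this proposition without proof, treating it as an immediate consequence of the associativity, commutativity, and distributivity axioms for invocation-free lams; your structural induction is exactly the standard way to make that consequence explicit.
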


\begin{remark}
Lams are intended to be abstract models of programs that highlight the 
resource dependencies in the reachable states.  
The lam $\T_1 \seq \cdots \seq 
\T_n$ of Proposition~\ref{prop.normalform} models a program whose possibly
infinite set of states $\{ \S_1, \S_2, \cdots \}$ is such that the 
resource dependencies in $\S_i$ are a subset of those in some $\T_{j_i}$, with
$1 \leq j_i \leq n$. With this meaning, generic lams 
$\P_1 \seq \cdots \seq \P_m$ are abstractions of transition
systems (a standard model of programming languages), where transitions are
ignored and states record the resource dependencies and the function invocations.
\end{remark}

\begin{remark}
The above axioms, such as
$\T \sparop (\P' \seq \P'')  = \T\sparop \P' \seq \T \sparop \P''$
are restricted to terms $\T$ that do not contain function invocations. In fact,
$\M(\wt{u}) \sparop ((x,y)$ $\seq (y,z)) \neq (\M(\wt{u}) \sparop (x,y)) \seq 
(\M(\wt{u}) \sparop (y,z))$ because the two terms have a different number
of occurrences of invocations of $\M$, and this is crucial for linear recursion -- see
Definition~\ref{def.linearity}.
\end{remark}

In the paper, we always assume lam programs 
$\bigl(\M_1(\wt{x_1}) = \P_{1}, \cdots , 
\M_\ell(\wt{x_\ell}) = \P_{\ell}, \P \bigr)$
to be \emph{well-defined}, namely
\emph{(1)} all function names occurring in $\P_{i}$ and  $\P$ are defined;
\emph{(2)} the arity of function invocations matches that of the corresponding function definition.

\paragraph{Operational semantics.} 
Let a \emph{lam context}, noted $\CP[~]$, be a term derived by the 
following syntax:

\smallskip

$\CP[~] \quad ::= \quad [~] \qquad | \qquad \P \sparop \CP[~] 
\qquad | \qquad \P \seq \CP[~]$

\smallskip

As usual $\CP[\P]$ is the lam where the hole of $\CP[~]$ is replaced by $\P$.
The operational semantics of a program 
$\bigl(\M_1(\wt{x_1}) = \P_{1}, \cdots ,$ 
$\M_\ell(\wt{x_\ell})$ $= \P_{\ell}, \P_{\ell+1} \bigr)$ is a transition system
whose \emph{states}  are pairs $\ilam{\por{V}, \; \P}{}$
and the \emph{transition relation}
is the least one satisfying the rule:

\smallskip

$\begin{array}{c}
\mathrule{Red}{
	\begin{array}{c}
    \M(\wt{x}) = \P
	\qquad 
	\var{\P} \setminus\wt{x}=\wt{z} 
	\qquad \wt{w} \mbox{ are fresh} 
	\\
	\P \subst{\wt{w}}{\wt{z}} \subst{\wt{u}}{\wt{x}} = \P'
	\end{array}
	}{
	\ilam{\por{V}, \; \CP[\M(\wt{u})]}{} 
        \lred{} 
	\ilam{\por{V} \transclosure \wt{u}\lessc \wt{w}, \; \CP[\P']}{}
	}
\end{array}$

\smallskip

By \rulename{red}, a lam $\P$ is evaluated by successively replacing
function invocations with the corresponding lam instances. 
Name creation is handled with a 
mechanism similar to that of mutations. For example, if
$\M(x) = (x,y)\sparop\M(y)$ and $\M(u)$ occurs in the main lam, then
$\M(u)$ is replaced by $(u,v)\sparop\M(v)$, where $v$ is a \emph{fresh maximal
name} in some partial order.
The initial state of a program with main lam $\P$ 
is $\ilam{\por{I}_{\P}, \;
\P}{}$, where $\por{I}_{\P} \eqdef \{(x,x) \; |\; x\in \var{\P}\}$.

To illustrate the semantics of the language of lams we discuss 
three
examples:
\begin{enumerate}
\item
$\bigl( \, %
\M(x,y,z) = (x,y) \sparop \N(y,z) \seq (y,z) , \;
\N(u,v) = (u,v) \seq (v,u)  ,$ 
$\M(x,y,z) \, \bigr)
$
and $\por{I} =
\{ (x,x), (y,y), (z,z)\}$. Then

\smallskip

$\begin{array}{rl}
\ilam{\por{I}, \; \M(x,y,z)}{} \; \lred{} &
\ilam{\por{I}, \; (x,y) \sparop \N(y,z) \seq (y,z)}{}
\\
\lred{} &
\ilam{\por{I}, \; (x,y) \sparop (y,z) \seq \; (x,y) \sparop (z,y) \seq (y,z)}{}
\end{array}$

\smallskip

The lam in the final state \emph{does not contain function invocations}. This 
is because the above program is not recursive.
Additionally, the evaluation of $\M(x,y,z)$ \emph{has not created names}. This 
is because names in the bodies of 
$\M(x,y,z)$ and $\N(u,v)$ are bound.

\item
$\bigl( \M'(x) = (x,y) \sparop \M'(y) \; , \; 
\M'(x) \bigr)$ and $\por{V}_0 = \{ (x_0, x_0) \}$. Then

\smallskip

$\begin{array}{l}
\ilam{\por{V}_0 , \; \M'(x_0)}{} 
\\
\lred{} \, 
\ilam{\por{V}_1 , \; (x_0,x_1) \sparop \M'(x_1)}{}
\\
\lred{} \,
\ilam{\por{V}_2 , \; (x_0,x_1) \sparop (x_1,x_2) \sparop \M'(x_2)}{}  
\\
\lred{}^n 
\ilam{\por{V}_{n+2} , \; (x_0,x_1) \sparop 
\cdots  \sparop (x_{n+1},x_{n+2}) \sparop
\M'(x_{n+2}) }{} 
\end{array}$

\smallskip

\noindent where $\por{V}_{i+1} = \por{V}_i \transclosure x_i \lessc x_{i+1}$.
In this case, the states grow in the number of dependencies as the
evaluation progresses. This growth \emph{is due to the presence of a free name}
in the definition of $\M'$ that, as said, corresponds to generating a fresh
name at every recursive invocation.

\item
$\bigl( \M''(x) = (x,x') \seq (x,x') \sparop \M''(x') , \;
\M''(x_0) \bigr)$ and $\por{V}_0 = \{ (x_0,x_0)\}$. Then

\smallskip

$\begin{array}{l}
\ilam{\por{V}_0 , \; \M''(x_0)}{} 
\\
\quad \lred{} \;
\bigl\langle \por{V}_1 , (x_0,x_1) \seq  (x_0,x_1) \sparop \M''(x_1) \bigr\rangle
\\
\quad \lred{} \;
\bigl\langle \por{V}_2 , (x_0,x_1) \seq  (x_0,x_1) \sparop 
(x_1,x_2) \seq  
(x_0,x_1) \sparop 
(x_1,x_2) \sparop \M''(x_2) \bigr\rangle
\\
\quad \lred{}^n \;
\bigl\langle \por{V}_{n+2}, (x_0,x_1) \seq \cdots \seq
(x_0,x_1) \sparop \cdots \sparop (x_{n+1},x_{n+2}) \sparop
{\M''}(x_{n+2}) \bigr\rangle
\end{array}$

\smallskip

\noindent where $\por{V}_{i+1}$ are as before. 
In this case, the states grow in the number of ``$\seq$''-terms, 
which become
larger and larger as the evaluation progresses.
\end{enumerate}

The semantics of the language of lams is nondeterministic because of the choice of the
invocation to evaluate. However, lams enjoy a diamond
property \emph{up-to bijective renaming of (fresh) names}.

\begin{proposition}
\label{prop.diamond}
Let $\imath$ be a bijective renaming and $\imath(\por{V}) = \{ (\imath(x), \imath(y)) \;
| \; (x,y) \in \por{V} \}$. Let also 
$\ilam{\por{V}, \; \P}{} \lred{} \ilam{\por{V}', \; \P'}{}$ and 
$\ilam{\imath(\por{V}), \; \P \subst{\imath(\wt{x})}{\wt{x}}}{} \lred{} \ilam{\por{V}'', 
\; \P''}{}$, where $\wt{x} = \var{\por{V}}$. Then 
\begin{itemize}
\item[$(i)$] either there exists a bijective renaming $\imath'$ such that 
$\ilam{\por{V}'', \; \P''}{} = \ilam{\imath(\por{V}'), \; \P' \subst{\imath(\wt{x'})}{\wt{x'}}}{}$, where $\wt{x}' = \var{\por{V}'}$,

\item[$(ii)$] or
there exist $\P'''$ 
and a bijective renaming $\imath'$ such that
$\ilam{\por{V}', \; \P'}{} \lred{} \ilam{\por{V}''', \; \P'''}{}$ and 
$\ilam{\por{V}'', \; \P''}{} \lred{} \ilam{\imath'(\por{V}'''), \; \P'''
\subst{\imath'(\wt{z})}{\wt{z}}}{}$, where $\wt{z} = \var{\por{V}'''}$.
\end{itemize}
\end{proposition}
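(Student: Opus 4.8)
The plan is to prove the diamond property (Proposition~\ref{prop.diamond}) by a straightforward case analysis on whether the two one-step reductions from $\ilam{\por{V}, \; \P}{}$ rewrite ``the same'' invocation occurrence or two distinct ones. Since the only reduction rule is \rulename{red}, and it fires at an invocation $\M(\wt{u})$ sitting inside a context $\CP[~]$, the first step is to fix notation: the first reduction picks a context $\CP_1$ and invocation $\M_1(\wt{u_1})$ with $\P = \CP_1[\M_1(\wt{u_1})]$, producing $\por{V}' = \por{V} \transclosure \wt{u_1} \lessc \wt{w_1}$ and $\P' = \CP_1[\P_1 \subst{\wt{w_1}}{\wt{z_1}}\subst{\wt{u_1}}{\wt{x_1}}]$; similarly the second reduction on the renamed state $\ilam{\imath(\por{V}), \; \P\subst{\imath(\wt x)}{\wt x}}{}$ picks a context and invocation. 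Because $\imath$ is a bijective renaming, the invocation occurrences of $\P\subst{\imath(\wt x)}{\wt x}$ are in bijective correspondence with those of $\P$, so I can speak of the second reduction as choosing an occurrence of $\P$ as well.

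Case $(i)$: the two reductions act on the same invocation occurrence. Here the two resulting lams differ only in (a) the concrete choice of fresh names $\wt w$ used in each derivation and (b) the global renaming $\imath$ already applied. I would define $\imath'$ to extend $\imath$ (on the names $\wt x' = \var{\por{V}'}$) by mapping the fresh names chosen in the first derivation to those chosen in the second; since both sets of fresh names are ``new'' with respect to everything else, this is a well-defined bijection. Then a direct check against the definition of \rulename{red} — using that substitution commutes with injective renaming, and that $\imath(\por{V} \transclosure \wt u \lessc \wt w) = \imath(\por{V}) \transclosure \imath(\wt u) \lessc \imath(\wt w)$ because $\transclosure$ is defined purely order-theoretically — gives $\ilam{\por{V}'', \; \P''}{} = \ilam{\imath'(\por{V}'), \; \P'\subst{\imath'(\wt{x'})}{\wt{x'}}}{}$, which is conclusion $(i)$.

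Case $(ii)$: the two reductions act on distinct invocation occurrences, say occurrence $o_1$ (fired first) and occurrence $o_2$ (fired on the renamed state). The key observation is that firing \rulename{red} at $o_1$ leaves $o_2$ intact as an invocation occurrence in $\P'$ — it merely sits inside a different context (the body $\P_1[\cdots]$ plugged in at $o_1$ is disjoint from $o_2$), and symmetrically. So I fire the ``other'' reduction on each side: reduce $o_2$ in $\ilam{\por{V}', \; \P'}{}$ and reduce $o_1$ in $\ilam{\por{V}'', \; \P''}{}$. Both now contain the two bodies plugged in, differing only by the global renaming and by the independent choices of fresh names; I take $\imath'$ to reconcile all of these as in Case~$(i)$. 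One technical point to be careful about is that the fresh names introduced by the first firing of $o_1$ must be chosen not to clash with those introduced later by $o_2$ (and vice versa on the other side), but this is guaranteed by the side condition ``$\wt w$ are fresh'' in \rulename{red}, which quantifies over everything already present, including names introduced by the previous step; the partial orders $\por{V}''', \imath'(\por{V}''')$ then coincide because $\transclosure$ applied in either order yields the same least partial order (the two name-creation events are on disjoint fresh names with disjoint sets of predecessors among the pre-existing names, or the same ones, but in all cases order of insertion is irrelevant to the resulting least partial order).

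The main obstacle, and the only place where some genuine care is needed rather than bookkeeping, is the interaction between the fresh-name side condition and the partial-order update: I must argue that performing the two name-creation steps in either order, against a background that itself differs by a bijection $\imath$, produces partial orders that agree up to $\imath'$. This reduces to two sub-facts — that $\transclosure$ is equivariant under bijective renaming, and that two successive $\transclosure$-updates on disjoint fresh names commute — both of which follow directly from the ``least partial order containing $\ldots$'' characterization in the definition of $\transclosure$. Everything else (substitution commuting with renaming, occurrence-disjointness under context plugging, extending $\imath$ to the fresh names) is routine structural reasoning on the grammar of lam contexts.
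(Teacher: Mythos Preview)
The paper states this proposition without proof, so there is nothing to compare against directly; your case split on ``same occurrence vs.\ distinct occurrences'' is the natural strategy, and Case~$(i)$ is fine.

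The gap is in Case~$(ii)$, specifically the claim that ``two successive $\transclosure$-updates on disjoint fresh names commute.'' They do not. By the paper's definition, $\por{V} \transclosure \wt{x} \lessc \wt{z}$ places each $z \in \wt{z}$ above the \emph{entire} upward closure of $\wt{x}$ in $\por{V}$; hence if $\wt{u_1}$ and $\wt{u_2}$ share a name (or even a common upper bound), then in $(\por{V} \transclosure \wt{u_1} \lessc \wt{w_1}) \transclosure \wt{u_2} \lessc \wt{w_2}$ the names $\wt{w_2}$ sit strictly above $\wt{w_1}$, whereas the opposite order forces $\wt{w_1}$ above $\wt{w_2}$. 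Concretely, take $\bigl(\M(x) = (x,y) \sparop (y,x),\ \N(x) = (x,z),\ \M(u) \sparop \N(u)\bigr)$ with $\imath$ the identity. Firing $\M$ then $\N$ yields the lam $(u,w_1)\sparop(w_1,u)\sparop(u,w_2)$ with order $u \lessc w_1 \lessc w_2$; firing $\N$ then $\M$ yields $(u,w_1')\sparop(w_1',u)\sparop(u,w_2')$ with order $u \lessc w_2' \lessc w_1'$. Any bijection matching the lams must send $w_1 \mapsto w_1'$ (the pair $(w_1,u)$ pins $w_1$), but that bijection sends the first order to $u \lessc w_1' \lessc w_2'$, not to $u \lessc w_2' \lessc w_1'$. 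So conclusion~$(ii)$ fails on the $\por{V}$-component.

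This is less a flaw in your strategy than an indication that the proposition, read literally for $\por{V}$, is too strong. The later uses in the paper (e.g.\ in the proof of Lemma~\ref{lem.flashback}) seem to need only that the \emph{lam} components agree up to a bijection, which your argument does establish. A clean fix is to weaken~$(ii)$ to speak only of $\P'''$, or to track $\por{V}$ only up to its underlying set of names; but as stated, the commutativity step cannot be rescued.
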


\paragraph{The informative operational semantics.}
In order to detect the circularity-freedom, our technique computes a lam  till every function therein has been adequately unfolded (up-to twice the order of the associated
mutation). This is formalized by switching to an ``informative'' operational 
semantics where basic terms
(dependencies and function invocations) are labelled
by so-called \emph{histories}. 
 
Let a \emph{history}, ranged over by $\alpha, \beta, \cdots$,  be a sequence
 of function names $\M_{i_1} \M_{i_2} \cdots \M_{i_n}$. We write
$\M \in \alpha$ if $\M$ occurs in $\alpha$. 
We
also write $\alpha^n$ for 
$\underbrace{\alpha \cdots \alpha}_{n \; {\rm times}}$.
Let $\alpha \preceq \beta$ if there is $\alpha'$ such that $\alpha \alpha'
= \beta$. The symbol $\varepsilon$ denotes the empty history.

The informative operational semantics is a transition system whose states are
tuples $\lam{\por{V}, \, \suh{\Mplus}, \, \mathbb{L}}{}$ where
$\suh{\Mplus}$ is a set of function invocations with histories and 
$\mathbb{L}$, called \emph{informative lam}, is a  term as $\P$, except that
pairs and function invocations are indexed by
histories, i.e.~$\suhs{\alpha}{(x,y)}$ and $\suhs{\alpha}{\M(\wt{u})}$, respectively. 

Let

\smallskip

$\begin{array}{rc@{\!}l}
\addh{\alpha}{\P} & \eqdef & \left\{
	\begin{array}{l@{\;}l}
	\suhs{\alpha}{(x,y)} & {\rm if} \; {\P}
	 = (x,y) 
	\\[.3cm]
	\suhs{\alpha}{\M(\wt{x})} & 
	{\rm if} \; {\P} = \M(\wt{x}) 
	\\[.3cm]
	\addh{\alpha}{\P'} \sparop \addh{\alpha}{\P''}& 
	{\rm if} \; {\P} = \P' \sparop \P'' 
	\\[.3cm]
	\addh{\alpha}{\P'} \seq \addh{\alpha}{\P''}& 
	{\rm if} \; {\P} = \P' \seq \P'' 
	\end{array} \right.
\end{array}$

\smallskip
\noindent
For example 
$\addh{\M\Q}{(x_4,x_2)\sparop\M(x_2,x_3,x_4,x_5)}=$
$\suhs{\M\Q}{(x_4,x_2)}\sparop$ $\suhs{\M\Q}{\M(x_2,x_3,x_4,x_5)}$.
Let also $\suh{\CP}[~]$ be a lam context with histories (dependency pairs and
function invocations are labelled by histories, the definition is similar to
$\CP[~]$).

The informative transition relation is the least one such that

\smallskip

$\begin{array}{c}
\mathrule{Red+}{
	\begin{array}{c}
	\M(\wt{x})= \P
	\qquad \var{\P}\setminus\wt{x}=\wt{z} 
	\qquad \wt{w} \mbox{ are fresh} %
	\\
	 \P\subst{\wt{w}}{\wt{z}}\subst{\wt{u}}{\wt{x}}=\P'
	\end{array}
	}{
\begin{array}{l}
	\ilam{\mathbb{V}, \, \suh{\Mplus}, \, \suh{\CP}[\suhs{\alpha}{\M(\wt{u}) }]
	 }{} \quad \lred{} 
	 \quad 
	\ilam{\mathbb{V} \transclosure \wt{u} {\lessc} \wt{w}, \; 
	\suh{\Mplus} \cup \{ \suhs{\alpha}{\M(\wt{u})} \}, \, 
	\suh{\CP}[\addh{\alpha\M}{\P'}] }{}
\end{array}
	}
\end{array}$

\smallskip

When $\ilam{\mathbb{V}, \, \suh{\Mplus}, \, \mathbb{L}}{} \lred{} \ilam{\mathbb{V}', \, \suh{\Mplus}', \,
\mathbb{L}'}{}$ by applying \rulename{Red+} to $\suhs{\alpha}{\M(\wt{u})}$,
we say that the term $\suhs{\alpha}{\M(\wt{u})}$ \emph{is evaluated in the 
reduction}. The initial informative state of a program with main lam
 $\P$ is $
\ilam{\por{I}_{\P}, \, \varnothing, \, \addh{\varepsilon}{\P} }{}$.

For example, the $\M\Q\PP$-program

\smallskip

$\begin{array}{llrl}
\bigl( \; &\M(x,y,z,u) &=&  (x,z)\sparop\Q(u,y,z)  \; ,\\
 &\Q(x,y,z) &=&  (x,y)\sparop\M(y,z,x,u)\;,\\
 &\PP(x,y,z,u)&=& (z,x)\sparop\PP(x,y,z,u) \sparop \M(x,y,z,u)\;,
 \\
 & \PP(x_1,x_2,x_3,x_4) \! \! &  
 \bigr) &
\end{array}$

\smallskip

\noindent has an (informative) evaluation

\smallskip

$\begin{array}{l}
\ilam{\por{I}_{\P},\, \emptyset, \, 
\suhs{\varepsilon}{\PP(x_1,x_2,x_3,x_4)} }{}
\\
\; \lred{} \;
\ilam{\por{I}_{\P},\, \suh{\Mplus}, \, \mathbb{L}\,\sparop\,\suhs{\PP}{\M(x_1,x_2,x_3,x_4)} }{}
\\
\; \lred{} \;
\ilam{\por{I}_{\P},\, \suh{\Mplus}_1, \, \mathbb{L}\,\sparop\,\suhs{\PP\M}{(x_1,x_3)}\,\sparop\,\suhs{\PP\M}{\Q(x_4,x_2,x_3)} }{}
\\
\; \lred{} \;
\ilam{\por{I}_{\P}{{\transclosure x_4\lessc x_5}},\, \suh{\Mplus}_2, \,
\mathbb{L}'\,\sparop\,\suhs{\PP\M\Q}{(x_4,x_2)}\,\sparop\,\suhs{\PP\M\Q}{\M(x_2,x_3,x_4,x_5)} }{}
\\
\end{array}$

\smallskip 

\noindent where $\mathbb{L}=\suhs{\PP}{(x_3,x_1)}\sparop\suhs{\PP}{\PP(x_1,x_2,x_3,x_4)}$, $\mathbb{L}'=\mathbb{L}\sparop\suhs{\PP\M}{(x_1,x_3)}$ and $\suh{\Mplus}=\{\suhs{\varepsilon}{\PP(x_1,x_2,x_3,x_4)}\}$, 
 $\suh{\Mplus}_1=\suh{\Mplus}\cup\{\suhs{\PP}{\M(x_1,x_2,x_3,x_4)}\}$,
 $\suh{\Mplus}_2= \suh{\Mplus}_1\cup\{\suhs{\PP\M}{\Q(x_4,x_2,x_3)}\}$.

There is a strict correspondence between the non-informative and informative semantics  
that is crucial for the correctness of our algorithm in 
Section~\ref{sec.thealgorithm-vero}.
Let $\sem{\cdot}$ be an \emph{eraser map} that takes an informative lam and 
removes the histories. 
The formal definition is omitted because
it is straightforward.

\begin{proposition}
\begin{enumerate}
\item
If $\ilam{\mathbb{V}, \suh{\Mplus}, \, \mathbb{L}}{} \lred{} \ilam{\mathbb{V}', \suh{\Mplus}', \, \mathbb{L}'}{}$
then $\ilam{\mathbb{V}, \sem{\mathbb{L}}}{}$ $\lred{} \ilam{\mathbb{V}', \sem{\mathbb{L}'}}{}$;
\item
If $\ilam{\mathbb{V}, \sem{\mathbb{L}}}{} \lred{} \ilam{\mathbb{V}', \P'}{}$ then 
there are $\suh{\Mplus}$, $\suh{\Mplus}'$, $\mathbb{L}'$ such that $\sem{\mathbb{L}'} = \P'$ and
$\ilam{\mathbb{V}, \suh{\Mplus}, \mathbb{L}}{} \lred{} \ilam{\mathbb{V}', \suh{\Mplus}', \mathbb{L}'}{}$.
\end{enumerate}
\end{proposition}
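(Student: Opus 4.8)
The plan is to prove the two directions of the correspondence separately, each by a straightforward induction on the structure of the derivation of the transition in the hypothesis, exploiting the fact that \rulename{Red+} and \rulename{Red} have syntactically identical side conditions (the premises about $\M(\wt{x}) = \P$, the free names $\wt{z}$, the fresh names $\wt{w}$, and the substituted body $\P'$ are literally the same in both rules).

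For part (1), I would take the derivation $\ilam{\mathbb{V}, \suh{\Mplus}, \, \mathbb{L}}{} \lred{} \ilam{\mathbb{V}', \suh{\Mplus}', \, \mathbb{L}'}{}$, which must be an instance of \rulename{Red+} applied to some $\suhs{\alpha}{\M(\wt{u})}$ occurring in an informative context $\suh{\CP}[~]$, so that $\mathbb{L} = \suh{\CP}[\suhs{\alpha}{\M(\wt{u})}]$ and $\mathbb{L}' = \suh{\CP}[\addh{\alpha\M}{\P'}]$, with $\mathbb{V}' = \mathbb{V} \transclosure \wt{u} \lessc \wt{w}$. The key lemma I need here is a commutation property between the eraser and the informative constructs: $\sem{\suh{\CP}[\cdot]}$ is a (non-informative) context $\CP[~]$, $\sem{\suhs{\alpha}{\M(\wt{u})}} = \M(\wt{u})$, and crucially $\sem{\addh{\alpha\M}{\P'}} = \P'$ — the last following by a trivial structural induction on $\P'$ using the definition of $\addh{\cdot}{\cdot}$. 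Given these, $\sem{\mathbb{L}} = \CP[\M(\wt{u})]$ and $\sem{\mathbb{L}'} = \CP[\P']$, and since the partial-order component and all the side conditions are untouched by erasure, the very same instance of \rulename{Red} witnesses $\ilam{\mathbb{V}, \sem{\mathbb{L}}}{} \lred{} \ilam{\mathbb{V}', \sem{\mathbb{L}'}}{}$.

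For part (2), I would argue in the opposite direction: given $\ilam{\mathbb{V}, \sem{\mathbb{L}}}{} \lred{} \ilam{\mathbb{V}', \P'}{}$ by \rulename{Red} applied to some $\M(\wt{u})$ in a context $\CP[~]$, I need to locate a corresponding decorated occurrence in $\mathbb{L}$. Here the point is that $\sem{\cdot}$ only deletes histories and does not alter the tree shape, so every occurrence of $\M(\wt{u})$ in $\sem{\mathbb{L}}$ is the image of a uniquely determined occurrence $\suhs{\alpha}{\M(\wt{u})}$ in $\mathbb{L}$ (for some history $\alpha$ determined by $\mathbb{L}$), sitting inside an informative context $\suh{\CP}[~]$ with $\sem{\suh{\CP}[~]} = \CP[~]$. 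Applying \rulename{Red+} to that occurrence — with the same $\wt{w}$, same $\P'$ — yields $\ilam{\mathbb{V}, \suh{\Mplus}, \mathbb{L}}{} \lred{} \ilam{\mathbb{V}', \suh{\Mplus} \cup \{\suhs{\alpha}{\M(\wt{u})}\}, \suh{\CP}[\addh{\alpha\M}{\P'}]}{}$, and taking $\mathbb{L}' = \suh{\CP}[\addh{\alpha\M}{\P'}]$, $\suh{\Mplus}' = \suh{\Mplus} \cup \{\suhs{\alpha}{\M(\wt{u})}\}$ we get $\sem{\mathbb{L}'} = \CP[\P'] = \P'$ exactly as required (one should note $\suh{\Mplus}$ can be taken to be the set attached to the given informative state, or anything, since the statement only asserts existence).

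I do not expect any serious obstacle: the result is essentially a bookkeeping statement, and the only thing that needs care is the auxiliary fact $\sem{\addh{\alpha\M}{\P'}} = \P'$ together with the observation that erasure commutes with the context-plugging operation — both discharged by routine structural inductions that mirror the clause-by-clause definitions of $\addh{\cdot}{\cdot}$ and $\suh{\CP}[~]$. The mild subtlety worth a sentence in the write-up is that in part (2) the history $\alpha$ is not free to choose: it is forced by the position of the invocation within $\mathbb{L}$, but since we only need \emph{some} witnessing informative reduction this causes no difficulty. If one wanted to be fully rigorous about "the eraser map is straightforward" (as the text says it omits the definition), one would first fix a definition of $\sem{\cdot}$ by structural recursion and then prove the two commutation lemmas; I would relegate this to a remark rather than belabor it.
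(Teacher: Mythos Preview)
The paper states this proposition without proof, so there is nothing to compare against; your approach is correct and is exactly the routine argument one would expect. One small remark: you speak of ``induction on the structure of the derivation'', but since each semantics consists of a single non-inductive rule, what you actually do is a one-step case analysis on the unique applicable rule together with the structural inductions on $\P'$ and on contexts that you already identify as the real work.
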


\paragraph{Circularities.}
Lams record sets of relations on names. The following function
$\flatt{\cdot}$, called \emph{flattening}, makes explicit 
these relations 
\[
\begin{array}{c}
\flatt{\pinull} =  \pinull,
\qquad
\flatt{(x,y)}  =  (x,y),
\qquad
\flatt{\M(\wt{x})}  =  \pinull,
\\
\flatt{\P \sparop \P'}  =  \flatt{\P} \sparop \flatt{\P'} ,
\qquad
\flatt{\P \seq \P'}  =  \flatt{\P} \seq \flatt{\P'}.
\end{array}
\]
For example, if
$\P = \M(x,y,z) \seq (x,y) \sparop \N(y,z) \sparop \M(u,y,z) 
 \seq \N(u,v) \sparop (u,v) \seq (v,u)$ then
$\flatt{\P} =  (x,y) \seq (u,v) \seq (v,u)$. 
That is, there are three relations in $\P$:
$\{ (x,y)\}$ and $\{ (u,v)\}$ and $\{ (v,u) \}$. 
By Proposition~\ref{prop.normalform},
$\flatt{\P}$ returns, up-to the lam axioms, sequences of (pairwise different)
$\sparop$-compositions of dependencies. 
The operation $\flatt{\cdot}$ may be extended to informative lams $\mathbb{L}$ in
the obvious way:
$\flatt{\suhs{\alpha}{(x,y)}}  =  \suhs{\alpha}{(x,y)}$ and $
\flatt{\suhs{\alpha}{\M(\wt{x})}}  =  \pinull$.

\begin{definition}
\label{def.circularity}
A lam $\P$ \emph{has a circularity} if 

\smallskip

$\qquad\flatt{\P} =  
(x_1,x_2) \sparop (x_2,x_3) \sparop  \cdots \sparop  (x_m,x_1) \sparop \T' 
\seq \T''$

\smallskip

\noindent for some $x_1, \cdots, x_m$.
A state $\ilam{\por{V}, \; \P}{}$ has a circularity if $\P$ has a circularity.
Similarly for an informative lam $\mathbb{L}$.
\end{definition}
The final state of  the $\M\N\PP$-program computation has a circularity;
another function displaying a circularity is $\N$ in  
Section~\ref{sec.introduction}. 
None of the states in the examples 1, 2, 3 at the beginning of this section
has a \emph{circularity}.

\section{Linear recursive lams and saturated states}
\label{sec.algorithm}

This section develops the theory that underpins the algorithm of 
Section~\ref{sec.thealgorithm-vero}. In order to lightening the section,
 the technical details
have been moved in Appendix~\ref{sec.thetheory-nfp}.

We restrict our arguments to (mutually) recursive lam programs. In fact,
circularity analysis in non-recursive programs is trivial: it is sufficient to evaluate 
all the invocations till the final state and verify the presence of circularities therein.
A further restriction allows us to simplify the arguments without loosing in
generality (\emph{cf.}~the definition of saturation): we assume that every function is (mutually) recursive. We may reduce
to this case by expanding function invocation of non-(mutually) recursive functions
(and removing their definitions). 

\paragraph{Linear recursive functions and mutations.}
Our decision algorithm relies on interpreting recursive functions as mutations. 
This interpretation is not always possible: the recursive functions that have an associated
mutation are the linear recursive ones, as defined below. 

The technique for dealing with the general case is briefly discussed in 
Section~\ref{sec.relatedworks} and is detailed in Appendix~\ref{sec.nonlinear}.

\begin{definition}
\label{def.linearity}
Let $\bigl(\M_1(\wt{x_1}) = \P_{1}, \cdots , 
\M_\ell(\wt{x_\ell}) = \P_{\ell}, \P \bigr)$ be a lam program.
A sequence $\M_{i_0} \M_{i_1} \cdots
\M_{i_k}$ is called a \emph{recursive history of} $\M_{i_0}$ if
(a) the function names are pairwise different and
(b) for every $0 \leq j \leq k$, $\P_{i_j}$ contains one invocation of
$\M_{i_{j+1 \% k}}$ (the operation $\%$ is the remainder of the division).

The lam program is \emph{linear recursive} if (a) every
function name has a unique recursive history and (b) if 
$\M_{i_0} \M_{i_1} \cdots \M_{i_k}$ is a recursive history then,
for every $0 \leq j \leq k$, $\P_{i_j}$ contains \emph{exactly} one invocation of
$\M_{i_{j+1 \% k}}$. 
\end{definition}
For example, the program

\smallskip
 
$\begin{array}{ll}
\bigl( & \M_1(x,y) = (x,y) \sparop \M_1(y,z) \sparop \M_2(y) \seq
\M_2(z) \; , 
\M_2(y) = (y,z)\sparop\M_2(z) \; ,  
\P \quad \bigr)
\end{array}$

\smallskip

\noindent is linear recursive. On the contrary

\smallskip 

$\bigl( \M(x) = (x,y) \sparop \N(x) \; , \; \N(x) = (x,y) \sparop \M(x) \seq \N(y) \; ,   \;\P \; \bigr)$

\smallskip

\noindent is not linear recursive because $\N$ has two recursive histories, namely $\N$ and $\N\M$.

Linearity allows us to associate a \emph{unique} mutation to every function name.
To compute this mutation, let 
$\mathtt{H}$ range over sequences of function invocations.
We use the following two rules:
\[
\begin{array}{c}
\bigfract{
	\M_i \alpha \models \varepsilon \quad  \M_{i}(\wt{x}_i) = \P_{i}
	}{
	\alpha \models \M_i(\wt{x}_i)
}
\quad
\bigfract{
\begin{array}{c}
	\M_{j} \alpha  \models {\tt H}
	\M_{i}(\wt{x})
	\quad  
	\M_{i}(\wt{x}_i) = \P_{i}
	\\ 
	\var{\P_{i}}\setminus\wt{x}_i =\wt{z} 
	\qquad \wt{w} \mbox{ are fresh} 
	\\
	\CP[\M_{j}(\wt{y})] = \P_{i} \subst{\wt{w}}{\wt{z}}\subst{\wt{x}}{\wt{x}_i}
	\end{array}
	}{
	\alpha \models 
	{\tt H} \M_{i}(\wt{x}) \M_{j}(\wt{y})
}
\end{array}
\]

Let $\varepsilon \models \M(x_1, \cdots, x_n) \cdots
\M(x_1', \cdots , x_n')$ be the final judgment of the proof tree
with leaf $\M \alpha \M \models \varepsilon$, where $\M\alpha$ is the 
recursive history of $\M$.
Let also
$x_1', \cdots, x_n' \setminus x_1, \cdots, x_n = z_1, \cdots , z_k$. Then
the \emph{mutation of} $\M$, written $\mu_{\M} = \mut{a_1, \cdots, a_n}$ is defined by 

\smallskip

$\qquad\qquad a_i \; = \; \left\{ 
	\begin{array}{l@{\qquad}l}
	j & {\rm if} \; x_i' = x_j
	\\
	\\
	n+j & {\rm if} \; x_i' = z_j
	\end{array} \right.$

\smallskip

Let $\ordermut{\M}$, called \emph{order of the function} $\M$, 
be the order of $\mu_\M$. 
For example, in the $\M\Q\PP$-program,
the recursive history of $\M$ is $\M \Q$ and, applying the algorithm above
to $\M \Q \M  \models \varepsilon$, we get 
$\varepsilon \models  \M(x,y,z,u) \Q(u,y,z)\M(y,z,u,v)$. 
The mutation of $\M$ is $\mut{2,3,4,5}$ and $\ordermut{\M}=4$.
Analogously we can compute $\ordermut{\Q}=3$ and $\ordermut{\PP}=1$.

\medskip
\emph{Saturation.}
In the remaining part of the section we assume a fixed linear recursive program 
$\bigl(\M_1(\wt{x_1}) = \P_{1}, \cdots , 
\M_\ell(\wt{x_\ell}) = \P_{\ell}, \P \bigr)$ and let 
$\ordermut{\M_1}, \cdots, \ordermut{\M_\ell}$ be the orders of the corresponding
functions. 

\begin{definition}
A history $\alpha$ is 
\begin{description}
\item[$\M$-\emph{complete}] 
\ \\
if $\alpha = \beta^{\ordermut{\M}}$, where $\beta$ is the recursive
history of $\M$. %
We say that $\alpha$ is \emph{complete} when it is 
$\M$-complete, for some $\M$.

\item[$\M$-\emph{saturating}] 
\ \\
if $\alpha = \beta_1 \cdots 
\beta_{n-1} \alpha_n^{2}$, where $\beta_i \preceq (\alpha_i)^2$, with $\alpha_i$ 
complete, and $\alpha_n$ $\M$-complete.
We say that $\alpha$ is \emph{saturating} when it is 
$\M$-saturating, for some $\M$.
\end{description}
\end{definition}
In the $\M\Q\PP$-program,  $\ordermut{\M}=4$,
$\ordermut{\Q}=3$, and $\ordermut{\PP}=1$, and
the recursive histories of $\M$, $\Q$ and $\PP$ are equal to $\M\Q$, to 
$\Q\M$ and to $\PP$, respectively.
Then $\alpha= (\M\Q)^{4}$ is the $\M$-complete history
 and $\PP^2(\M\Q)^8$ and $\PP(\M\Q)^8$ are $\M$-saturating.

The following proposition is an important consequence of the theory of mutations
(Theorem~\ref{thm.mainthm}) and the semantics of lams (and their axioms).
In particular, it states that, 
if a function invocation $\M_0(\wt{u_0})$ is unfolded up to the order of $\M_0$ then (i) the last invocation $\M_0(\wt{v})$ may be 
mapped back to a previous invocation by a flashback and (ii) the same
flashback also maps back dependencies created by the unfolding of $\M_0(\wt{v})$.

\begin{proposition}
\label{prop.flashback-order}
Let  $\beta=\M_0 \M_1 \cdots \M_n$ be $\M_0$-complete
and let 
\[
\begin{array}{l}
\ilam{\mathbb{V}, \, \suh{\Mplus}, \, 
\suh{\CP}_0[\suhs{\alpha}{\M_0(\wt{u_0})}]}{} \lred{}^{n+1} \;
\ilam{\mathbb{V}', \, \suh{\Mplus}', \, 
\suh{\CP}_0[\suh{\CP}_1[ \cdots \suh{\CP}_n[\suhs{\alpha\M_0 \cdots \M_n}{\M_0(\wt{u_{n+1}})}] \cdots]]}{}
\end{array}
\]
where $\suh{\Mplus}' = \suh{\Mplus} \cup \{ \suhs{\alpha}{\M_0(\wt{u_0})},
\suhs{\alpha\M_0}{\M_1(\wt{u_1})}, \cdots, \suhs{\alpha\M_0 \cdots \M_{n-1}}{\M_n(\wt{u_n})}\}$ and $\M_i(\wt{u_i}) = \P_i'$ and $\addh{\alpha\M_0 \cdots \M_{i}}{\P_i'} = 
\suh{\CP}_i[\suhs{\alpha\M_0 \cdots \M_{i}}{\M_{i+1}(\wt{u_{i+1}})}]$
(unfolding of the functions in the complete history of $\M_0$). Then
there is a $\suhs{\alpha\M_0 \cdots \M_{h-1}}{\M_h(\wt{u_h})} \in \suh{\Mplus}'$ 
and a $\por{V}'$-flashback $\rho$ such that
\begin{enumerate}
\item
$\M_0(\rho(\wt{u_{n+1}})) = \M_h(\wt{u_h})$ (hence $\M_0 = \M_h$);
\item
let $\M_0(\wt{u_{n+1}}) = \P$ and $\flatt{\P} = \T_1 \seq \cdots \seq \T_k$
and 
\\
$\flatt{\suh{\CP}_0[\suh{\CP}_1[ \cdots \suh{\CP}_n[\suhs{\alpha\beta}{\M_0(\wt{u_{n+1}})}] \cdots]]} = \suh{\T_1'} \seq \cdots \seq \suh{\T_{k'}'}$. Then, for every
$1 \leq i \leq k$, there exists $1 \leq j \leq k'$ such that 
$\suh{\T_j'} = \addh{\alpha\M_0 \cdots \M_{h-1}}{\T_i} \sparop \suh{\T_j''}$, for
some $\suh{\T_j''}$.
\end{enumerate}
\end{proposition}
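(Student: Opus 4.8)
The plan is to reduce Proposition~\ref{prop.flashback-order} to Theorem~\ref{thm.mainthm} applied to the mutation $\mu_{\M_0}$, lifting the flashback obtained there from the tuple of arguments of $\M_0$ to the whole unfolded lam context. First I would observe that the unfolding along the $\M_0$-complete history $\beta = \M_0 \M_1 \cdots \M_n$ is, at the level of argument tuples, exactly $n+1$ applications of a single mutation: by definition of $\mu_{\M_0}$ (the construction via the judgments $\varepsilon \models \M(x_1,\ldots,x_n)\cdots$) and the fact that $\beta$ is a recursive history traversed $\ordermut{\M_0}$-many times, the passage from $\wt{u_0}$ to $\wt{u_{n+1}}$ coincides with $\mutpair{\wt{u_0}}{\por{V}_0} \lred{\mu_{\M_0}^{\ordermut{\M_0}}} \mutpair{\wt{u_{n+1}}}{\cdot}$, where the partial orders grow exactly as in the mutation semantics because \rulename{Red+} extends $\mathbb{V}$ by $\transclosure \wt{u}\lessc\wt{w}$, matching the clause $\por{V}' = \por{V} \transclosure \wt{x}\lessc\wt{z}$ of a mutation. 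This identification is the key bookkeeping step and should be stated as a lemma (it is essentially the content of the paragraph defining $\mu_{\M}$, here made operational); I would prove it by induction on the length of $\beta$ using the inference rules for $\models$.

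Next, given that identification, Theorem~\ref{thm.mainthm} with $k = \ordermut{\M_0}$ yields $0 \le h' < k$ and a flashback $\rho$ on the partial order $\por{V}'$ with $\rho(\wt{u_{n+1}}) = \wt{u_{h}}$, where $\wt{u_h}$ is the argument tuple of the $h$-th intermediate invocation of $\M_0$ along $\beta$ (reindexing $h'$ to the position $h$ in the history where $\M_0$ reappears, and recording that the corresponding history prefix is $\alpha\M_0\cdots\M_{h-1}$). This gives part~(1): $\M_0(\rho(\wt{u_{n+1}})) = \M_0(\wt{u_h}) = \M_h(\wt{u_h})$, and in particular $\M_h = \M_0$. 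I would note that $\suhs{\alpha\M_0\cdots\M_{h-1}}{\M_h(\wt{u_h})} \in \suh{\Mplus}'$ because every intermediate invocation consumed by \rulename{Red+} is added to the $\suh{\Mplus}$ component, and $\M_0(\wt{u_h})$ with $0 \le h \le n$ is one of them (or $\M_0(\wt{u_0})$ itself when $h=0$).

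For part~(2) I would argue as follows. The body $\P$ with $\M_0(\wt{u_{n+1}}) = \P$ is, by \rulename{Red+} and the flattening axioms (Proposition~\ref{prop.normalform}), a sequence $\T_1 \seq \cdots \seq \T_k$ of $\sparop$-relations on the names $\wt{u_{n+1}}$ together with freshly created maximal names; the same body, instantiated at the $h$-th occurrence, contributes $\addh{\alpha\M_0\cdots\M_{h-1}}{\T_i}$ to the flattening of the full unfolded context, but on the names $\wt{u_h}$. Since $\rho$ is injective and satisfies $\rho(x) \le x \in \por{V}'$, and since — by the second conclusion of Theorem~\ref{thm.mainthm}, whose proof already tracks exactly this — $\rho$ maps each dependency $(x,y)$ produced by the one further unfolding $\mu_{\M_0}$ of $\wt{u_{n+1}}$ back to a dependency among the names reachable from $\wt{u_h}$, each relation $\T_i$ over $\wt{u_{n+1}}$ is carried by $\rho$ into a sub-$\sparop$-relation of one of the relations $\T_i$ over $\wt{u_h}$; hence its $\sparop$-image sits inside some $\suh{\T_j'} = \addh{\alpha\M_0\cdots\M_{h-1}}{\T_i} \sparop \suh{\T_j''}$. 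The matching of indices $i \mapsto j$ is well-defined because the flattening of a fixed body always produces the same sequence of relations up to the lam axioms, independent of the argument instantiation and the history labels (the latter being erased by $\flatt{\cdot}$ on the structural level except for the label tag they carry).

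The main obstacle I expect is precisely the last bookkeeping in part~(2): reconciling the ``$\seq$-sequence'' structure produced by $\flatt{\cdot}$ on the nested contexts $\suh{\CP}_0[\cdots\suh{\CP}_n[\cdots]]$ with the per-relation image of $\rho$, because the $\seq$-decomposition of the big unfolded context interleaves relations coming from all $n+1$ bodies $\P_0',\ldots,\P_n'$ and from the axiom $\T\sparop(\P'\seq\P'') = \T\sparop\P'\seq\T\sparop\P''$, so the index $j$ is not simply a shift of $i$. I would handle this by proving an auxiliary \emph{monotonicity} lemma: if $\mathbb{L} \lred{} \mathbb{L}'$ via \rulename{Red+}, then every $\seq$-relation of $\flatt{\mathbb{L}}$ is, after applying the appropriate history extension, contained in some $\seq$-relation of $\flatt{\mathbb{L}'}$ — i.e. unfolding only enlarges relations and refines the $\seq$-partition. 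Iterating this lemma along the $n+1$ reduction steps, then composing with the $\rho$-image statement from Theorem~\ref{thm.mainthm}, gives the required $j$ for each $i$. Everything else is routine induction on the derivations and unwinding of the definitions of $\flatt{\cdot}$, $\addh{\cdot}{\cdot}$, and the mutation construction.
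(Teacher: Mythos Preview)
Your approach is essentially what the paper indicates: the paper does not give a detailed proof of this proposition but only remarks that it is ``an important consequence of the theory of mutations (Theorem~\ref{thm.mainthm}) and the semantics of lams (and their axioms).'' Your reduction of part~(1) to Theorem~\ref{thm.mainthm} applied to $\mu_{\M_0}$, via the identification of the $n{+}1$ reduction steps along the $\M_0$-complete history with $\ordermut{\M_0}$ applications of $\mu_{\M_0}$, is exactly the intended line, and your bookkeeping lemma matching \rulename{Red+} with the mutation semantics is the right way to make it precise.

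One correction on part~(2): you invoke ``the second conclusion of Theorem~\ref{thm.mainthm}, whose proof already tracks exactly this,'' but Theorem~\ref{thm.mainthm} has a single conclusion (existence of the flashback on the argument tuple) and its proof does not track dependency pairs at all. What actually carries part~(2) is the observation that the body of $\M_0$ is the \emph{same} lam at both the $h$-th and the $(n{+}1)$-th occurrence, so the relations $\T_i$ produced at $\wt{u_{n+1}}$ are, via $\rho$ (extended to the fresh names created in that body, as in Proposition~1), exactly the relations produced when $\M_h(\wt{u_h}) = \M_0(\wt{u_h})$ was unfolded at step~$h$; those relations already sit inside the flattened context with history $\alpha\M_0\cdots\M_{h}$. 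Your monotonicity lemma about $\flatt{\cdot}$ under \rulename{Red+} is then the right tool to locate them inside some $\suh{\T_j'}$, and this is presumably what the paper means by ``the semantics of lams (and their axioms)'' (in particular the distributivity axiom $\T \sparop (\P' \seq \P'') = \T\sparop\P' \seq \T\sparop\P''$, cf.\ also Proposition~\ref{prop.storie-insieme} in the appendix). So the plan is sound; just do not attribute the dependency-tracking to Theorem~\ref{thm.mainthm}.
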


The notion of $\M$-saturating will be used to define a
``saturated'' state, i.e., a state where the evaluation of programs
may safely (as regards circularities) stop.

\begin{definition}
\label{def.saturation}
An informative lam $\ilam{\mathbb{V}, \, \suh{\Mplus} , \,
\mathbb{L}}{}$ 
is \emph{saturated} when, for every $\suh{\CP}[~]$ and $\M(\wt{u})$ such that
$\mathbb{L} = \suh{\CP}[\suhs{\alpha}{\M(\wt{u})}]$, 
$\alpha$ has a saturating prefix.
\end{definition}

It is easy to check that the following informative lam generated by the computation of the {$\M\Q\PP$}-program is saturated:

\smallskip

$\begin{array}{ll}
\bigl\langle
\mathbb{V}_7,\, 
\suh{\Mplus}, \, \suhs{\PP^2}{\PP(x_1,x_2,x_3,x_4)}
 &\sparop\, \prod_{0\leq i\leq 8}\suhs{\PP\M(\Q\M)^i}{(x_{i+1},x_{i+3})}
 \\
\,&\sparop\,\prod_{0\leq i\leq 8} \suhs{\PP(\M\Q)^i}{(x_{i+3},x_{i+1})}
\\
\,&\sparop\,\suhs{\PP(\M\Q)^8}{\M(x_9,x_{10},x_{11},x_{12})} 
\bigr\rangle,
\end{array}$

\smallskip
 
where $\mathbb{V}_{i+1}=\mathbb{V}_i\transclosure x_{i+4} {\lessc} x_{i+5}$, and 

\smallskip

$\begin{array}{ll}
\suh{\Mplus} = & 
 \{\suhs{\varepsilon}{\PP(x_1,x_2,x_3,x_4)}, \suhs{\PP}{\PP(x_1,x_2,x_3,x_4)} \}
 \\
 & \cup\{ \suhs{\PP(\M\Q)^i}{\M(x_{i+1},x_{i+2},x_{i+3},x_{i+4})} \; | \;
 {0\leq i\leq 7} \}
   \\ 
 & {\cup} \; \{\suhs{\PP\M(\Q\M)^i}{\Q(x_{i+4},x_{i+2},x_{i+3})} \; | \;
 {0\leq i\leq 7} \}.
\end{array}$

\smallskip

Every preliminary statement is in place for our key theorem that details the mapping
of circularities created by transitions of saturated states to past circularities.

\begin{theorem}
\label{thm.invariance}
Let $\ilam{\por{I}_{\P}, \, \varnothing, \,  \addh{\varepsilon}{\P}}{} \lred{}^* \ilam{\mathbb{V}, \, \suh{\Mplus}, \, \mathbb{L}}{}$ 
and
$\ilam{\mathbb{V}, \,  \suh{\Mplus}, \,\mathbb{L}}{}$ be a saturated state.
If
$\ilam{\mathbb{V},  \suh{\Mplus}, \,\mathbb{L}}{} \lred{}
\ilam{\mathbb{V}', \,  \suh{\Mplus}', \,\mathbb{L}'}{}$ then
\begin{enumerate}
\item
$\ilam{\mathbb{V}', \,  \suh{\Mplus}', \,\mathbb{L}'}{}$ is saturated;
\item
if $\mathbb{L}'$ has a circularity then $\mathbb{L}$ has already 
a circularity.
\end{enumerate}
\end{theorem}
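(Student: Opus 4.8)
The plan is to establish the two items separately. Item~1 is pure bookkeeping on histories, whereas item~2 is where the theory of Section~\ref{sec.mutationsandflashbacks} (ultimately Theorem~\ref{thm.mainthm}, through Proposition~\ref{prop.flashback-order}) does the real work. For item~1, note that the transition is an instance of \rulename{Red+} that evaluates some $\suhs{\alpha}{\M(\wt u)}$, so $\mathbb{L} = \suh{\CP}[\suhs{\alpha}{\M(\wt u)}]$ and $\mathbb{L}' = \suh{\CP}[\addh{\alpha\M}{\P'}]$ for the relevant instance $\P'$ of the body of $\M$. Since $\ilam{\mathbb{V},\suh{\Mplus},\mathbb{L}}{}$ is saturated, $\alpha$ has a saturating prefix $\gamma$. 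Every function invocation occurring in $\mathbb{L}'$ either already occurs in $\suh{\CP}[~]$ -- hence occurs in $\mathbb{L}$ with an unchanged history, which therefore has a saturating prefix -- or is one of the fresh invocations inside $\addh{\alpha\M}{\P'}$, whose history is exactly $\alpha\M$; since $\gamma \preceq \alpha \preceq \alpha\M$, the same prefix $\gamma$ witnesses that $\alpha\M$ is saturating. Hence $\ilam{\mathbb{V}',\suh{\Mplus}',\mathbb{L}'}{}$ is saturated.

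For item~2 I would first \emph{localise} a putative circularity. Using Proposition~\ref{prop.normalform} for $\flatt{\P'}$ together with the lam axiom $\T \sparop (\P' \seq \P'') = \T \sparop \P' \seq \T \sparop \P''$, one sees that every $\seq$-component of $\flatt{\mathbb{L}'}$ is either a $\seq$-component of $\flatt{\mathbb{L}}$ left untouched by the reduction, or has the shape $\mathcal{A} \sparop \mathcal{B}$, where $\mathcal{A}$ is a $\seq$-component of $\flatt{\mathbb{L}}$ hosting the evaluated invocation and $\mathcal{B}$ is one of the $\seq$-components of $\flatt{\addh{\alpha\M}{\P'}}$, i.e.\ a $\sparop$-composition of the freshly created pairs, all carrying the history $\alpha\M$. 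Suppose $\flatt{\mathbb{L}'}$ exhibits a cycle $(z_1,z_2)\sparop\cdots\sparop(z_m,z_1)$ inside one component. If that component is untouched, or is some $\mathcal{A}\sparop\mathcal{B}$ but the cycle uses only pairs of $\mathcal{A}$, then the cycle already sits inside a $\seq$-component of $\flatt{\mathbb{L}}$ and we are done. The only remaining case is a cycle lying in some $\mathcal{A}\sparop\mathcal{B}$ and using at least one fresh pair of $\mathcal{B}$.

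For the main case, since $\alpha$ has a saturating prefix, the evaluated invocation $\suhs{\alpha}{\M(\wt u)}$ has been reached by an unfolding that runs through (at least) two consecutive $\M$-complete blocks; applying Proposition~\ref{prop.flashback-order} to a suitable such block yields an earlier invocation $\suhs{\alpha^\circ}{\M(\wt{u_h})}\in\suh{\Mplus}$ and a flashback $\rho$ with $\rho(\wt u)=\wt{u_h}$ such that each $\seq$-component $\T_i$ of the freshly created dependencies, relabelled with the older history $\alpha^\circ$, already occurs inside a $\seq$-component of $\flatt{\mathbb{L}}$ produced one block earlier. Thus $\rho$ carries every fresh pair of $\mathcal{B}$ used by the cycle to a pair of $\mathbb{L}$ created during the first of the two recorded blocks, while being the identity on names predating that block (in particular on the pairs of $\mathcal{A}$ created before it). The delicate point -- and the step I expect to be the main obstacle -- is to check that the $\rho$-image of the \emph{whole} cycle still lies inside a single $\seq$-component of $\flatt{\mathbb{L}}$, so that it is again a circularity and not a cycle scattered over several components; this is exactly where the dichotomy of Figure~\ref{fig.flashbackcirc} enters. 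If the cycle is confined to the last block (Figure~\ref{fig.flashbackcirc}.a), clause~2 of Proposition~\ref{prop.flashback-order} transports it, histories apart, into one $\seq$-component of $\flatt{\mathbb{L}}$. If it is a crossover cycle (Figure~\ref{fig.flashbackcirc}.b), the pairs of $\mathcal{A}$ it uses were themselves created within the two $\M$-complete blocks recorded in the saturated state, so applying $\rho$ once shifts the ``new'' half of the cycle onto the second block while leaving the ``old'' half in place, and the structural correspondence of clause~2 guarantees the two halves reconnect inside the same $\seq$-component -- which is precisely why unfolding up to \emph{twice} the order (saturation), and not merely once, is needed. In both sub-cases we produce a circularity already present in $\flatt{\mathbb{L}}$, completing the proof; the book-keeping hidden behind ``reconnect inside the same $\seq$-component'', namely tracking how the $\seq/\sparop$ skeleton of the freshly created dependencies mirrors the one produced one block earlier, is the part I would carry out in detail, along the lines of Appendix~\ref{sec.thetheory-nfp}.
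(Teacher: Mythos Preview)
Your argument for item~1 is correct and in fact more direct than what the paper does: once you observe that every invocation in $\mathbb{L}'$ carries either a history already present in $\mathbb{L}$ or the history $\alpha\M$, and that a saturating prefix of $\alpha$ is automatically a saturating prefix of $\alpha\M$, nothing more is needed.

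Item~2, however, has a real gap in the crossover case. Your plan is to apply the flashback $\rho$ of Proposition~\ref{prop.flashback-order} to the ``new'' half $\mathcal{B}$ of the cycle and to leave the ``old'' half $\mathcal{A}$ untouched, trusting that the two halves still meet. They need not. The names at which the new and old halves are spliced are arguments of the evaluated invocation $\M(\wt u)$; such a name may well have been \emph{created} during the last $\M$-complete block (think of the fresh argument $t$ in ${\tt factorial}$), and then $\rho$ moves it strictly backwards, so $\rho(x_h)\neq x_h$ and the renamed new half no longer connects to the untouched old half. This is exactly the obstruction the paper isolates in its proof sketch. Your sentence ``the pairs of $\mathcal{A}$ it uses were themselves created within the two $\M$-complete blocks'' is also unjustified: old pairs in the same $\sparop$-component can come from branches whose histories are \emph{not} prefixes of $\alpha$ at all (this is Case~2 in the full proof), and then a single flashback along the $\M$-branch says nothing about them.

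The paper repairs this not by leaving $\mathcal{A}$ fixed but by moving \emph{every} pair of the cycle --- old ones included --- via a family of flashbacks $\rho_{\alpha_i}$, one per history, each sending a pair with history $\alpha_i$ to a pair with the \emph{kernel} history $[\alpha_i]$ (where every maximal block $\gamma^k$, $k\ge 2$, is collapsed to $\gamma$). The technical core is Lemma~\ref{lem.flashback}: it shows that these per-history flashbacks exist, that the kernel-labelled images are still present in $\mathbb{L}$ and in the same $\seq$-component, and crucially that the family is coherent ($\alpha_i\preceq\alpha_j$ implies $\rho_{\alpha_i}\lessfb\rho_{\alpha_j}$), which is precisely what forces adjacent pairs to agree on their shared name after renaming. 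Proposition~\ref{prop.flashback-order} alone does not give you this coherent family; you need the kernel construction and the stronger lemma to close the crossover case.
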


\begin{proof}
(\emph{Sketch})
Item 1. directly follows from Proposition~\ref{prop.flashback-order}.
However, this proposition is 
not sufficient to guarantee that circularities created in saturated states are mapped
back to past ones. In particular, the interesting case is the one of 
\emph{crossover circularities}, as discussed in Section~\ref{sec.introduction}.
Therefore, let 

\smallskip

$\suhs{\alpha_1}{(x_1, x_{2})}, \cdots , 
\suhs{\alpha_{h-1}}{(x_{h-1},x_{h})}, 
\suhs{\alpha_h}{(x_h,x_{h+1})}, \cdots , \suhs{\alpha_{n}}{(x_{n},x_{1})}$

\smallskip

\noindent be a circularity in $\mathbb{L}'$ such that 
$\suhs{\alpha_h}{(x_h,x_{h+1})}, \cdots , \suhs{\alpha_{n}}{(x_{n},x_{1})}$
were already present in $\mathbb{L}$. 
Proposition~\ref{prop.flashback-order} guarantees the existence of a flashback 
$\rho$ that maps 
 $\suhs{\alpha_1}{(x_1, x_{2})}$ $\sparop \cdots  \sparop \suhs{\alpha_{h-1}}{(x_{h-1},x_{h})}$ to
$\suhs{\alpha_1}{(\rho(x_1), \rho(x_{2}))} \sparop \cdots \sparop 
\suhs{\alpha_{h-1}}{(\rho(x_{h-1}), \rho(x_{h}))}$. However, it is possible that

\smallskip

$\begin{array}{r}
\suhs{\alpha_1}{(\rho(x_1), \rho(x_{2}))} \sparop \cdots \sparop
\suhs{\alpha_{h-1}}{(\rho(x_{h-1}), \rho(x_{h}))} %
\sparop 
\suhs{\alpha_h}{(x_h,x_{h+1})}\sparop \cdots \sparop \suhs{\alpha_{n}}{(x_{n},x_{1})}
\end{array}$

\smallskip

\noindent is no more a circularity because, for example, $\rho(x_{h}) \neq x_{h}$ (assume
that $\rho(x_{1}) = x_{1}$). 
Let us discuss this issue. The hypothesis of saturation
guarantees that transitions produce histories $\alpha^2\beta$, where $\alpha$ is
 complete. Additionally, $\alpha_1, \cdots, \alpha_{h-1}$ 
must be equal because they have been created by
$\ilam{\mathbb{V},  \suh{\Mplus}, \,\mathbb{L}}{}$ $\lred{}
\ilam{\mathbb{V}', \,  \suh{\Mplus}', \,\mathbb{L}'}{}$. For simplicity, let 
 $\beta = \M$ and $\alpha = \M\alpha'$. Therefore, by 
Proposition~\ref{prop.flashback-order}, $\rho$ maps 
 $\suhs{\alpha^2\M}{(x_1, x_{2})} \sparop \cdots \sparop$ $ \suhs{\alpha^2\M}{(x_{h-1},x_{h})}$ to
$\suhs{\alpha\M}{(\rho(x_1), \rho(x_{2}))} \sparop \cdots \sparop \suhs{\alpha\M}{(\rho((x_{h-1}),
\rho(x_{h}))}$ and, $\rho(x_h) \neq x_h$
when $x_{h}$ is created by the computation evaluating functions in $\alpha'$. 

To overcome this problem, it is possible to demonstrate using a statement similar to
(but stronger than) Proposition~\ref{prop.flashback-order} that
$\rho$ maps 
$\suhs{\alpha_h}{(x_h,x_{h+1})}$ $\sparop \cdots \sparop \suhs{\alpha_{n}}{(x_{n},x_{1})}$
to
$
\suhs{[\alpha_h]}{(\rho(x_h),\rho(x_{h+1}))}\sparop \cdots \sparop
 \suhs{[\alpha_{n}]}{(\rho(x_{n}),\rho(x_{1}))}
$
where $[\alpha_i]$ are ``kernels'' of $\alpha_i$ where every $\gamma^k$ in 
$\alpha_i$, with 
$\gamma$ a complete history and $k \geq 2$, is replaced by $\gamma$.
The proof terminates  by demonstrating that the term

\smallskip

$\begin{array}{l}
\suhs{\alpha\M}{(\rho(x_1), \rho(x_{2}))} \sparop \cdots \sparop \suhs{\alpha\M}{(\rho((x_{h-1}),
\rho(x_{h}))}
\\
\qquad 
\sparop \suhs{[\alpha_h]}{(\rho(x_h),\rho(x_{h+1}))}\sparop \cdots \sparop
 \suhs{[\alpha_{n}]}{(\rho(x_{n}),\rho(x_{1}))}
\end{array}$

\smallskip

\noindent is in $\mathbb{L}$ (and it is a circularity).
\end{proof}

\section{The decision algorithm for detecting circularities in linear recursive lams}
\label{sec.thealgorithm-vero}

The algorithm for 
deciding the circularity-freedom problem in linear recursive lam programs
takes as input a lam program
$\bigl(\M_1(\wt{x_1}) = \P_{1}, \cdots , 
\M_\ell(\wt{x_\ell}) = \P_{\ell}, \P \bigr)$ and performs the following steps:

\medskip

\noindent
{\sc Step 1}: \emph{find recursive histories}. By parsing the lam program we create
a graph where nodes are function names and, for every invocation of $\N$ in the 
body of $\M$,  there is an edge from $\M$ to $\N$. Then a standard depth first search
associates to every node its recursive histories (the paths starting and ending at 
that node, if any). The lam program is linear recursive if every node has at most one associated
recursive history.

\medskip

\noindent
{\sc Step 2}: \emph{computation of the orders}. Given the recursive history $\alpha$ 
associated to a function $\M$, we compute the corresponding mutation by running 
$\alpha \models \varepsilon$ (see Section~\ref{sec.algorithm}). A straightforward 
parse of the mutation returns the set of cycles and sinks and, therefore, gives
the order $\ordermut{\M}$.

\medskip

\noindent
{\sc Step 3}: \emph{evaluation process}. The main lam is unfolded till the 
the saturated state. That is, every function invocation $\M(\wt{x})$ in the main lam is
evaluated up-to twice the order of the corresponding mutation. The function
invocation of $\M$ in the saturated state is erased and the process is repeated on 
every other function invocation (which, therefore, does not belong to the recursive 
history of $\M$), till no function invocation is present in the state.
At this stage we use the lam axioms that yield a term $\T_1 \seq \cdots \seq \T_n$. 

\medskip

\noindent
{\sc Step 4}: \emph{detection of circularities}. 
Every $\T_i$ in $\T_1 \seq \cdots \seq \T_n$ may be represented as a graph where
nodes are names and edges correspond to dependency pairs. To detect whether $\T_i$ contains a circular 
dependency, we run Tarjan algorithm~\cite{Tarjan72} for connected components of graphs and we stop
the algorithm when a circularity is found. 

\medskip

Every preliminary notion is in place for stating our main result; we also
make few remarks about the 
correctness of the algorithm and its computational cost. 

\begin{theorem}
The problem of the circularity-freedom of a lam program is decidable when the
program is linear recursive.
\end{theorem}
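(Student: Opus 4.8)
The plan is to show that the four-step algorithm just described terminates and correctly decides circularity-freedom, so that the theorem follows by exhibiting a decision procedure. First I would argue termination of each step. Step~1 is a depth-first search on the finite call graph, which terminates and, by Definition~\ref{def.linearity}, correctly recognizes linear recursion (each node has at most one recursive history). Step~2 computes, for each function $\M$, the mutation $\mu_\M$ via the judgment $\alpha \models \varepsilon$; this proof tree has height bounded by the length of the recursive history of $\M$, hence is finite, and a single parse extracts the cycles and sinks and thus $\ordermut{\M}$ by Theorem~\ref{thm.mainthm}. Step~3 unfolds each function invocation exactly up to twice its order: since every function is assumed (mutually) recursive and linear, each unfolding of $\M(\wt{x})$ introduces exactly one fresh invocation along the recursive history, so after $2\,\ordermut{\M}$ applications of \rulename{Red+} along that history the corresponding invocation is erased; only finitely many non-recursive-history invocations are spawned, and the process recurses on those, so Step~3 terminates with a function-invocation-free informative lam, which by Proposition~\ref{prop.normalform} (and the lam axioms) normalizes to $\T_1 \seq \cdots \seq \T_n$. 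Step~4 runs Tarjan's algorithm on each of the finitely many finite graphs $\T_i$, which terminates.

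Next I would establish correctness, i.e.\ that the saturated state reached in Step~3 has a circularity if and only if the original program can ever evaluate to a state with a circularity. The ``if'' direction of soundness (no false negatives) is the substance and rests on Theorem~\ref{thm.invariance}: the saturated state $\ilam{\mathbb{V}, \suh{\Mplus}, \mathbb{L}}{}$ reached by the evaluation process is saturated in the sense of Definition~\ref{def.saturation}, and by part~1 of that theorem saturation is preserved by every subsequent transition, while by part~2 any circularity appearing after a transition of a saturated state was already present in $\mathbb{L}$. By a straightforward induction on the length of a reduction sequence out of the saturated state, every reachable informative lam that contains a circularity corresponds to a circularity already visible in $\mathbb{L}$; combined with Proposition on the correspondence between the informative and non-informative semantics, this transfers to the original operational semantics. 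The ``only if'' direction (no false positives) is immediate: every dependency pair in $\mathbb{L}$ is produced by a genuine reduction sequence of the program, so a circularity detected in $\mathbb{L}$ is a circularity in a genuinely reachable state. Finally, Step~4 detects a circularity in $\T_1 \seq \cdots \seq \T_n$ exactly when some $\T_i$ contains a directed cycle, which by Definition~\ref{def.circularity} and the flattening $\flatt{\cdot}$ is precisely the condition ``$\mathbb{L}$ has a circularity''.

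The main obstacle is justifying that ``twice the order'' unfoldings suffice, i.e.\ the correctness of the saturation bound, which is exactly the content of Theorem~\ref{thm.invariance} and is the only genuinely non-routine ingredient; everything else in this theorem is bookkeeping about termination of a search and normalization of lams. I would therefore keep the proof here short, reducing it to the already-established Theorems~\ref{thm.mainthm} and~\ref{thm.invariance}, Propositions~\ref{prop.normalform}, \ref{prop.flashback-order}, and the informative/non-informative correspondence, and spend the remaining space on the complexity discussion promised in the statement: Steps~1--2 are polynomial in the size of the program, Step~3 may produce a state whose size is governed by the product of the orders $\ordermut{\M_i}$ (which can be exponential, since an order is an lcm of cycle lengths plus sink lengths), and Step~4 is linear in the size of each $\T_i$ via Tarjan's algorithm, so the overall cost is dominated by the size of the saturated state.
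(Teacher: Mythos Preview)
Your proposal follows the paper's approach closely and is largely correct, but there is one genuine gap: you never invoke the diamond property, Proposition~\ref{prop.diamond}, and without it the soundness argument does not close.

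The issue is that the operational semantics of lams is nondeterministic. Step~3 of the algorithm computes \emph{one particular} saturated state $\ilam{\mathbb{V},\suh{\Mplus},\mathbb{L}}{}$ along one particular evaluation order. Your induction (``on the length of a reduction sequence out of the saturated state'') only covers states reachable \emph{from} $\ilam{\mathbb{V},\suh{\Mplus},\mathbb{L}}{}$. But the question is whether \emph{any} state reachable from the initial state $\ilam{\por{I}_\P,\varnothing,\addh{\varepsilon}{\P}}{}$ has a circularity, and such a state may lie on a different branch of the reduction graph. The paper closes this by the diamond property: any computation from the initial state may be completed so that the final state coincides, up to a bijective renaming of fresh names, with a state on the chosen path; since circularities are preserved under further reduction and under bijective renaming, a circularity anywhere implies a circularity in (an extension of) the chosen saturated state, and then Theorem~\ref{thm.invariance} pulls it back to $\mathbb{L}$. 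You should add this confluence step explicitly; the informative/non-informative correspondence you do cite is a different proposition and does not supply it.

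A minor point: your list of cited ingredients omits Proposition~\ref{prop.diamond} but includes Proposition~\ref{prop.flashback-order}, which is not actually needed at this level (it is used inside the proof of Theorem~\ref{thm.invariance}, not here). Otherwise your treatment of termination and of Step~4 is more detailed than the paper's own remarks and is fine.
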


The algorithm consists of the four steps described above. The critical step, as
far as correctness is concerned, is the third one, which
follows by 
Theorem~\ref{thm.invariance} and by the diamond property in 
Proposition~\ref{prop.diamond} (whatever other computation may be completed in 
such a way the final state is equal up-to a bijection to a saturated state).

As regards the %
computational complexity %
{\sc Steps} 1 and 2 are linear with respect to the size of the
lam program and {\sc Step} 4 is linear with respect to the size of the term
$\T_1 \seq \cdots \seq \T_n$. {\sc Step} 3 evaluates the
program till the saturated state. Let
\begin{description}
\item[$\ordermut{{\it max}}$] be
the largest order of a function;
\item[$m_{\it max}$] be the maximal number of function invocations
in a body, apart the one in the recursive history.
\end{description}
Without loss of generality, we assume that recursive histories have length 1 and that
the main lam consists of $m_{\it max}$ invocations of the same function. 
Then an upper bound to the length of the evaluation till the saturated state is
\[
(2 \times \ordermut{{\it max}} \times m_{\it max}) + (2 \times
\ordermut{{\it max}} \times m_{\it max})^2 + \cdots +  (2 \times
\ordermut{{\it max}} \times m_{\it max})^\ell
\]
Let $k_{\it max}$ be the maximal number of dependency pairs in a body. Then
the size of the saturated state is $O(k_{\it max} \times 
{(\ordermut{{\it max}} \times m_{\it max})^\ell})$, which is also the computational
complexity of our algorithm.

\section{Assessments}
\label{sec.assessments}

The algorithm defined in Section~\ref{sec.thealgorithm-vero} has been 
prototyped
~\cite{DAT}.
The prototype is called {\tt DAT} (Deadlock Analysis Tool).
As anticipated in Section~\ref{sec.introduction}, our analysis 
has been applied to a concurrent object-oriented language called {\tt ABS}~\cite{ABS}, which is a {\sc Java}-like language with futures and an asynchronous concurrency model ({\tt ASP}~\cite{ASP} is another language in the same family). 
The derivation of lams from {\tt ABS} programs is defined by
an \emph{inference system} that has been 
previously developed for {\tt SDA}~\cite{GL2013a}, an integrated deadlock analyzer 
of the {\tt ABS} tool suite. %
The inference system extracts behavioral types from {\tt ABS} programs and feeds them to the analyzer. These types display 
the resource dependencies and the method invocations while discarding irrelevant (for the
deadlock analysis) details. There are two relevant differences between inferred types
and lams: (i) methods' arguments have a record structure and (ii) behavioral types
have the union operator (for modeling the if-then-else statement).
To bridge this gap and have some initial assessments, we perform a basic
\emph{automatic} transformation of types into lams.  

We tested {\tt DAT} on a number of 
medium-size programs written for
benchmarking purposes by {\tt ABS} programmers and on 
an industrial case study based on the Fredhopper Access Server (FAS) developed by SDL Fredhoppper~\cite{FAS}. This Access Server 
provides search and merchandising IT services to e-Commerce companies.
The (leftmost three columns of the) following table reports the experiments: for every program we display the
number of lines, whether the analysis has reported a deadlock ({\tt D}) or not
($\checkmark$), the time in seconds required for the analysis.
Concerning time, we only report the time of the analysis (and not 
the one taken by the inference) 
when they run on a QuadCore 2.4GHz and Gentoo (Kernel 3.4.9):
{\small
\begin{center}
\begin{tabular}{||l||c|@{\quad}c@{\quad}|@{\quad}c@{\quad}|@{\quad}c@{\quad}||}
\hline
\quad program \quad & lines &  \begin{tabular}{c} {\tt DAT} \\ result \ time 
\end{tabular} 
& \begin{tabular}{c} {\tt SDA} \\ result \ time \end{tabular} & \begin{tabular}{c} {\tt DECO} \\
result \ time \end{tabular}
\\ \hline
{\tt PingPong} & 61 &  \checkmark \quad 0.311 & \checkmark \quad 0.046 & \checkmark \quad 1.30 
\\ \hline
{\tt MultiPingPong} & 88  & {\tt D} \quad 0.209 & {\tt D} \quad 0.109 & {\tt D} \quad 1.43 
\\ \hline
{\tt BoundedBuffer}  \quad &  103  & \checkmark \quad 0.126  & \checkmark \quad 0.353 & \checkmark \quad 1.26
\\ \hline
{\tt PeerToPeer}    &   185 & \checkmark \quad 0.320 & \checkmark \quad 6.070 & \checkmark \quad 1.63
\\ \hline \hline
{\tt FAS Module} & 2645 & \checkmark \quad 31.88 & \checkmark \quad 39.78 & \checkmark \quad 4.38
\\
\hline
\end{tabular}
\end{center}
}

The rightmost two columns of the above table reports the results of two other
tools that have also been developed for the deadlock analysis of {\tt ABS} 
programs: 
{\tt SDA}~\cite{GL2013a} and {\tt DECO}~\cite{Antonio2013}. 
The technique used in~\cite{GL2013a} (that underpins the {\tt SDA} tool) derives 
the dependency graph(s) of lam programs
by means of a standard fixpoint analysis. 
To circumvent the issue of the 
infinite generation of new names, the fixpoint is computed on models with a 
limited capacity of name creation. This introduces overapproximations that in turn
display 
false positives (for example, {\tt SDA} returns a false positive for the 
lam of {\tt factorial}). In the present work, this
limitation of finite models is overcome (for linear recursive programs) 
by recognizing patterns of 
recursive behaviors, so that it is possible to 
reduce the analysis to a finite portion of computation without losing precision in 
the detection of deadlocks.
The technique in~\cite{Antonio2013} integrates a point-to analysis with 
an analysis returning (an over-approximation of) program points that may be running 
in parallel. As for other model checking techniques, the authors use 
a finite amount of (abstract) object names to ensure termination
of programs with object creations
underneath iteration or recursion. For example, {\tt DECO} 
 (as well as {\tt SDA}) signals a deadlock in programs containing 
methods whose lam is
\footnote{The code of a corresponding {\tt ABS} program is available at the {\tt DAT} tool website~\cite{DAT}, \emph{c.f.} {\tt UglyChain.abs}.}
$
{\tt m}(x,y) =  (y,x)\sparop {\tt m}(z,x) \; 
$
that our technique correctly recognizes as deadlock-free.

As highlighted by the above table, the three tools return the 
same results as regards deadlock analysis, but are different as
regards performance. In particular {\tt DAT} and {\tt SDA} are comparable on 
small/mid-size programs,  {\tt DECO} appears less performant (except 
for {\tt PeerToPeer}, where 
{\tt SDA} is quite slow because of
the number of dependencies produced by the fixpoint algorithm). 
On the {\tt FAS module},
{\tt DAT} and {\tt SDA} are again comparable -- their computational complexity is
exponential -- {\tt DECO} is more performant because its worst case complexity is
cubic in the dimension of the input. As we discuss above, this gain in performance is payed by {\tt DECO} in a loss of precision.

Our final remark is about the proportion between linear recursive functions 
and nonlinear ones in programs. This is hard to assess and our answer is
perhaps not enough adequate. We have parsed the three case-studies developed in the
European project HATS~\cite{FAS}. The case studies are the FAS module, 
a Trading System (TS) modeling a supermarket handling sales, and a Virtual Office 
of the Future (VOF) where office workers are enabled to perform their office tasks seamlessly independent of their current location. 
FAS has 2645 code-lines, TS has 1238 code-lines, and VOF has 429 code-lines. In none of
them we found a nonlinear recursion, TS and VOF have respectively 2 and 3 
linear recursions (there are
recursions in functions on data-type values that have nothing to do with locks and
control). This substantiates the usefulness of our technique in these programs;
the analysis of a wider range of programs is matter of future work.  

\section{Related works}
\label{sec.relatedworksserious}

The solutions in the literature for deadlock detection in infinite state programs 
either give imprecise answers or do not scale when, for instance, programs also admit 
dynamic resource creation.
Two basic techniques are used: type-checking and 
model-checking.

Type-based deadlock analysis has been extensively studied both for process 
calculi~\cite{Kobayashi06,Suenaga08,Vasconcelos2009} and for
object-oriented programs~\cite{Boyapati2002,Flanagan03,Abadi2006}. 
In Section~\ref{sec.introduction} 
we have thoroughly discussed our position with respect to 
Kobayashi's works; therefore we omit here any additional comment.
In the other contributions about deadlock analysis, a type system computes a 
partial order of 
the deadlocks in a program and a subject reduction theorem proves that 
tasks follow this order. On the contrary, our technique does not
compute any ordering of deadlocks, thus being more flexible: a
computation may acquire two deadlocks in different order at different 
stages, thus being correct in our case, but incorrect with the other techniques.
A further difference with the above works is that we use behavioral types, which
are terms in some simple process algebras~\cite{LP}. The use of simple
process algebras to guarantee the correctness (= deadlock freedom) of
interacting parties is not new. This is the case of the exchange patterns in 
 {\sc ssdl}~\cite{SSDLMEP},
 which are based on CSP~\cite{CSP} and pi-calculus~\cite{PIC}, of session types~\cite{GayN06},
or of 
the terms 
in~\cite{NIELSON} and~\cite{TYPESMODELS}, which use CCS~\cite{CCS}. In these proposals,
the deadlock freedom follows by checking either a dual-type relation or a behavioral
equivalence, which amounts to model checking deadlock freedom on the types. 

As regards model checking techniques, in~\cite{CM97} circular dependencies among processes are detected as erroneous configurations, but dynamic creation of names is not treated. 
An alternative model checking technique is proposed in~\cite{BouajjaniE12} for
multi-threaded asynchronous communication languages with futures (as {\tt ABS}). 
This technique is based on vector systems and addresses infinite-state 
programs that admit 
thread  creation but not dynamic resource creation. 

The problem of verifying deadlocks in infinite state models has been 
studied in other contributions. For example, \cite{Schroter00} compare a
 number of unfolding algorithms for Petri Nets
with techniques for safely cutting potentially infinite unfoldings. 
Also in this work, dynamic resource creation is not addressed.
The techniques conceived for dealing with dynamic name creations are the so-called
\emph{nominal techniques}, such as nominal automata~\cite{Segoufin06,NevenSV01}
that recognize languages over infinite alphabets  and  
HD-automata~\cite{MontanariPistore}, where names are explicit part of the
 operational model.
In contrast to our approach, the models underlying these techniques are finite state. 
Additionally, the dependency relation between names, which is crucial for 
deadlock detection, is not studied.

\section{Conclusions and future work}
\label{sec.relatedworks}

We have defined an algorithm for the detection of deadlocks in infinite 
state programs, which is a decision procedure for linear recursive programs that 
feature dynamic resource creation. 
This algorithm has been prototyped~\cite{DAT} and currently experimented on programs 
written in an object-oriented language with futures~\cite{ABS}.
The current prototype deals with nonlinear recursive programs by 
using a source-to-source transformation into linear ones. This transformation
\emph{may introduce 
fake dependencies} (which in turn may produce false positives in terms of 
circularities). To briefly illustrate the technique, consider the program

\smallskip

$\qquad\bigl( \, \PP(t) =  (t,x) \sparop (t,y) \sparop \PP(x) \sparop \PP(y)\; , \; 
\PP(u) \; \bigr)$,

\smallskip

\noindent 
Our transformation returns the linear recursive one:

\smallskip

$\begin{array}{l}
\bigl( \, 
\PP^{\it aux}(t,t') =  (t,x)\sparop (t,x') \sparop (t',x)\sparop
(t',x') \sparop \PP^{\it aux}(x,x')  \; ,
\\
\; \; \PP(u) = \PP^{\it aux}(u,u)   \; ,
\; \; \PP(u) \; \; \bigr) \; . 
\end{array}$

\smallskip

To highlight the fake dependencies added by $\PP^{\it aux}$,
we notice that,
after two unfoldings,
$\PP^{\it aux}(u,u)$  gives

\smallskip

$\begin{array}{l}
(u,v) \sparop (u,w) \sparop  (v,v')\sparop
(v,w')\sparop (w,v')\sparop
(w,w') 
\sparop \PP^{\it aux}(v',w') 
\end{array}$

\smallskip

\noindent while $\PP(u)$ has a corresponding state (obtained after four steps)

\smallskip

$\begin{array}{l}
(u,v) \sparop (u,w) \sparop  (v,v') \sparop (v,v'')\sparop  (w,w') \sparop (w,w'')
\\
\sparop \PP(v') \sparop \PP(v'') 
\sparop
\PP(w') \sparop \PP(w'') \; ,
\end{array}
$

\smallskip

\noindent
and this state has no dependency between names created by different invocations. 
It is worth to remark that these additional dependencies cannot be completely
eliminated because of a cardinality argument. The evaluation of a function invocation 
$\M(\wt{u})$ in a linear recursive program may produce at most one invocation of $\M$, 
while an invocation of $\M(\wt{u})$ in a nonlinear recursive program may produce two or
more. In turn, these invocations of $\M$ may create names (which are exponentially 
many in a nonlinear program).
When this happens, the creations of different invocations must be
\emph{contracted} to names created by
one invocation and explicit dependencies must be \emph{added} to account for
dependencies of each invocation.
[Our source-to-source transformation is sound: if the transformed
linear recursive program is circularity-free then the original nonlinear one is also 
circularity-free. 
So, for example,
since our analysis 
lets us determine that the saturated state of
$\PP^{\it aux}$ is circularity-free, then we are able to infer the same property
for $\PP$.]
We are exploring possible generalizations of our theory in 
Section~\ref{sec.algorithm} to nonlinear recursive programs that replace the
notion of mutation with that of \emph{group of mutations}. 
This research direction is currently at an early stage. 

Another obvious research direction is to apply our technique 
to deadlocks due to process synchronizations, as those in 
process calculi~\cite{PIC,Kobayashi06}. In this case, one may take advantage of
 Kobayashi's
inference for deriving inter-channel dependency informations and manage 
recursive behaviors by using our algorithm (instead of the one in~\cite{Typicaltool}).

There are several ways to develop the ideas here, both in terms of the language 
features of lams and the analyses addressed. As regards the lam language, 
\cite{GL2013a} already contains an extension of lams with union types to deal with assignments, data structures, and conditionals.
However, the extension of the theory of mutations and flashbacks to deal with 
these features is not trivial and may yield a weakening of 
Theorem~\ref{thm.invariance}.
Concerning the analyses, the theory of mutations and flashbacks 
may be
applied for verifying properties different than deadlocks, such as
state reachability or livelocks, 
possibly using different lam languages and different notions of
saturated state. Investigating the range of applications
of our theory and studying the related models (corresponding to lams) are two
issues that we intend to pursue.

\bibliographystyle{abbrv}

\bibliography{mfd}

\newpage

\appendix

\section{Java code of the factorial function}
\label{sec.Java}
There are several 
{\sc Java} programs implementing {\tt factorial} in Section~\ref{sec.introduction}. However our goal is to convey some intuition about the differences between
{\sc TyPiCal} and our technique, rather than to analyze the possible options. One option 
is the code

 \begin{lstlisting}[numbers=none]
synchronized void fact(final int n, final int m, final Maths x) 
                                    throws InterruptedException {
        if (n==0) x.retresult(m) ;
        else {
            final Maths y = new Maths() ;
            Thread t = new Thread(new Runnable() {
                public void run() {
                    try { y.fact(n-1,n*m,x) ;
                    } catch (InterruptedException e) { }
                }   }) ;
            t.start();
            t.join() ;
        }
    }
 \end{lstlisting}
Since {\tt factorial} is {\tt synchronized}, the corresponding thread
acquires the lock of its object -- let it be ${\it this}$ -- before
execution and releases the lock upon termination. We notice that 
{\tt factorial}, in case {\tt n>0}, delegates the computation of factorial to
a separate thread on a new object of {\tt Maths}, called $y$. This means that no other 
synchronized thread on ${\it this}$ may be scheduled until the
recursive invocation on $y$ terminates. Said formally, the runtime Java configuration
contains an object dependency $({\it this}, y)$. Repeating this argument for the
recursive invocation, we get configurations with 
chains of dependencies $({\it this}, y), (y,z), \cdots$,
which are finite by the well-foundedness of naturals.

\section{Proof of Theorem~\ref{thm.invariance}.}
\label{sec.thetheory-nfp}

This section develops the technical details for proving Theorem~\ref{thm.invariance}.

\begin{definition}
A history $\alpha$ is 
\begin{description}
\item[$\M$-\emph{yielding}] 
\ \\
if
$\alpha = \alpha_1^{h_1}\beta_1 \cdots 
\alpha_n^{h_n} \beta_n$ such that, for every $i$, 
$\alpha_i$ is a recursive history, $\beta_i \preceq \alpha_i$, 
and $\alpha = \alpha' \M_i$ implies the program has the definition
$\M_i(\wt{x}_i) = \CP[\M(\wt{u}) ]$, for some $\wt{u}$.
The \emph{kernel} of $\alpha$, denoted $[\alpha]$, is 
$\alpha_1^{h_1'}\beta_1 \cdots \alpha_{n}^{h_n'}\beta_{n}$,
where $h_i' = {\tt min}(h_i, 1)$.
\end{description}
\end{definition}

By definition, if $\alpha$ is $\M$-saturating then it is also 
$\M$-yielding. In this case, the kernel $[\alpha]$ has a suffix that
 is $\M$-complete. 
In the $\M\Q\PP$-program,  $\ordermut{\M}=4$,
$\ordermut{\Q}=3$, and $\ordermut{\PP}=1$, and
the recursive histories of $\M$, $\Q$ and $\PP$ are equal to $\M\Q$, to 
$\Q\M$ and to $\PP$, respectively.
Then $\alpha= (\M\Q)^{4}$ is the $\M$-complete history and 
$\alpha'= \PP^2\M$ is $\Q$-yielding, with $[\alpha']=\PP\M$.

We notice that every history of an informative
lam (obtained by evaluating $\ilam{\por{I}_{\P},\,\varnothing ,\, 
\addh{\varepsilon}{\P}}{}$) is a yielding sequence.
We also notice that, for every $\M$, $\varepsilon$ is $\M$-yielding.
In fact, $\varepsilon$ is the history of every function invocation in the
initial lam, which may concern every function name of the program.
As regards the kernel, 
in Lemma~\ref{lem.flashback}, we 
demonstrate that, 
if $\alpha = \alpha_1^{h_1}\beta_1 \cdots %
\alpha_n^{h_n} \beta_n$ is a $\M$-yielding history such that every $h_i \geq 2$,
then every term 
$\suhs{\alpha}{\M(\wt{u})}$ may be mapped by a flashback $\rho$ to a term
$\suhs{[\alpha]}{\M(\rho(\wt{u}))}$; similarly for dependencies. This is the
basic property that allows us to map circularities to past circularities
 (see Theorem~\ref{thm.invariance}). 

Next we introduce an ordering relation over renamings, (in particular, flashbacks) 
and the operation of renaming composition. The definitions are almost standard:
\begin{itemize}
\item 
$\rho \lessfb \rho'$ if, for every $x \in \dom{\rho}$, $\rho(x) = \rho'(x)$.
\item
$\rho  {\scriptstyle \, \circ \,} \rho'$ be defined as follows: 

\smallskip

$\qquad(\rho {\scriptstyle \, \circ \,} \rho')(x) \eqdef 
\left\{ \begin{array}{l@{\qquad}l}
		\rho'(x) & {\it if} \;  \rho'(x) \notin \dom{\rho}
		\\ 
		\rho(\rho'(x)) & {\it otherwise} 
		\end{array} \right.$
\end{itemize}
We notice that, if both
\begin{enumerate}
\item
 $\rho$ and $\rho'$ are flashbacks and 
\item
for every $x \in \dom{\rho}$,  
$\rho'(x) = x$
\end{enumerate}
then $\rho \lessfb \rho {\scriptstyle \, \circ \,} \rho'$ holds.
In the following, lams $\flatt{\P}$ and $\flatt{\mathbb{L}}$, being $\seq$ of
terms that are dependencies composed with $\sparop$, will be written 
$\T_1 \seq \cdots \seq \T_m$ and
$\suh{\T_1} \seq \cdots \seq \suh{\T_m}$, for some $m$, 
respectively, where $\T_i$ and $\suh{\T_i}$ contain dependencies $(x,y)$ and 
$\suhs{\alpha}{(x,y)}$. Let also $\rho(\prod_{i \in I} (x_i,y_i)) = 
\prod_{i \in I}(\rho(x_i), \rho(y_i))$.

With an abuse of notation, we will use the set operation ``$\in$''
for $\P$ and $\suh{\P}$. For instance, we will write $\P' \in \P$  when there is
$\CP[~]$ such that $\P = 
\CP[\P']$. Similarly, we will write $\T \in \T_1 \seq \cdots \seq \T_n$ when there
is $\T_i$ such that $\T \in \T_i$.

A consequence of the
axiom $\T \sparop (\P' \seq \P'')  = \T\sparop \P' \seq \T \sparop \P''$ is 
the following property of the informative operational semantics.

\begin{proposition}
\label{prop.storie-insieme}
Let $\ilam{\por{V}_1,\suh{\Mplus}, \suh{\CP}_0[\suhs{\alpha}{\M_1(\wt{u_1})]}}{}$
be a state of an informative operational semantics. For every $1 \leq i 
\leq n$, let $\M_i(\wt{u_i}) = \P_i'$ and $\addh{\alpha\M_0 \cdots \M_i}{\P_i'}$ be 
$\suh{\CP}_i[\suhs{\alpha\M_1 \cdots \M_i}{\M_{i+1}(\wt{u_{i+1}})}]$. 
Finally, let

\smallskip

$\begin{array}{rl}
\flatt{\suh{\CP}_1[\cdots \suh{\CP}_n[\suhs{\alpha\M_1 \cdots \M_n}{\M_{n+1}(\wt{u_{n+1}})}] \cdots]} = &
\suh{\T}_1 \seq \cdots \seq \suh{\T}_r
\\
\flatt{\suh{\CP}_n[\suhs{\alpha\M_1 \cdots \M_n}{\M_{n+1}(\wt{u_{n+1}})}]} = 
& \suh{\T}_1' \seq \cdots \seq \suh{\T}_{r'}'\; .
\end{array}$

\smallskip

If  \ $\suhs{\alpha\M_1 \cdots \M_n}{(x,y)} \sparop 
\addh{\alpha'}{\T} \in \suh{\T}_1 \seq \cdots \seq \suh{\T}_r$ 
then, for every $1 \leq j \leq r'$, 
$\suh{\T}_{j}' \sparop \addh{\alpha'}{\T} \in \suh{\T}_1 \seq \cdots \seq \suh{\T}_r$.
\end{proposition}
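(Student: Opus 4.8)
\emph{(Plan.)} The statement is a bookkeeping consequence of the axiom $\T \sparop (\P' \seq \P'') = \T \sparop \P' \seq \T \sparop \P''$, together with the fact that this axiom is restricted only to \emph{function-invocation-free} terms while $\flatt{\cdot}$ deletes every function invocation; hence, once we pass to flattenings, that axiom and the idempotence law $\T \sparop \T = \T$ apply with no side condition. By Proposition~\ref{prop.normalform} the flattening of any informative lam is a $\seq$-composition of $\sparop$-compositions of labelled dependencies, which I call \emph{$\sparop$-blocks}; throughout I use the paper's relation ``$\in$'' (occurrence as a sub-$\sparop$-term, extended to $\seq$-compositions).

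\textbf{Step 1: distributing a context over its hole filler.} By induction on a history-labelled context $\suh{\CP}[~]$ I would produce a set $\mathcal{S}_{\suh{\CP}}$ of $\sparop$-blocks, built only from the dependencies occurring in $\suh{\CP}$ and hence \emph{independent of the hole filler}, such that for every informative lam $\mathbb{K}$ the $\sparop$-blocks of $\flatt{\suh{\CP}[\mathbb{K}]}$ that contain some dependency of $\flatt{\mathbb{K}}$ are exactly the blocks $S \sparop D$ with $S \in \mathcal{S}_{\suh{\CP}}$ and $D$ a $\sparop$-block of $\flatt{\mathbb{K}}$. The base case $[~]$ is immediate (with $\mathcal{S}_{[~]}=\{\pinull\}$); the case $\P' \seq \suh{\CP}'[~]$ uses $\flatt{\P' \seq \suh{\CP}'[\mathbb{K}]} = \flatt{\P'} \seq \flatt{\suh{\CP}'[\mathbb{K}]}$ and keeps $\mathcal{S}_{\suh{\CP}} = \mathcal{S}_{\suh{\CP}'}$; the case $\P' \sparop \suh{\CP}'[~]$ uses $\flatt{\P' \sparop \suh{\CP}'[\mathbb{K}]} = \flatt{\P'} \sparop \flatt{\suh{\CP}'[\mathbb{K}]}$, where both factors are now function-invocation-free, so repeated use of distributivity (with commutativity, associativity and idempotence) turns the $\sparop$ of two $\seq$-collections into the $\seq$-collection of all pairwise $\sparop$-combinations of their blocks, whence $\mathcal{S}_{\suh{\CP}} = \{\, A \sparop S \ :\ A \text{ a block of } \flatt{\P'},\ S \in \mathcal{S}_{\suh{\CP}'} \,\}$. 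The property I need is the independence of $\mathcal{S}_{\suh{\CP}}$ from $\mathbb{K}$: every $\sparop$-block of $\flatt{\mathbb{K}}$ is combined with every member of $\mathcal{S}_{\suh{\CP}}$.

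\textbf{Steps 2--3: instantiate, read off histories, conclude.} Put $\mathbb{M} \eqdef \suh{\CP}_n[\suhs{\alpha\M_1\cdots\M_n}{\M_{n+1}(\wt{u_{n+1}})}]$, so $\flatt{\mathbb{M}} = \suh{\T}_1' \seq \cdots \seq \suh{\T}_{r'}'$, and $\suh{\CP}[~] \eqdef \suh{\CP}_1[\cdots \suh{\CP}_{n-1}[~] \cdots]$, so $\flatt{\suh{\CP}[\mathbb{M}]} = \suh{\T}_1 \seq \cdots \seq \suh{\T}_r$. Step 1 with $\mathbb{K}=\mathbb{M}$ gives a set $\mathcal{S}$ such that (i) every $S \sparop \suh{\T}_j'$ with $S \in \mathcal{S}$, $1 \le j \le r'$, occurs in $\suh{\T}_1\seq\cdots\seq\suh{\T}_r$, and (ii) every $\sparop$-block of $\suh{\T}_1\seq\cdots\seq\suh{\T}_r$ containing a dependency of $\flatt{\mathbb{M}}$ has this shape. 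By construction (the hypotheses of the proposition) $\suh{\CP}_i$ labels its dependencies with $\alpha\M_1\cdots\M_i$, so every dependency of $\suh{\CP}$ — hence every dependency in a block of $\mathcal{S}$ — carries a history that is a \emph{proper prefix} of $\alpha\M_1\cdots\M_n$, whereas every dependency of $\flatt{\mathbb{M}}$ carries $\alpha\M_1\cdots\M_n$ itself; in particular $\suhs{\alpha\M_1\cdots\M_n}{(x,y)}$ is a dependency of $\flatt{\mathbb{M}}$, while the pairs of $\addh{\alpha'}{\T}$, carrying the single surrounding history $\alpha' \in \{\alpha\M_1,\dots,\alpha\M_1\cdots\M_{n-1}\}$, are dependencies of $\suh{\CP}$. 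Now by hypothesis some $\seq$-factor $\suh{\T}_i$ satisfies $\suhs{\alpha\M_1\cdots\M_n}{(x,y)} \sparop \addh{\alpha'}{\T} \in \suh{\T}_i$; since $\suh{\T}_i$ contains the $\flatt{\mathbb{M}}$-dependency $\suhs{\alpha\M_1\cdots\M_n}{(x,y)}$, part (ii) forces $\suh{\T}_i = S \sparop \suh{\T}_{j_0}'$ for some $S \in \mathcal{S}$ and $j_0$, and, the histories $\alpha\M_1\cdots\M_n$ and $\alpha'$ being distinct with $\suh{\T}_{j_0}'$ carrying only the former and $S$ only prefixes of it, we get $\suhs{\alpha\M_1\cdots\M_n}{(x,y)} \in \suh{\T}_{j_0}'$ and $\addh{\alpha'}{\T} \in S$. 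Finally, fix any $1 \le j \le r'$: by (i) the block $S \sparop \suh{\T}_j'$ occurs in $\suh{\T}_1\seq\cdots\seq\suh{\T}_r$, and $\suh{\T}_j' \sparop \addh{\alpha'}{\T} \in S \sparop \suh{\T}_j'$ because $\addh{\alpha'}{\T} \in S$, so $\suh{\T}_j' \sparop \addh{\alpha'}{\T} \in \suh{\T}_1\seq\cdots\seq\suh{\T}_r$.

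\textbf{The main obstacle.} The delicate part is Step 1: deriving from the lam axioms alone that the $\sparop$ of two flattened terms equals the $\seq$ of all pairwise $\sparop$-combinations of their blocks, and — above all — that the ``outer'' set $\mathcal{S}_{\suh{\CP}}$ is genuinely independent of the hole filler; this independence is exactly what allows the single collection $\addh{\alpha'}{\T}$ to accompany \emph{every} block $\suh{\T}_j'$ of $\flatt{\mathbb{M}}$. The history bookkeeping in Steps 2--3 is then routine, the only additional point being that if $\alpha'$ were the deepest history $\alpha\M_1\cdots\M_n$ rather than a surrounding one, $\addh{\alpha'}{\T}$ would sit inside a single block of $\flatt{\mathbb{M}}$, a case that would have to be handled on its own.
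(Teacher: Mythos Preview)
The paper does not give a proof of this proposition at all: it simply introduces it with the sentence ``A consequence of the axiom $\T \sparop (\P' \seq \P'')  = \T\sparop \P' \seq \T \sparop \P''$ is the following property\ldots'' and leaves it at that. Your argument is exactly the fleshing-out of that one-line remark: the inductive construction of $\mathcal{S}_{\suh{\CP}}$ in Step~1 is precisely the repeated use of distributivity (after flattening kills the side condition ``$\T$ is function-invocation-free''), and the history bookkeeping in Steps~2--3 is the routine way to locate $\addh{\alpha'}{\T}$ in the outer factor $S$ rather than in the inner block $\suh{\T}_{j_0}'$. So your approach \emph{is} the paper's approach, just made explicit.

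Your closing caveat about the case $\alpha' = \alpha\M_1\cdots\M_n$ is well taken and is not merely cosmetic: as literally stated, the proposition can fail in that case. For instance, with $n=1$ and $\P_1'$ of shape $(a,b) \seq (c,d)$, one has $\suh{\T}_1' = \suhs{\alpha\M_1}{(a,b)}$, $\suh{\T}_2' = \suhs{\alpha\M_1}{(c,d)}$, the hypothesis holds with $(x,y)=(a,b)$ and $\T=(a,b)$, yet $\suh{\T}_2' \sparop \suhs{\alpha\M_1}{(a,b)}$ is not a sub-$\sparop$-term of any block. The paper's uses of the proposition (in the proof of Lemma~\ref{lem.flashback}) always have $\alpha'$ a \emph{surrounding} history, so the intended statement is the one you actually prove; you might simply add the hypothesis $\alpha' \precneq \alpha\M_1\cdots\M_n$ and note that this is how the proposition is applied.
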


The next lemma allows us to map, through a flashback, terms in a saturated 
state to terms that have been produced in the past. The correspondence is defined
by means of the (regular) structure of histories. 

\begin{lemma}
\label{lem.flashback}
Let $\ilam{\por{I}_{\P},\,\varnothing ,\, \addh{\varepsilon}{\P}
}{} \lred{}^* \ilam{\mathbb{V},\,\suh{\Mplus}, \, \mathbb{L}}{}$ 
and $\ilam{\mathbb{V}, \,\suh{\Mplus} ,\, \mathbb{L}}{}$ 
be saturated and 
$\flatt{\mathbb{L}}=\suh{\T}_1\seq\cdots\seq\suh{\T}_m$. Then
\begin{enumerate}
\item
if $\suhs{\beta\alpha^{n+2}\beta'}{\M(\wt{u})}\in\mathbb{L}$, where 
$\beta\alpha^{n+2}\beta'$ is $\M$-yielding, then there
are $n+1$ $\mathbb{V}$-flashbacks $\rho_{\beta,\alpha,\beta'}^{(2)}, \cdots  
, \rho_{\beta,\alpha,\beta'}^{(n+2)}$ such that:
\begin{enumerate}
\item
$\suhs{\beta \alpha^{n+1}\beta'}{\M(\rho_{\beta,\alpha,\beta'}^{(n+2)}(\wt{u}))} \in 
\suh{\Mplus}$;
\item
$\prod_{j\in J}\addh{\beta \alpha^{k+1}\beta_j}{\T_j'} \in \suh{\T}_1\seq\cdots\seq
\suh{\T}_m$ where, 
for every $j$, $\beta_j \preceq \alpha$, implies 
$\prod_{j\in J}\addh{\beta \alpha^{k}\beta_j}{\rho_{\beta,\alpha,\beta'}^{(k+1)}(\T_j')} 
\in \suh{\T}_1\seq\cdots\seq\suh{\T}_m$;
\item
$\suhs{\beta \alpha^{k+1}\beta'}{\M(\wt{u})} \in \suh{\Mplus}$ 
 implies 
$\suhs{\beta \alpha^{k}\beta'}{\M(\rho_{\beta,\alpha,\beta'}^{(k+1)}(\wt{u}))} \in 
\suh{\Mplus}$.
\end{enumerate}

\item 
if $\alpha_1, \cdots, \alpha_k$ are  %
 $\M_1$-yielding, $\cdots$, $\M_k$-yielding, respectively,
  then there are flashbacks $\rho_{\alpha_1}, \cdots, \rho_{\alpha_k}$ 
such that
\begin{enumerate}
\item
if $\suhs{\alpha_1}{\M_1(\wt{u})} \in \mathbb{L}$ or 
$\suhs{\alpha_1}{\M_1(\wt{u})} \in  \suh{\Mplus}$ 
then $\suhs{[\alpha_1]}{\M(\rho_{\alpha_1}(\wt{u}))} \in \suh{\Mplus}$;
\item
if $\prod_{1 \leq j \leq k}\addh{\alpha_j}{\T_j} \in \suh{\T}_1\seq\cdots\seq\suh{\T}_m$ 
then \\
$\prod_{1 \leq j \leq k}\addh{[\alpha_j]}{\rho_{\alpha_j}(\T)} \in \suh{\T}_1\seq\cdots\seq\suh{\T}_m$;

\item
if $\alpha_1 \preceq \alpha_2$ then $\rho_{\alpha_1} \lessfb \rho_{\alpha_2}$.

(In particular, if $\alpha_1 = \beta \alpha^{n+2}\beta'$, with $\beta' \preceq 
\alpha$, and $\alpha_2 = \beta \alpha^{n+3}$ then
$\rho_{\alpha_1} \lessfb \rho_{\alpha_2}$).
\end{enumerate}
\end{enumerate}
\end{lemma}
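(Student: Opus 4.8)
The two items are proved together: Item~1 is the engine and Item~2 its iterated, several-histories consequence. The only substantial ingredient is Proposition~\ref{prop.flashback-order}, which supplies the atomic move --- when a function is unfolded over one complete block of its recursive history, the resulting invocation is the $\rho$-image of an earlier invocation and, crucially, the dependency pairs produced in that block are $\rho$-images of dependency pairs produced earlier. For Item~1, since $\beta\alpha^{n+2}\beta'$ is $\M$-yielding, $\alpha$ is a complete history, hence each of the $n+1$ outermost copies of $\alpha$ is exactly such a block; applying Proposition~\ref{prop.flashback-order} at the block boundary $\beta\alpha^{k+1}$, for $k=1,\dots,n+1$, gives the $\mathbb{V}$-flashback $\rho_{\beta,\alpha,\beta'}^{(k+1)}$ of clause~(a) together with the dependency- and invocation-correspondences of clauses~(b) and~(c) for that value of $k$ (each map is a $\mathbb{V}$-flashback because $\mathbb{V}$ contains every partial order reached along the computation). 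One auxiliary step carries the flashback past the suffix $\beta'$, where $\beta'\preceq\alpha$: I would continue the evaluation along $\beta'$ in parallel from the current invocation and from the one it flashes back to; the two computations unfold the very same functions in the very same order, so by the diamond property of the semantics (Proposition~\ref{prop.diamond}) --- a confluence up to bijective renaming of fresh names --- the flashback extends verbatim to the invocations and dependencies created along $\beta'$.

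For Item~2, given $\M_j$-yielding histories $\alpha_1,\dots,\alpha_k$, I would take $\rho_{\alpha_j}$ to be the composition of the Item~1 flashbacks, applied blockwise so as to collapse every maximal power $\gamma^{h}$ with $h\geq 2$ occurring in $\alpha_j$ down to $\gamma$, that is, down to the kernel $[\alpha_j]$. Clauses~(a) and~(b) of Item~2 then follow from clauses~(a)--(c) of Item~1, using that a composition of flashbacks is again a flashback (the partial orders are nested) and that $\flatt{\cdot}$ distributes over $\sparop$ and $\seq$. The point requiring care is that the statement is \emph{simultaneous}: a product $\prod_{j}\addh{\alpha_j}{\T_j}$ that lives inside a single $\seq$-component of $\flatt{\mathbb{L}}$ must be mapped into a single $\seq$-component. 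This is exactly the co-existence that Proposition~\ref{prop.storie-insieme} propagates across a reduction step, provided the flashbacks $\rho_{\alpha_1},\dots,\rho_{\alpha_k}$ agree on the overlaps of their domains --- which is precisely clause~(c).

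Clause~(c), namely $\alpha_1\preceq\alpha_2$ implies $\rho_{\alpha_1}\lessfb\rho_{\alpha_2}$, is where the real work lies and is the step I expect to be the main obstacle. The observation to exploit is that the flashback delivered by Proposition~\ref{prop.flashback-order} is \emph{canonical}: it is the map that reads off, position by position, the arguments of the last invocation against those of the invocation it returns to, so it depends only on the local unfolding pattern of that block --- the mutation of the function and the fresh names created there --- and not on the history that precedes or follows the block. Consequently the family of flashbacks built for the prefix $\alpha_1$ is an initial segment of the one built for $\alpha_2$, and the further flashbacks occurring in $\rho_{\alpha_2}$ act as the identity on the names in $\dom{\rho_{\alpha_1}}$, i.e.\ on the names created during the $\alpha_1$-portion of the run; the auxiliary fact recorded just before Proposition~\ref{prop.storie-insieme} --- if $\rho,\rho'$ are flashbacks and $\rho'$ is the identity on $\dom{\rho}$ then $\rho\lessfb\rho\circ\rho'$ --- then gives $\rho_{\alpha_1}\lessfb\rho_{\alpha_2}$ by induction on the number of collapsed blocks. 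Making this canonicity precise, in particular tracking which fresh names are created in which block and checking that the read-off maps are genuinely context-insensitive, is the delicate part; everything else is bookkeeping on top of Propositions~\ref{prop.flashback-order},~\ref{prop.diamond} and~\ref{prop.storie-insieme}.
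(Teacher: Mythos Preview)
Your proposal is correct and follows the paper's strategy: Item~1 via Proposition~\ref{prop.flashback-order} together with the diamond property (Proposition~\ref{prop.diamond}) and Proposition~\ref{prop.storie-insieme}; Item~2 by induction on the number of blocks in the yielding decomposition, composing the Item~1 flashbacks.

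The one place where you work harder than the paper is clause~(c). You try to argue that the Item~1 flashbacks are \emph{canonical}, so that flashbacks for later blocks are automatically the identity on names created in earlier blocks. The paper sidesteps this entirely: it simply \emph{defines} $\delta^{(2)} = \rho^{(2)}_{\beta_0,\alpha_1',\beta_1}$ and $\delta^{(i+1)} = \rho^{(i+1)}_{\beta_0,\alpha_1',\beta_1}[x \mapsto x \mid x \in \dom{\delta^{(i)}}]$, then takes $\rho_{\alpha_1} = \delta^{(2)} \circ \cdots \circ \delta^{(h_1)}$. The identity-on-earlier-domains property now holds by construction, and the auxiliary fact you quote (if $\rho'$ is the identity on $\dom{\rho}$ then $\rho \lessfb \rho\circ\rho'$) gives $\rho_{\alpha_1} \lessfb \rho_{\alpha_2}$ immediately. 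What must be checked instead is only that the modified $\delta$'s still do the job for (a) and (b), which is easy since the overwritten names were already moved by the earlier $\delta$'s. So the ``delicate part'' you flag is not proved but designed away.

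One smaller difference: in Item~1 the paper uses Proposition~\ref{prop.diamond} not to ``carry the flashback past $\beta'$'' but to transport the flashback $\varrho$ obtained along an auxiliary evaluation (one that unfolds the complete block in canonical order) back to the actual saturated state, via the conjugation $\jmath \circ \varrho \circ \jmath^{-1}$ with the bijective renaming $\jmath$ supplied by confluence. Your account is not wrong, but it places the appeal to confluence slightly differently.
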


\begin{proof} (Sketch)
As regards item 1, let  $\alpha = \beta'\beta''$ and let $\beta'' \beta' = \M \M_1 \cdots \M_m$ (therefore the length of $\alpha$ is $m+1$).
The evaluation $\ilam{\por{I}_{\P},\,\varnothing ,\, \addh{\varepsilon}{\P}
}{} \lred{}^* \ilam{\mathbb{V},\,\suh{\Mplus}, \, \mathbb{L}}{}$ may be decomposed as follows
\[
\begin{array}{rl}
\ilam{\por{I}_{\P},\,\varnothing ,\, \addh{\varepsilon}{\P}
}{} \lred{}^* &
\ilam{\mathbb{V}', \, \suh{\Mplus}', \, 
\suh{\CP}[\suhs{\beta\alpha^{n+1}\beta'}{\M(\wt{u'})}]}{}
\\
\qquad \lred{}^{*} &
\ilam{\mathbb{V}, \, \suh{\Mplus}, \, \mathbb{L}}{}
\end{array}
\] 
By definition of the operational semantics there is the \emph{alternative} evaluation
\[
\begin{array}{@{\!\!\!}l}
\ilam{\mathbb{V}', \, \suh{\Mplus}', \, 
\suh{\CP}[\suhs{\beta\alpha^{n+1}\beta'}{\M(\wt{u'})}]}{}
\\
\lred{} \ilam{\mathbb{V}'', \, \suh{\Mplus}'', \, 
\suh{\CP}[\suh{\CP}'[\suhs{\beta\alpha^{n+1}\beta'\M}{\M_1(\wt{u_{1}})}] ]}{}
 \\
\! \! \lred{}^{*}
\ilam{\mathbb{V}''', \, \suh{\Mplus}''', \, 
\suh{\CP}[ \suh{\CP}'[\suh{\CP}_1[ \cdots \suh{\CP}_m[
\suhs{\beta\alpha^{n+1}\beta'\M\M_1 \cdots \M_m}{\M(\wt{u''})}] \cdots]]]}{}
\end{array}
\] 
[notice that $\beta\alpha^{n+1}\beta'\M\M_1 \cdots \M_m = \beta\alpha^{n+2}\beta'$].
Property (1.a) is an immediate consequence of Proposition~\ref{prop.flashback-order};
let $\varrho_{\beta,\alpha,\beta'}^{(n+2)}$ be the flashback 
for the last state. The property (1.b), when $k=n$, is also an immediate consequence of 
Propositions~\ref{prop.flashback-order}
and of~\ref{prop.storie-insieme}. In the general case, we need to iterate 
the arguments on shorter histories and the arguments are similar for (1.c).
In order to conclude the proof of item 1, we need an additional argument. By Proposition~\ref{prop.diamond}, there exists an evaluation
\[
\begin{array}{l}
\ilam{\mathbb{V}''', \, \suh{\Mplus}''', \, 
\suh{\CP}[\suh{\CP}'[\suh{\CP}_1[ \cdots \suh{\CP}_m[
\suhs{\beta\alpha^{n+1}\beta'\M\M_1 \cdots \M_m}{\M(\wt{u''})}] \cdots]]]}{}
\\
\qquad \lred{}^{*} \ilam{\mathbb{V}^\sharp, \, \suh{\Mplus}^\sharp, \, 
\mathbb{L}^\sharp}{}
\end{array}
\]
such that $\ilam{\mathbb{V}^\sharp, \, \suh{\Mplus}^\sharp, \, 
\mathbb{L}^\sharp}{}$ and $\ilam{\mathbb{V}, \, \suh{\Mplus}, \, 
\mathbb{L}}{}$ are identified by a bijective renaming, let it be $\jmath$. We define
the $\rho_{\beta,\alpha,\beta'}^{(n+2)}$
corresponding to the evaluation
$\ilam{\por{I}_{\P},\,\varnothing ,\, \addh{\varepsilon}{\P}
}{} \lred{}^* \ilam{\mathbb{V},\,\suh{\Mplus}, \, \mathbb{L}}{}$
as $\rho_{\beta,\alpha,\beta'}^{(n+2)} \eqdef \jmath \circ 
\varrho_{\beta,\alpha,\beta'}^{(n+2)} \circ \jmath^{-1}$. Similarly for the other
$\rho_{\beta,\alpha,\beta'}^{(k+1)}$. The properties of item 1 for 
$\ilam{\mathbb{V}, \, \suh{\Mplus}, \, 
\mathbb{L}}{}$ follow
by the corresponding ones for
\[
\ilam{\mathbb{V}''', \, \suh{\Mplus}''', \, 
\suh{\CP}[\suh{\CP}'[\suh{\CP}_1[ \cdots \suh{\CP}_m[
\suhs{\beta\alpha^{n+1}\beta'\M\M_1 \cdots \M_m}{\M(\wt{u''})}] \cdots]]]}{} \; .
\]

We prove item 2. We observe that a term with history
$\beta_0(\alpha_1')^{h_1}$ $\beta_1 \cdots %
\beta_{n-1}
(\alpha_n')^{h_n} \beta_n$ in $\suh{\Mplus}$ or in $\mathbb{L}$ may have no
corresponding term (by a flashback) with history
$\beta_0(\alpha_1')^{h_1-1}\beta_1$ $(\alpha_2')^{h_2} \cdots \beta_{n-1}$ 
$(\alpha_n')^{h_n} \beta_n$. This is because the evaluation to the saturated state
may have not expanded some invocations. It is however true that terms with 
histories $[\beta_0(\alpha_1')^{h_1}\beta_1 \cdots 
\beta_{n-1}(\alpha_n')^{h_n} \beta_n]$ (kernels) are either in $\suh{\Mplus}$ or in 
$\mathbb{L}$ and the item 2 is demonstrated by proving that a flashback to terms
with histories that are kernels does exist. 

Let $\alpha_1 = \beta_0(\alpha_1')^{h_1}\beta_1 \cdots 
\beta_{n-1}(\alpha_n')^{h_n}\beta_n$ be a $\M$-yielding sequence. We proceed by induction on $n$.
When $n = 1$ there are two cases: $h_1 \leq 1$ and $h_1\geq 2$. In the first
case there is nothing to prove because $[\alpha] = \alpha$. When $h_1\geq 2$,
since $\alpha$ fits with the hypotheses of Item 1, there exist
$\rho_{\beta_0,\alpha_1',\beta_1}^{(2)}, \cdots ,\rho_{\beta_0,\alpha_1',\beta_1}^{(h_1)}$.  
Let $\delta_{\beta_0,\alpha_1',\beta_1}^{(2)} = \rho_{\beta_0,\alpha_1',\beta_1}^{(2)}$ and 
$\delta_{\beta_0,\alpha_1',\beta_1}^{(i+1)} = \rho_{\beta_0,\alpha_1',\beta_1}^{(i+1)}[x \mapsto x \; | \; 
x \in \dom{\delta_{\beta_0,\alpha_1',\beta_1}^{(i)}}]$.
We also let $\rho_{\alpha_1} =\delta_{\beta_0,\alpha_1',\beta_1}^{(2)} {\scriptstyle \, \circ \,} \cdots {\scriptstyle \, \circ \,} \delta_{\beta_0,\alpha_1',\beta_1}^{(h_1)}$ and we observe that, by definition of renaming composition, 
if $\alpha_1 \preceq \alpha_2$ then
$\rho_{\alpha_1} \lessfb \rho_{\alpha_2}$. 
In this case, the items 2.a and 2.b follow by item 1, 
Proposition~\ref{prop.storie-insieme} and the diamond property of Proposition~\ref{prop.diamond}.

We assume the statement holds for a generic $n$ and we prove the case $n+1$.
Let $\alpha_1 = \beta 
\beta_{n}(\alpha_{n+1}')^{h_{n+1}}\beta_{n+1}$ and $h_{n+1} >0$ (because
$[\beta_{n}(\alpha_{n+1}')^1\beta_{n+1}] = \beta_{n}\alpha_{n+1}'\beta_{n+1}$). We consider the map
\[
\rho_{\alpha_1} \eqdef \rho_{\beta} {\scriptstyle \, \circ \,} \delta_{\beta_n,\alpha_{n+1}',\beta_{n+1}}^{(2)} {\scriptstyle \, \circ \,}
\cdots {\scriptstyle \, \circ \,} \delta_{\beta_n,\alpha_{n+1}',\beta_{n+1}}^{(h_{n+1})}\]
where $\delta_{\beta_n,\alpha_{n+1}',\beta_{n+1}}^{(i)}$, $2 \leq i \leq h_{n+1}$
are defined as above.
As before, the items 2.a and 2.b follow by item 1 for 
$\delta_{\beta_n,\alpha_{n+1}',\beta_{n+1}}^{(2)} {\scriptstyle \, \circ \,}
\cdots$ ${\scriptstyle \, \circ \,} \delta_{\beta_n,\alpha_{n+1}',\beta_{n+1}}^{(h_{n+1})}$ and by
Proposition~\ref{prop.storie-insieme} and the diamond property of Proposition~\ref{prop.diamond}.  Then we apply the inductive hypothesis for $\rho_{\beta}$.
The property (2.c) $\alpha_1 \preceq \alpha_2$ implies
 $\rho_{\alpha_1} \lessfb \rho_{\alpha_2}$ is an immediate consequence of the definition.
\end{proof}

Every preliminary statement is in place for our key theorem that details the mapping
of circularities created by transitions of saturated states to past circularities.
For readability sake, we restate the theorem.

\bigskip

\noindent
{\bf Theorem~\ref{thm.invariance}.} {\em
Let $\ilam{\por{I}_{\P}, \, \varnothing, \,  \addh{\varepsilon}{\P}}{} \lred{}^* \ilam{\mathbb{V}, \, \suh{\Mplus}, \, \mathbb{L}}{}$ 
and
$\ilam{\mathbb{V}, \,  \suh{\Mplus}, \,\mathbb{L}}{}$ be a saturated state.
If
$\ilam{\mathbb{V},  \suh{\Mplus}, \,\mathbb{L}}{} \lred{}
\ilam{\mathbb{V}', \,  \suh{\Mplus}', \,\mathbb{L}'}{}$ then
\begin{enumerate}
\item
$\ilam{\mathbb{V}', \,  \suh{\Mplus}', \,\mathbb{L}'}{}$ is saturated;
\item
if $\mathbb{L}'$ has a circularity then $\mathbb{L}$ has already 
a circularity.
\end{enumerate}
}

\begin{proof}
The item 1. is an immediate consequence of Proposition~\ref{prop.flashback-order}.
We prove 2.  
Let
\begin{itemize}
\item[--]
$\mathbb{L} =  \suh{\CP}[\suhs{\alpha}{\M(\wt{u})} ]$;
 
\item[--]
$\M(\wt{u}) = \P'$

\item[--]
$\mathbb{L}' = \suh{\CP}[\addh{\alpha\M}{\P'}]$; %

\item[--]
$\flatt{\mathbb{L}}=\flatt{\suh{\CP}[\suhs{\alpha}{\M(\wt{u})} ]}=\suh{\T}_1\seq\cdots\seq\suh{\T}_p$;

\item[--]
$\flatt{\P'} = \T'_1\seq \cdots \seq \T_{p'}'$;

\item[--]
$\flatt{\mathbb{L}'}
= \suh{\T}_1''\seq\cdots\seq\suh{\T}_q''$;

\item[--]
$\suhs{\alpha_0}{(x_0,x_1)} \sparop \cdots \sparop
\suhs{\alpha_n}{(x_n,x_0)} \in \suh{\T}_1''\seq\cdots\seq\suh{\T}_q''$ 
 (it is a circularity).
\end{itemize}
Without loss of generality, we may reduce to the following case (the general case
is demonstrated by iterating the arguments below).

Let $\alpha \M = \beta (\alpha')^{m+2}\beta'$ and let 
\[
\begin{array}{@{\!}rl}
\suhs{\alpha_0}{(x_0,x_1)} \sparop \cdots \sparop
\suhs{\alpha_n}{(x_n,x_0)} =  &
\prod_{0 \leq j \leq n'} \suhs{\beta(\alpha')^{m+1}\beta'\beta_j}{(x_j, x_{j+1})}
\\
& \sparop \suhs{\alpha_{n'+1}}{(x_{n'+1}, x_{n'+2})}
\\
& \sparop \cdots \sparop
\suhs{\alpha_n}{(x_n,x_0)}
\end{array}
\] 
with $\varepsilon \precneq \beta_j \preceq \beta''\beta'$, where $\beta'\beta'' = \alpha'$, %
and $n'<n$
(otherwise 2 is straightforward because the circularity may be mapped
to a previous circularity by $\rho_{\beta,\alpha',\beta'}^{(m+2)}$, see Lemma~\ref{lem.flashback}(1.b), or it
is already contained in $\mathbb{L}$). This is the case of crossover circularities, as discussed in Section~\ref{sec.introduction}.

By Lemma~\ref{lem.flashback}, 
\begin{eqnarray}
\label{eq.lemma}
\begin{array}{l}
\suhs{\beta (\alpha')^{m}\beta'\beta_0}{(\rho_{\beta,\alpha,\beta'}^{(m+2)}(x_0), 
\rho_{\beta,\alpha,\beta'}^{(m+2)}(x_{1}))} \sparop  \cdots  
\\
\sparop \;
\suhs{\beta (\alpha')^{m+1}\beta'\beta_{n'}}{
(\rho_{\beta,\alpha,\beta'}^{(m+2)}(x_{n'}), 
\rho_{\beta,\alpha,\beta'}^{(m+2)}(x_{n'+1}))}
\end{array}
\end{eqnarray}
is in some $\suh{\T}_{i}''$. There are two cases.

\emph{Case 1}: for every $n'+1 \leq i \leq n$, 
$\alpha_{i} \precneq \beta (\alpha')^{m+1}\beta'$. Then, by 
Lemma~\ref{lem.flashback}(1), we have $\rho_{\beta,\alpha,\beta'}^{(m+2)}(x_{0}) = 
\rho_{\beta,\alpha,\beta'}^{(m+1)}(x_{0})$ and 
$\rho_{\beta,\alpha,\beta'}^{(m+2)}(x_{n'+1}) = 
\rho_{\beta,\alpha,\beta'}^{(m+1)}(x_{n'+1})$. Therefore, by 
Lemma~\ref{lem.flashback}(2), 
\[
\begin{array}{rl}
(\ref{eq.lemma})
\sparop &
\suhs{\alpha_{n'+1}'}{(\rho_{\beta,\alpha,\beta'}^{(m+1)}(x_{n'+1}),\rho_{\beta,\alpha,\beta'}^{(m+1)}(x_{n'+2}))}
\\
\sparop & \cdots \sparop \suhs{\alpha_{n}'}{(
\rho_{\beta,\alpha,\beta'}^{(m+1)}(x_{n}),\rho_{\beta,\alpha,\beta'}^{(m+1)}(x_{0}))}
\end{array}
\] 
with suitable $\alpha_{n'+1}', \cdots , \alpha_n'$, 
is a circularity in $ \suh{\T}_1''\seq\cdots\seq\suh{\T}_q''$. In particular, 
whenever, for every $n'+1 \leq i \leq n$, $\alpha_i = \beta (\alpha')^{m}\beta'\beta_i$ 
with $\varepsilon \precneq \beta_{i} \preceq \beta''\beta'$, the flashback 
$\rho_{\beta,\alpha,\beta'}^{(m+1)}$ maps dependencies $\suhs{\alpha_i}{(x_i,x_{i+1})}$ to 
dependencies 
\[
\suhs{\beta (\alpha')^{m-1}\beta'\beta_i}{(\rho_{\beta,\alpha,\beta'}^{(m+1)}(x_i),\rho_{\beta,\alpha,\beta'}^{(m+1)}(x_{i+1}))}
\]
 if $m >0$. It is the identity,
if $m=0$.

\emph{Case 2}: there is $n'+1 \leq i \leq n$ such that $\alpha_i \not \preceq \beta (\alpha')^{m+2}\beta'$. Let this $i$ be $n'+1$. For instance, 
$\beta = \beta_1' (\alpha'')^{m'} \beta_1''$ and $\alpha_{n'+1} = \beta_1'(\alpha'')^{m'+1} \beta_1''(\alpha''')^{m''} \beta_1'''$ with $m' \geq 2$ and $m'' \geq 2$.
In this case it is possible that there is no pair $\suhs{\gamma}{(y, 
y')}$, with $\gamma \succeq  \beta_1' (\alpha'')^{m'}$, to which map 
$\suhs{\alpha_{n'+1}}{(x_{n'+1}, x_{n'+2})}$ by means of a flashback.
To overcome 
this issue, we consider the flashbacks
$\rho_{\alpha_0}, \cdots , \rho_{\alpha_{n'}}, \rho_{\alpha_{n'+1}}$ and we observe 
that
\begin{eqnarray}
\label{eq.lemma2}
\begin{array}{l}
\suhs{[\alpha_0]}{(\rho_{\alpha_0}(x_0), 
\rho_{\alpha_0}(x_{1}))} \sparop  \cdots  
\sparop \;
\suhs{[\alpha_{n'}]}{
(\rho_{\alpha_{n'}}(x_{n'}), 
\rho_{\alpha_{n'}}(x_{n'+1}))} \hspace{-.6cm}
\\
\sparop \; \suhs{[\alpha_{n'+1}]}{
(\rho_{\alpha_{n'+1}}(x_{n'+1}), 
\rho_{\alpha_{n'+1}}(x_{n'+2}))} \sparop \cdots 
\\
\sparop \; \suhs{[\alpha_{n}]}{(\rho_{\alpha_{n}}(x_{n}), 
\rho_{\alpha_{n}}(x_{1}))}
\end{array}
\end{eqnarray}
verifies
\begin{enumerate}
\item[(a)]
for every $0 \leq i < n$, $\rho_{\alpha_i}(x_{i+1}) = 
\rho_{\alpha_{i+1}}(x_{i+1})$ and $\rho_{\alpha_n}(x_{0}) = 
\rho_{\alpha_{0}}(x_{0})$;
\item[(b)]
the term (\ref{eq.lemma2}) is a subterm of $ \suh{\T}_1''\seq\cdots\seq\suh{\T}_q''$.
\end{enumerate}
As regards (a), the property derives by definition of the flashbacks $\rho_{\alpha_i}$
and $\rho_{\alpha_{i+1}}$ in Lemma~\ref{lem.flashback}.
As regards (b), it follows by Lemma~\ref{lem.flashback}(2.b) because 
$\suhs{\alpha_0}{(x_0,x_1)} \sparop \cdots \sparop$
$\suhs{\alpha_{n}}{(x_{n},x_{1})} \in \suh{\T}_1''$ $\seq\cdots\seq\suh{\T}_q''$.
\end{proof}

\section{Nonlinear programs: technical aspects}
\label{sec.nonlinear}

When the lam program is not linear recursive, it is not possible to associate a unique
mutation to a function. In the general case, our technique for verifying 
circularity-freedom consists of transforming a nonlinear recursive program
into a linear recursive one and then running the algorithm of the previous section. As we will see, the transformation
 introduces inaccuracies, e.g.~dependencies that are not
 present in the nonlinear recursive program.

\subsection{The pseudo-linear case} 
\label{sec.pseudolinear}
In nonlinear recursive programs, recursive histories
are no more adequate to capture the mutations defined by the functions.
For example, in the nonlinear recursive program (called $\M'\N'$-program)

\smallskip

$\! \! \!\!\!\!\!\!\!\!\bigl( \M'(x,y,z) = (x,y) \sparop \N'(y,z), \;
\N'(x,y) = \N'(x,z) \sparop \M'(z,y,y), \; \P \bigr)$ 

\smallskip

\noindent the recursive
history of $\M'$ is $\M'\N'$.
The sequence $\M'\N'\N'$ \emph{is not} a recursive history
because it contains multiple occurrences of the function $\N'$.
However, if one computes the sequences of invocations 
$\M'(x,y,z) \cdots \M'(\wt{u})$,
it is possible to derive the two sequences $\M'(x,y,z) \N'(y,z) \M'(z',z,z)$
and $\M'(x,y,z) \N'(y,z)$ $\N'(y,u) \M'(u',u,u)$ that define  two different
mutations $\mut{4,3,3}$ and $\mut{6,5,5}$ (see the definition of mutation of a
function). 

\begin{definition}
A program $\bigl(\M_1(\wt{x_1}) = \P_{1}, \cdots , 
\M_\ell(\wt{x_\ell}) = \P_{\ell}, \P \bigr)$ is \emph{pseudo-linear recursive} if, for every
$\M_{i}$, the set of functions $\{ \M \; | \; \closure{\M} = \closure{\M_i} \}$
contains at most one function with a number of recursive histories greater than 1.
\end{definition}
The $\M'\N'$-program above is pseudo-linear recursive, as well as the ${\tt fibonacci}$
program in Section~\ref{sec.introduction} and the following $\Q'$-program

\smallskip

$\bigl( \Q'(x,y,z) = (x,y) \sparop \Q'(y,z,x) \seq (x,u) \sparop \Q'(u,u,y)
, \; \P \bigr)$ .

\smallskip

In these cases, functions have a unique recursive history
but there are multiple recursive invocations. 
On the contrary, the  $\M''\N''$-program below

\smallskip
 
$\begin{array}{ll}
\bigl( \; & \M''(x,y) =  (x,z)\sparop\M''(y,z) \seq \N''(y,x) \; ,
\\
 &\N''(x,y) = (y,x)\sparop\M''(y,z) \sparop \N''(z,x) \;,
 \\
 & \M''(x_1,x_2) \quad
 \bigr)
\end{array}$

\smallskip

\noindent is not pseudo-linear recursive.

Pseudo-linearity has been introduced because of the easiness of transforming them
into linear recursive  programs. The transformation consists of the three steps 
specified in Table~\ref{tab.trans}, which we discuss below.
 Let $\bigl( \M_1(\wt{x_1}) = \P_1, \cdots ,  \M_\ell(\wt{x_\ell}) = 
 \P_\ell, \P \bigr)$ be a lam program, let $\rechis{\M_i}$ be the set of recursive histories of $\f_i$, and let $\head{\varepsilon} = \varepsilon$ and 
 $\head{\M\alpha} = \M$.

\begin{table*}[t]
{\footnotesize
\[
\begin{array}{c}
\bigfract{
	\begin{array}{c}
	\rechis{\M_i} = \{ \M_i \M_k \alpha , \M_i \beta_0 , \cdots , \M_i \beta_n\}
	\qquad  \{ \head{\beta_0} , \cdots , \head{\beta_n} \}\setminus \M_k
	 \neq \varnothing
	\\
	\P_i = \CP[\M_k(\wt{u})] \qquad
	\var{\P_k}\setminus\wt{x_k}=\wt{z} \qquad \wt{w} \mbox{ are fresh}
	 \end{array}
}{
\transfuno{ \bigl( \cdots \M_i(\wt{x_i}) = \P_i, \cdots , \P \bigr)
}{ \bigl( \cdots \M_i(\wt{x_i}) = \CP[\P_k\subst{\wt{w}}{\wt{z}}\subst{\wt{u}}{\wt{x_i}}], \cdots , \P \bigr) }
}
\\
\\
\bigfract{
	\begin{array}{c}
	\rechis{\M_i} = \{\M_i \alpha \} \qquad \M_k = \head{\alpha}
	\\
	\P_i = \CP[\M_k(\wt{u_0})] \cdots [\M_k(\wt{u_{n+1}})] \qquad \M_k \notin \CP
	\\
	\var{\P_k}\setminus\wt{x_k}=\wt{z} 
	\qquad
	\wt{w_0}, \cdots , \wt{w_{n+1}} \mbox{ are fresh}
	\\
	 \CP[\P_k\subst{\wt{w_0}}{\wt{z}}\subst{\wt{u_0}}{\wt{x_k}}] \cdots  
	 [\P_k\subst{\wt{w_{n+1}}}{\wt{z}}\subst{\wt{u_{n+1}}}{\wt{x_k}}]= \P_i'
	 \end{array}
}{
\transfdue{ \bigl( \cdots \M_i(\wt{x_i}) = \P_i, \cdots , \P \bigr)
}{ \bigl( \cdots \M_i(\wt{x_i}) = \P_i', \cdots , \P \bigr) }
}
\\
\\ 
\bigfract{
	\begin{array}{c}
	\P_i = \CP[\M_i(\wt{u_0})] \cdots [\M_i(\wt{u_{n+1}})] \qquad
	\M_i \notin \CP
	\qquad
	\wt{w_0}, \cdots , \wt{w_{n+1}} \mbox{ are fresh}
	\\
	 \P_i^{\it aux} = \M_i^{\it aux}(\wt{u_0}\subst{\wt{w_0}}{\wt{x_i}}, \cdots , 
	 \wt{u_{n+1}}
\subst{\wt{w_{n+1}}}{\wt{x_i}}) \sparop 
(\prod_{j\in 0..n+1} \flatm{\M_i}{\P_i} \subst{\wt{w_j}}{\wt{x_i}}) 
	 \end{array}
}{
\begin{array}{l}
\bigl( \cdots \M_i(\wt{x_i}) = \P_i, \cdots , \P \bigr)
\; \stackrel{{\tt pl} \mapsto {\tt l}}{\Longmapsto}_3
\\
\quad  \bigl( \cdots \M_i(\wt{x_i}) = \M_i^{\it aux}(\underbrace{\wt{x_i}, \cdots , \wt{x_i}}_{n+2 \; {\rm times}}), \; 
\M_i^{\it aux}(\wt{w_0}, \cdots, \wt{w_{n+1}}) = \P_i^{\it aux}, \cdots , \P \bigr) 
\end{array}
}
\end{array}
\]}
\caption{\label{tab.trans} Pseudo-linear to 
linear transformation
}
\end{table*}

\medskip

\noindent
\emph{Transformation $\stackrel{{\tt pl} \mapsto {\tt l}}{\Longmapsto}_1$: Removing multiple recursive histories}.
We repeatedly apply the rule defining 
$\stackrel{{\tt pl} \mapsto {\tt l}}{\Longmapsto}_1$.
Every instance of the rule selects a function $\M_i$ with a number of recursive histories 
greater than one -- the hypotheses 
$\rechis{\M_i} = \{ \M_i \M_k \alpha , \M_i \beta_0 , \cdots , \M_i \beta_n\}
$ and $\{ \head{\beta_0} , \cdots , \head{\beta_n} \}\setminus \M_k
\neq \varnothing$ -- and expands the invocation of $\M_k$, with $\M_k \neq \M_i$. 
By definition of
pseudo-linearity, the other function names in $\rechis{\M_i}$ have
one recursive history. At each application of the rule %
the sum of the lengths of the recursive histories of $\M_i$ decreases.
Therefore we eventually unfold the (mutual) recursive invocations
of $\M_i$ till the recursive history of $\M_i$ is unique.
For example, the program

\smallskip

$\bigl( \M(x) = (x,y) \sparop \N(x) \; , \; \N(x) = (x,y) \sparop \M(x) \seq \N(y) \; ,   \;\P \; \bigr)$

\smallskip

\noindent is transformed into

\smallskip

$\!\!\!\!\!\!\bigl( \M(x) = (x,y) \sparop \N(x) \; , \; \N(x) = (x,y) \sparop (x,z) \sparop \N(x) \seq \N(y) \; ,   \;\P \; \bigr).$

\medskip

\noindent
\emph{Transformation $\stackrel{{\tt pl} \mapsto {\tt l}}{\Longmapsto}_2$: 
Reducing the histories of pseudo-linear recursive functions.}
By 
$\stackrel{{\tt pl} \mapsto {\tt l}}{\Longmapsto}_1$, we are reduced to
functions that have one recursive
history. 
Yet, this is not enough for a program to be linear recursive, such as the $\Q'$-program or the following
$\PP''\Q''$-program

\smallskip

$\begin{array}{ll}
\bigl( \; & \PP''(x,y) =  (x,z)\sparop\Q''(y,z) \seq \Q''(y,x) \; ,
\\
 &\Q''(x,y) = (y,x)\sparop\PP''(y,z) \sparop \PP''(z,x) \;,
 \\
 & \PP''(x_1,x_2) \quad
 \bigr)
\end{array}$

\smallskip

\noindent (the reason is that the bodies of functions may have different invocations of a same function).
Rule $\stackrel{{\tt pl} \mapsto {\tt l}}{\Longmapsto}_2$ expands the
bodies of pseudo-linear recursive functions till the histories of nonlinear recursive functions
have length one.
In this rule (and in the following
), we use lam contexts 
with multiple holes, written $\CP[ \, ] \cdots [\, ]$. We write
$\M \notin \CP$%
whenever there is no invocation of $\M$ in $\CP$. 

By the hypotheses of the rule, it applies to a function $\M_i$ whose next element
in the recursive history is $\M_k$ (by definition of the recursive history, $\M_i
\neq \M_k$) and %
whose body $\P_i$ contains \emph{at least} two invocations of 
$\M_k$. The rule transforms  $\P_i$ %
by expanding every invocation of $\M_k$.
For example, the functions $\PP''$ and $\Q''$ in the
$\PP''\Q''$-program are transformed into
 
\smallskip

$\begin{array}{rl}
\PP''(x,y) = &  (x,z)\sparop\Q''(y,z) \seq \Q''(y,x) \; ,
\\
 \Q''(x,y) = & (y,x)\sparop ((y,z')\sparop\Q''(z,z')  \seq \Q''(z,y)) 
 \\
 & \sparop ((z,z'')\sparop\Q''(x,z'') \seq  \Q''(x,z)).
\end{array}$

\smallskip

The arguments about the termination of the transformation 
$\stackrel{{\tt pl} \mapsto {\tt l}}{\Longmapsto}_2$ are straightforward.

\medskip

\noindent
\emph{Transformation $\stackrel{{\tt pl} \mapsto {\tt l}}{\Longmapsto}_3$: 
Removing nonlinear recursive invocations}.
By $\stackrel{{\tt pl} \mapsto {\tt l}}{\Longmapsto}_2$ we are reduced to 
pseudo-linear recursive programs
where the nonlinearity is due to recursive, but not mutually-recursive functions (such
as ${\tt fibonacci}$).
The transformation $\stackrel{{\tt pl} \mapsto {\tt l}}{\Longmapsto}_3$ removes
multiple recursive invocations of nonlinear recursive programs. This transformation 
 is the one that introduces
inaccuracies, e.g. pairs that are not present in the nonlinear recursive program.

In the rule of $\stackrel{{\tt pl} \mapsto {\tt l}}{\Longmapsto}_3$ we use the
auxiliary operator $\flatm{\M}{\P}$ defined as follows:

\smallskip

$\begin{array}{l@{\qquad}l}
\flatm{\M}{\pinull} =  \pinull,
&
\flatm{\M}{(x,y)}  =  (x,y),
\\
\flatm{\M}{\M(\wt{x})}  =  \pinull,
&
\flatm{\M}{\N(\wt{x})}  =  \N(\wt{x}) \mbox{, if }(\M \neq \N),
\\
\flatm{\M}{\P \sparop \P'}  =  \flatm{\M}{\P} \sparop \flatm{\M}{\P'} ,
&
\flatm{\M}{\P \seq \P'}  =  \flatm{\M}{\P} \seq \flatm{\M}{\P'}.
\end{array}$

\smallskip

The rule of $\stackrel{{\tt pl} \mapsto {\tt l}}{\Longmapsto}_3$ selects a
function $\M_i$ whose body contains multiple recursive invocations and extracts all of them 
-- the term $\flatm{\M_i}{\P_i}$. This term is put in parallel with an
auxiliary function invocation -- the function $\M_i^{\it aux}$ -- that collects
the arguments of each invocation $\M_i$ (with names that have been 
properly renamed). The resulting term, called
$\P_i^{\it aux}$ is the body of the new function $\M_i^{\it aux}$ that is invoked
by $\M_i$ in
the transformed program.
For example, the function ${\tt fibonacci}$ 
\[
{\tt fibonacci}(r,s) =  (r,s) \sparop (t,s) \sparop {\tt fibonacci}(r,t) \sparop {\tt fibonacci}(t,s)
\]
is transformed into

\smallskip
 
$\begin{array}{lcl}
{\tt fibonacci}(r,s) & = & {\tt fibonacci}^{\it aux}(r,s,r,s), 
\\
{\tt fibonacci}^{\it aux}(r,s,r',s') & = & (r,s)\sparop
(r',s')
\\
&&
\sparop  {\tt fibonacci}^{\it aux}(r,t,t,s') 
\end{array}$

\smallskip

\noindent where different invocations (${\tt fibonacci}(r,s)$ and ${\tt fibonacci}(r',s')$) 
in the original program are contracted
into one auxiliary function invocation (${\tt fibonacci}^{\it aux}(r,s,r',s')$). As a 
consequence of this step,  the creations of names performed by different invocations 
are contracted to names created by one invocation. This leads to
merging dependencies, which, in turn, reduces the precision of the analysis.
(As discussed in Section~\ref{sec.introduction}, a cardinality argument prevents the inaccuracies introduced 
by $\stackrel{{\tt pl} \mapsto {\tt l}}{\Longmapsto}_3$ from being totally eliminated.)

As far as the correctness of the transformations in Table~\ref{tab.trans} is
concerned, we begin by defining a correspondence between states of a pseudo-linear program 
and those of a linear one. We focus on $\stackrel{{\tt pl} \mapsto {\tt l}}{\Longmapsto}_3$
because the proofs of the correctness of the other transformations are straightforward.

\begin{definition}\label{def.lin}
Let ${\cal L}_2$ be the linear program returned by the 
Transformation 3 of Table~\ref{tab.trans} applied to 
${\cal L}_1$. 
A state $\lam{\por{V}_1,\,\P_1}$ of ${\cal L}_1$ is 
\emph{linearized to} a state $\lam{\por{V}_2,\,\P_2}$ of ${\cal L}_2$, 
written $\lam{\por{V}_1,\,\P_1} \linearized \lam{\por{V}_2,\,\P_2}$, 
if there exists a surjection $\sigma$ such that:
\begin{enumerate}
\item 
$if (\obj,\objb)\in \por{V}_1$ then $(\sigma(\obj),\sigma(\objb))\in \por{V}_2$.
\item 
if $\flatt{\P_1}=\T_1 \seq \cdots \seq \T_m$ and $\flatt{\P_2}=\T_1' \seq \cdots \seq \T_n'$, then for every $1\leq i\leq m$, there exists $1\leq j\leq n$, such that 
$ \sigma(\T_i) \in \T'_j$;
\item if $\M(\wt{x}_1)\in\P_1$ then either (1) 
$\M(\sigma(\wt{x}_1))$ in $\P_2$ or 
(2) there are $\M(\wt{x}_2)\cdots\M(\wt{x}_k)$ in $\P_1$ and $\M^{\it aux}(\wt{y}_1, \cdots, \wt{y}_h)$ in $\P_2$ such that, for every $1 \leq k' \leq
k$ there exists $h'$ with $\sigma(\wt{x}_{k'}) = \wt{y}_{h'}$;
\end{enumerate}
\end{definition}

In the following lemma we use the notation $\CP[\P_1] \cdots [\P_n]$ defined in terms 
of standard lam context by
$(\cdots ((\CP[\P_1])[\P_2]) \cdots )[\P_n]$.

\begin{lemma}\label{th:pl2l}
Let $\lam{\por{V}_1,\,\P_1} \linearized \lam{\por{V}_2,\,\P_2}$.
Then, 
$\lam{\por{V}_2,\,\P_2}\lred{}%
\lam{\por{V}'_2,\,\P'_2}$ implies
there exists 
$\lam{\por{V}_1,\,\P_1}\lred{}^{*}%
\lam{\por{V}'_1,\,\P'_1}$
such that $\lam{\por{V}'_1,\,\P'_1} \linearized \lam{\por{V}'_2,\,\P'_2}$
\end{lemma}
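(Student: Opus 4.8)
The statement is a forward-simulation property: every single step of the linearized program $\mathcal{L}_2$ can be matched by zero or more steps of the original pseudo-linear program $\mathcal{L}_1$, preserving the linearization relation $\linearized$. The plan is to proceed by case analysis on which function invocation is fired in $\lam{\por{V}_2,\P_2} \lred{} \lam{\por{V}'_2,\P'_2}$, using the surjection $\sigma$ witnessing $\lam{\por{V}_1,\P_1} \linearized \lam{\por{V}_2,\P_2}$ to locate the corresponding redex (or redexes) in $\P_1$, and then to exhibit an updated surjection $\sigma'$ witnessing the new linearization. First I would fix the context: by the syntax of lam contexts, $\P_2 = \suh{\CP}_2[\N(\wt{v})]$ for the invocation $\N(\wt{v})$ being reduced, and by the diamond property (Proposition~\ref{prop.diamond}) the choice of which invocation to fire in $\P_1$ does not matter up to bijective renaming of fresh names, so I may always pick the convenient one.

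The case analysis splits according to how $\N$ arises in the transformed program. \emph{Case (a):} $\N$ is an ordinary (unchanged) function of $\mathcal{L}_1$, i.e. $\N = \M_j$ for some $j$ not touched by Transformation~3, or $\N$ corresponds via clause (3)(1) of Definition~\ref{def.lin} to a plain invocation $\N(\sigma(\wt{x}_1))$ in $\P_1$. Then I match the step by the single corresponding \rulename{Red} step in $\mathcal{L}_1$ that unfolds $\N$; the fresh names created on the $\mathcal{L}_1$ side are mapped by $\sigma'$ onto those created on the $\mathcal{L}_2$ side, extending $\sigma$, and I check conditions (1)--(3) of Definition~\ref{def.lin} for the new states — condition (1) by the definition of $\transclosure$ and monotonicity of $\sigma$, condition (2) because $\flatt{\cdot}$ distributes over the substitution and $\sigma(\flatt{\P_k}) \in \flatt{\P_k^{\,\text{transformed}}}$ componentwise, and condition (3) by tracking the recursive invocation freshly produced. \emph{Case (b):} $\N = \M_i^{\it aux}$ is an auxiliary function introduced by Transformation~3. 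By clause (3)(2) of Definition~\ref{def.lin}, the single invocation $\M_i^{\it aux}(\wt{y}_1,\dots,\wt{y}_h)$ in $\P_2$ corresponds to a \emph{bundle} of invocations $\M_i(\wt{x}_1) \cdots \M_i(\wt{x}_k)$ in $\P_1$ with $\sigma(\wt{x}_{k'}) = \wt{y}_{h'}$. Here the matching move is to fire \emph{all $k$ of these invocations} in $\mathcal{L}_1$ (a sequence of $k$ \rulename{Red} steps — this is where the ``$\lred{}^*$'' in the statement is used), each unfolding $\M_i$ into $\CP[\M_i(\wt{u})]$-shaped bodies; the body of $\M_i^{\it aux}$ on the $\mathcal{L}_2$ side produces, by construction, exactly one auxiliary recursive call collecting the new argument tuples together with $\prod_{j} \flatm{\M_i}{\P_i}$-many copies of the non-recursive dependencies, and one checks that these are precisely the $\sigma'$-images of what the $k$ expansions in $\P_1$ produce, modulo the \emph{contraction of fresh names}: the $k$ distinct fresh-name vectors $\wt{w_0},\dots,\wt{w_k}$ created on the $\mathcal{L}_1$ side are all mapped by $\sigma'$ onto the single fresh-name vector created by $\M_i^{\it aux}$. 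This contraction is exactly why $\sigma$ is required to be only a surjection and not a bijection, and it is the point at which the transformation loses precision (adds dependencies), consistent with the discussion in Section~\ref{sec.relatedworks}.

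The main obstacle I anticipate is Case~(b), specifically verifying condition~(2) of Definition~\ref{def.lin} after contracting fresh names: one must show that every ``$\seq$''-separated relation $\T_i$ of $\flatt{\P'_1}$ still embeds, via $\sigma'$, into some relation $\T'_j$ of $\flatt{\P'_2}$, even though several originally-independent relations coming from the different expansions of $\M_i$ get their fresh names identified and their dependencies merged. The key observation that makes this go through is that merging can only \emph{enlarge} each $\sparop$-composed relation (never split a $\seq$), together with the lam axiom $\T \sparop (\P' \seq \P'') = \T\sparop\P' \seq \T\sparop\P''$, which lets the $\flatm{\M_i}{\P_i}$-copies distribute over the $\seq$-structure of the $\M_i^{\it aux}$ body correctly; this is precisely the role of the auxiliary operator $\flatm{\cdot}{\cdot}$ and of Proposition~\ref{prop.normalform}. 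A secondary technical point is bookkeeping the partial order: I need $\sigma'$ to remain a homomorphism into $\por{V}'_2$, which follows because the fresh names on both sides are introduced as maximal elements via $\transclosure$ over the (images of the) same parameter names, so $\sigma'$ respects $\transclosure$. Once these are in place, defining $\sigma'$ explicitly as $\sigma$ extended by the appropriate (many-to-one in Case~(b)) map on fresh names, and discharging (1)--(3), completes the argument.
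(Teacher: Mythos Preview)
Your approach is essentially the paper's: case analysis on the function fired in $\mathcal{L}_2$, matching ordinary functions by one step and $\M_i^{\it aux}$ by $k$ steps (firing all pending $\M_i$ invocations in $\P_1$), with the surjection extended by contracting the $k$ fresh-name vectors onto the single one created by the auxiliary call. Your discussion of why condition~(2) of Definition~\ref{def.lin} survives this contraction is actually more explicit than the paper's, which just observes that pairs sequentially composed in $\P_\M$ may end up in parallel under $\sigma$ but never the converse.

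There is, however, one sub-case that you have folded into Case~(a) and that does not work as you describe. In $\mathcal{L}_2$ the function $\M_i$ itself has been redefined as the wrapper $\M_i(\wt{x_i}) = \M_i^{\it aux}(\wt{x_i},\dots,\wt{x_i})$. When $\mathcal{L}_2$ fires this wrapper, it produces only a single invocation of $\M_i^{\it aux}$ and \emph{no} new dependency pairs and no new names. If you match this, as your Case~(a) prescribes, by a single \rulename{Red} step of $\M_i$ in $\mathcal{L}_1$, you unfold the \emph{original} body $\P_i$ on the $\mathcal{L}_1$ side, introducing all the pairs of $\flatm{\M_i}{\P_i}$ there while $\flatt{\P'_2}$ has gained nothing; condition~(2) of Definition~\ref{def.lin} then fails. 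The correct match for this sub-case is \emph{zero} steps in $\mathcal{L}_1$: the invocation $\M_i(\wt{u})$ that was in $\P_1$ now corresponds, via clause~(3)(2), to the freshly produced $\M_i^{\it aux}(\wt{u},\dots,\wt{u})$ in $\P'_2$, and the linearization relation carries over unchanged. Once you split this off, your three cases (unchanged function: one step; wrapper $\M_i$: zero steps; auxiliary $\M_i^{\it aux}$: $k$ steps) coincide exactly with the paper's argument.
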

\begin{proof}
\emph{Base case.}
Initially $\P_1=\P_2$ because the main lam is not affected by the transformation. 
Therefore the first step can only be an invocation of a standard function 
belonging to both programs.
We have two cases:
\begin{enumerate}
\item 
the function was linear already in the original program, thus it was not modified by the 
transformation. In this case the two programs performs the same reduction step and 
end up in the same state.
\item the function has been \emph{linearized} by the transformation. In this case the 
invocation at the linear side will reduce to an invocation of an {\it aux}-function and 
it will not produce new pairs nor new names. The corresponding reduction in 
$\lam{\por{V}_1,\P_1}$ is a zero-step reduction. 
It is easy to verify that  $\lam{\por{V}_1,\P_1} \linearized 
\lam{\por{V}_2',\P_2'}$.
\end{enumerate}

\medskip

\noindent
\emph{Inductive case.}
We consider only the case in which the selected function is an {\it aux}-function. The other case is as in the base case.
Let 
\[
\begin{array}{l}
\ilam{\mathbb{V}_1^{(n)}, \, \CP_1^{(n)}[\M(\wt{v}_1)]\cdots[\M(\wt{v }_k)]
}{} 
\\
\qquad \qquad \qquad \qquad \linearized \quad
\ilam{\mathbb{V}_2^{(n)}, \, \CP_2^{(n)}[\M^{\it aux}(\wt{u}_1, \cdots, \wt{u}_h)] }{}
\end{array}
\]
Without loss of generality we can assume that $\CP_1^{(n)}$ does not contain other invocations to $\M$ and the ``linearized to''
relationship makes $\M(\wt{v}_1)\sparop \cdots \sparop \M(\wt{v}_k)$ correspond to 
$\M^{\it aux}(\wt{u}_1, \cdots, \wt{u}_h)$.%
Then we have
\[
\begin{array}{@{\!\!}l}
	\ilam{\mathbb{V}_2^{(n)}, \, \CP_2^{(n)}[\M^{\it aux}(\wt{u}_1, \cdots, \wt{u}_h)]
	}{} \; \lred{} 
	\\	
	\ilam{\mathbb{V}_2^{(n)} \transclosure \wt{u}_1, \cdots, \wt{u}_h {\lessc} 
	\wt{w}, \; 
	\CP_2^{(n)}[\P_{\M^{\it aux }}\subst{\wt{w}}{\wt{z}}\subst{\wt{u}_1, \cdots, \wt{u}_h}{\wt{y}_1 ,\cdots, \wt{y}_h}] }{}
\end{array}
\]
where, $\M^{\it aux}(\wt{y}_1, \cdots , \wt{y}_h)= \P_{\M^{\it aux}}$, 
$\var{\P_{\M^{\it aux}}}\setminus\wt{y}_1 \cdots \wt{y}_h = \wt{z}$ and
$\wt{w}$ are fresh names. 
By construction,
\[
\P_{\M^{\it aux}} = \M^{\it aux}(\wt{y'_1}\subst{\wt{y_1}}{\wt{y}}, \cdots , \wt{y'_k}
\subst{\wt{y_k}}{\wt{y}})\, \sparop\, \prod_{i \in 1..k} (\flatm{\M}{\P_{\M}} \subst{\wt{y_i}}{\wt{y}})
\]
where $\M(\wt{y}) = \CP_\M[\M(\wt{y'_1})] \cdots [\M(\wt{y'_k})] = \P_\M$ and
$\M \notin \CP_\M$.

The corresponding reduction steps of $\ilam{\mathbb{V}_1^{(n)}, \, \CP_1^{(n)}[\M(\wt{v}_1)]\cdots[\M(\wt{v}_k)]}{}$ are the following ones:

\[
\begin{array}{@{\!\!}l}
	\ilam{\mathbb{V}_1^{(n)}, \,  \CP_1^{(n)}[\M(\wt{v}_1)]\cdots[\M(\wt{v}_k)]}{} 
	\; \lred{\M(\wt{v}_1)}
\cdots\lred{\M(\wt{v}_k)} 
\\
\ilam{\mathbb{V}_1^{(n)} \transclosure \wt{v}_1 {\lessc} \wt{w}_1 \transclosure \cdots \transclosure \wt{v}_k {\lessc} \wt{w}_k, \; 
	 \CP_1^{(n)}[\P_\M\subst{\wt{v}_1}{\wt{y}}]\cdots[\P_\M\subst{\wt{v}_k}{\wt{y}}]}{}  
\end{array}
\]
and $\wt{w}_i$ are the fresh names created by the invocation $\M(\wt{v}_i)$, $1\leq i\leq k$.
We need to show that:
\[
\begin{array}{l}
	\ilam{\mathbb{V}_1^{(n)} \transclosure \wt{v}_1 {\lessc} \wt{w}_1 \transclosure \cdots \transclosure \wt{v}_k {\lessc} \wt{w}_k, \; 
	 \CP_1^{(n)}[\P_\M\subst{\wt{v}_1}{\wt{y}}]\cdots[\P_\M\subst{\wt{v}_k}{\wt{y}}]}{}  
\\
\qquad \qquad \qquad \linearized
\\
	\ilam{\mathbb{V}_2^{(n)} \transclosure \wt{u}_1, \cdots, \wt{u}_h {\lessc} 
	\wt{w}, \; 
	\CP_2^{(n)}[\P^{\it aux}_\M]}{}
\end{array}
\]
where $\P^{\it aux}_\M = \M^{\it aux}(\wt{y'_1}\subst{\wt{u_1}}{\wt{y}}, \cdots , \wt{y'_k}
\subst{\wt{u_k}}{\wt{y}})\, \sparop\,  \P^{\it aux}$ 
and
$\P^{\it aux} = \prod_{i \in 1..k} (\flatm{\M}{\P_{\M}} \subst{\wt{u_i}}{\wt{y}})\subst{\wt{w}}{\wt{z}}$.
To this aim we observe that:
\begin{itemize}

\item   
for every $1 \leq k' \leq k$ there exists $h'$ such that $\sigma(\wt{v}_{k'}) = \wt{u}_{h'}$; moreover $\wt{w}=\sigma(\wt{w_1})=\cdots=\sigma(\wt{w_k})$. This satisfies condition {\it 1} of Definition~\ref{def.lin};

\item if $(a,b)\in\P_\M\subst{\wt{v}_i}{\wt{y}}$, with $a,b\in\wt{w}_i,\wt{v}_i$, then  $(\sigma(a),\sigma(b))\in\flatm{\M}{\P_\M}\subst{\wt{u}_i}{\wt{y}}\subst{\wt{w}}{\wt{z}}$, being $\sigma$ defined as in the previous item, therefore $\sigma(a),\sigma(b)\in\wt{w},\wt{u}_i$. Notice that, due to the $\prod_{i \in 1..k}$ composition in the body of $\M^{\it aux}$, two pairs sequentially composed in $\P_\M$ may end up in parallel (through $\sigma$). The converse never happens. Therefore condition {\it 2} of Definition~\ref{def.lin} is satisfied.

\item 
if $\N(\wt{a})\in\P_\M$ we can reason as in the previous item. We notice that 
$\prod_{i \in 1..k} (\flatm{\M}{\P_{\M}} \subst{\wt{u_i}}{\wt{y}})\subst{\wt{w}}{\wt{z}}$ may contain 
function invocations $\N(\wt{u})$ that have no counterpart (through $\sigma$) in
$\P_\M\subst{\wt{v}_i}{\wt{y}}$. We do not have to mind about them because the lemma
guarantees the converse containment.

\item 
in $\P_\M\subst{\wt{v}_i}{\wt{y}}$ we have $k$ new invocations of $\M(\wt{b}_{i,1})\cdots\M(\wt{b}_{i,k})$, 
where $\wt{b}_{i,j}=\wt{y}'_{j}\subst{\wt{v}_j}{\wt{y}}\subst{\wt{w}_j}{\wt{z}}$. 
Therefore in the pseudolinear lam we have $k^2$ invocations of $\M$, while in the corresponding linear lam we find just one invocation of 
$\M^{\it aux}(\wt{y'_1}\subst{\wt{u_1}}{\wt{y}}\subst{\wt{w}}{\wt{z}}, \cdots ,$ $\wt{y'_k}
\subst{\wt{u_k}}{\wt{y}}\subst{\wt{w}}{\wt{z}})$. 
The surjection $\sigma$ is such that $(\wt{y'_j}\subst{\wt{u_j}}{\wt{y}}\subst{\wt{w}}{\wt{z}}=\sigma(\wt{b}_{1,j})=\cdots=\sigma(\wt{b}_{k,j})$,  with $1\leq j \leq k$.
This, together with the previous item, satisfies condition {\it 3} of Definition~\ref{def.lin}.
\end{itemize}
\end{proof}

\begin{lemma}\label{th:pl2lbis}
Let $\lam{\por{V}_1,\,\P_1} \linearized \lam{\por{V}_2,\,\P_2}$ and
$\lam{\por{V}_1,\,\P_1}\lred{}^{*}
\lam{\por{V}'_1,\,\P'_1}$.
Then there are $\lam{\por{V}'_1,\,\P'_1}\lred{}^{*}
\lam{\por{V}''_1,\,\P''_1}$ and 
$\lam{\por{V}_2,\,\P_2}$ $\lred{}^{*}
\lam{\por{V}'_2,\,\P'_2}$
such that $\lam{\por{V}''_1,\,\P''_1} \linearized \lam{\por{V}'_2,\,\P'_2}$
\end{lemma}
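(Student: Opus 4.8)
The plan is to prove the statement by induction on the length $n$ of the reduction $\lam{\por{V}_1,\,\P_1}\lred{}^{*}\lam{\por{V}'_1,\,\P'_1}$, in a way dual to Lemma~\ref{th:pl2l}. For $n=0$ there is nothing to do: take the empty reductions on both sides, so that $\lam{\por{V}''_1,\,\P''_1}=\lam{\por{V}_1,\,\P_1}$ and $\lam{\por{V}'_2,\,\P'_2}=\lam{\por{V}_2,\,\P_2}$, and the conclusion is the hypothesis.

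For the inductive step I would write the reduction as $\lam{\por{V}_1,\,\P_1}\lred{}^{*}\lam{\por{V}_1^{\circ},\,\P_1^{\circ}}\lred{}\lam{\por{V}'_1,\,\P'_1}$, the last step evaluating some invocation $\M(\wt{v})$ in $\P_1^{\circ}$, and apply the inductive hypothesis to the first $n-1$ steps: this yields $\lam{\por{V}_1^{\circ},\,\P_1^{\circ}}\lred{}^{*}\lam{\hat{\por{V}}_1,\,\hat{\P}_1}$, $\lam{\por{V}_2,\,\P_2}\lred{}^{*}\lam{\hat{\por{V}}_2,\,\hat{\P}_2}$ with $\lam{\hat{\por{V}}_1,\,\hat{\P}_1}\linearized\lam{\hat{\por{V}}_2,\,\hat{\P}_2}$. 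Iterating the diamond property of Proposition~\ref{prop.diamond}, the step $\lam{\por{V}_1^{\circ},\,\P_1^{\circ}}\lred{}\lam{\por{V}'_1,\,\P'_1}$ is transported past the catch-up reduction $\lam{\por{V}_1^{\circ},\,\P_1^{\circ}}\lred{}^{*}\lam{\hat{\por{V}}_1,\,\hat{\P}_1}$: either the occurrence of $\M(\wt{v})$ has already been consumed --- then the step is absorbed, $\lam{\por{V}'_1,\,\P'_1}\lred{}^{*}\lam{\hat{\por{V}}_1,\,\hat{\P}_1}$ up to a bijective renaming folded into the surjection, and we are done with $\lam{\por{V}''_1,\,\P''_1}=\lam{\hat{\por{V}}_1,\,\hat{\P}_1}$, $\lam{\por{V}'_2,\,\P'_2}=\lam{\hat{\por{V}}_2,\,\hat{\P}_2}$ --- or $\M(\wt{v})$ survives in $\hat{\P}_1$, giving $\lam{\por{V}'_1,\,\P'_1}\lred{}^{*}\lam{\por{V}_1^{\flat},\,\P_1^{\flat}}$ together with a step $\lam{\hat{\por{V}}_1,\,\hat{\P}_1}\lred{}\lam{\por{V}_1^{\flat},\,\P_1^{\flat}}$ that evaluates $\M(\wt{v})$. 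In the latter case it remains to establish the claim for this single step out of $\lam{\hat{\por{V}}_1,\,\hat{\P}_1}\linearized\lam{\hat{\por{V}}_2,\,\hat{\P}_2}$ and to chain the resulting reductions; being about a single step, this sub-goal does not re-invoke the outer induction.

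So the core is the single-step version: if $\lam{\por{V}_1,\,\P_1}\linearized\lam{\por{V}_2,\,\P_2}$ and $\lam{\por{V}_1,\,\P_1}\lred{}\lam{\por{V}'_1,\,\P'_1}$ evaluates $\M(\wt{v})$, then some unfolding of $\lam{\por{V}'_1,\,\P'_1}$ and some unfolding of $\lam{\por{V}_2,\,\P_2}$ are again related by $\linearized$. I would prove this by mirroring the case analysis in the proof of Lemma~\ref{th:pl2l}. If $\M$ was already linear and left unchanged by the transformation, then by clause~3 of Definition~\ref{def.lin} the invocation $\M(\wt{v})$ corresponds either to an $\M(\sigma(\wt{v}))$ in $\P_2$ or to one component of a group that a single $\M_i^{\it aux}(\dots)$ represents; in the first sub-case one step on the linear side reproduces the same dependencies and invocations through the extended surjection, and in the second the arguments are exactly those of the base case of Lemma~\ref{th:pl2l}. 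If instead $\M$ was linearized into $\M_i^{\it aux}$, then on the linear side the matching $\M_i^{\it aux}(\dots)$ must be evaluated: a single such step produces $n+2$ contracted copies of the body's dependencies and one recursive $\M_i^{\it aux}(\dots)$, whereas one step of $\M(\wt{v})$ on the pseudo side produces one copy of the dependencies and a whole generation of recursive $\M$-invocations. To realign the two states one continues the pseudo evaluation, unfolding the remaining $\M$-invocations of the same group so that their arguments line up with the argument blocks of the linear $\M_i^{\it aux}(\dots)$ --- this is exactly where the freedom to extend $\lam{\por{V}'_1,\,\P'_1}$ is used --- and then rebuilds the surjection $\sigma'$ along the four-bullet argument in the proof of Lemma~\ref{th:pl2l}, using that the substitutions in Table~\ref{tab.trans} rename only formal parameters (so the fresh names of distinct pseudo invocations are contracted to the single set of fresh names created by $\M_i^{\it aux}$), that Definition~\ref{def.lin}.(2) asks only for $\sigma(\T_i)\subseteq\T'_j$ (so the ``sequential becomes parallel'' phenomenon is harmless), and that $\sigma'$ restricted to the old names agrees with $\sigma$.

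The main obstacle is precisely this $\M_i^{\it aux}$-case of the single-step lemma: choosing the completing pseudo reductions so that afterwards every group of $\M$-invocations in the pseudo state is matched by a single auxiliary invocation in the linear state, and then checking the three clauses of Definition~\ref{def.lin} for the reconstructed $\sigma'$ --- in particular clause~1 on the partial orders, where one must verify that the $\transclosure$-closures on the two sides are related by $\sigma'$ despite the different creation patterns (one fresh block per $\M_i^{\it aux}$-step on the linear side against one fresh block per $\M$-step on the pseudo side). The reconciliation of the completing steps with the residual computation, handled uniformly by iterating Proposition~\ref{prop.diamond}, and the organisation of the outer induction around the last step so that it never recurses on the uncontrolled-length catch-up reductions, are the remaining points that need care.
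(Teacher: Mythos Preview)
Your proposal is correct and follows the same approach as the paper: induction on the length of the ${\cal L}_1$-reduction, with the inductive step requiring extra unfoldings on the pseudo-linear side so that all $\M$-invocations ``at the same level'' are expanded together and can be matched against a single $\M_i^{\it aux}$-step on the linear side. The paper's proof is a two-line sketch that leaves implicit both the appeal to the diamond property (Proposition~\ref{prop.diamond}) needed to reconcile the catch-up reductions with the remaining original steps, and the case analysis mirroring Lemma~\ref{th:pl2l}; you have spelled these out.
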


\begin{proof}
A straightforward induction on the length of $\lam{\por{V}_1,\,\P_1}\lred{}^{*}
\lam{\por{V}'_1,\,\P'_1}$. In the inductive step, we need to expand the recursive
invocations ``at a same level'' in order to mimic the behavior of functions
$\M^{\it aux}$.
\end{proof}

\begin{theorem}
\label{thm.transformation-pl-l}
Let ${\cal L}_1$ be a pseudo-linear program and 
${\cal L}_2$ be the result of the transformations in Table~\ref{tab.trans}.
If a saturated state of ${\cal L}_2$ has no circularity then no state
of ${\cal L}_1$ has a circularity.
\end{theorem}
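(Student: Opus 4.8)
**

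The plan is to derive Theorem~\ref{thm.transformation-pl-l} as a direct corollary of the linearization machinery (Lemmas~\ref{th:pl2l} and~\ref{th:pl2lbis}) together with the decision procedure correctness for linear programs (Theorem~\ref{thm.invariance} and the diamond property, Proposition~\ref{prop.diamond}). The contrapositive formulation is the cleanest: assume some state $\lam{\por{V}_1,\,\P_1}$ of $\mathcal{L}_1$, reachable from the initial state, has a circularity; we must exhibit a circularity in \emph{every} saturated state of $\mathcal{L}_2$ (equivalently, in \emph{the} saturated state, since by Proposition~\ref{prop.diamond} all saturated states are equal up to bijective renaming, and a circularity is preserved by bijective renaming).

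First I would set up the correspondence at the initial states. Since the main lam is untouched by all three transformations of Table~\ref{tab.trans}, we have $\lam{\por{I}_{\P},\,\P} \linearized \lam{\por{I}_{\P},\,\P}$ via the identity surjection; conditions (1)--(3) of Definition~\ref{def.lin} hold trivially. Next I would propagate this relation forward along the $\mathcal{L}_1$-computation that reaches $\lam{\por{V}_1,\,\P_1}$. Here I invoke Lemma~\ref{th:pl2lbis}: from $\lam{\por{I}_{\P},\,\P} \linearized \lam{\por{I}_{\P},\,\P}$ and $\lam{\por{I}_{\P},\,\P} \lred{}^* \lam{\por{V}_1,\,\P_1}$, there exist a continuation $\lam{\por{V}_1,\,\P_1} \lred{}^* \lam{\por{V}_1'',\,\P_1''}$ and an $\mathcal{L}_2$-computation $\lam{\por{I}_{\P},\,\P} \lred{}^* \lam{\por{V}_2',\,\P_2'}$ with $\lam{\por{V}_1'',\,\P_1''} \linearized \lam{\por{V}_2',\,\P_2'}$. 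Since reductions only add dependency pairs (never remove them), the circularity present in $\P_1$ is still present in $\P_1''$. Now condition (2) of Definition~\ref{def.lin}, applied to the cycle $(x_1,x_2) \sparop \cdots \sparop (x_m,x_1)$ in $\flatt{\P_1''}$, gives that $\sigma(x_1,x_2) \sparop \cdots \sparop \sigma(x_m,x_1)$ all lie in the relational structure of $\flatt{\P_2'}$ — and because $\sigma$ maps each dependency $(x_i,x_{i+1})$ to $(\sigma(x_i),\sigma(x_{i+1}))$, this image is again a cycle (possibly degenerate if $\sigma$ identifies names, but a degenerate cycle still contains a circularity $(\sigma(x_i),\sigma(x_i))$-style loop or a shorter genuine one; in any case $\flatt{\P_2'}$ has a circularity). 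Here I must be slightly careful: condition (2) says each $\sigma(\T_i)$ lands in \emph{some} $\T_j'$, so different pairs of the cycle could scatter across different $\seq$-components of $\flatt{\P_2'}$. I would handle this by noting that the cycle sits inside a single $\sparop$-block $\T_i$ of $\flatt{\P_1''}$ (a single relation), hence $\sigma$ of the whole block lands in a single $\T_j'$, preserving the cycle within one relation.

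Finally I would transport this $\mathcal{L}_2$-circularity to the saturated state. The state $\lam{\por{V}_2',\,\P_2'}$ need not itself be saturated, but we can keep reducing: let $\lam{\por{I}_{\P},\,\P} \lred{}^* \lam{\por{V}_2^{\mathrm{sat}},\,\P_2^{\mathrm{sat}}}$ be the (essentially unique, by Proposition~\ref{prop.diamond}) saturated state. Using the diamond property to reconcile the two computations from $\lam{\por{V}_2',\,\P_2'}$ and from $\lam{\por{I}_{\P},\,\P}$ toward saturation, and again using that reduction is monotone on dependency pairs, the circularity in $\P_2'$ persists into $\P_2^{\mathrm{sat}}$ (up to the bijective renaming supplied by the diamond property, which preserves circularities). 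Hence the saturated state of $\mathcal{L}_2$ has a circularity, contradicting the hypothesis. The informative-semantics bookkeeping of Theorem~\ref{thm.invariance} is what licenses stopping at the saturated state for detection, so no further evaluation is needed.

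The main obstacle I anticipate is the bookkeeping around condition (2) of Definition~\ref{def.lin}: making precise that a \emph{cycle}, being a single $\sparop$-block, is carried by $\sigma$ into a single relation of the target and remains a cycle, and that $\sigma$ identifying two names either leaves the cycle intact at a strictly shorter length or produces a reflexive self-loop — either way yielding a circularity in the sense of Definition~\ref{def.circularity}. A secondary subtlety is that $\mathcal{L}_2$ results from \emph{three} transformations, and Lemmas~\ref{th:pl2l}--\ref{th:pl2lbis} are stated (with full proof) only for $\stackrel{{\tt pl}\mapsto{\tt l}}{\Longmapsto}_3$; I would remark, as the paper does, that $\stackrel{{\tt pl}\mapsto{\tt l}}{\Longmapsto}_1$ and $\stackrel{{\tt pl}\mapsto{\tt l}}{\Longmapsto}_2$ are exact (they merely unfold invocations, adding no fake dependencies), so the $\linearized$-relation for their composites is the identity-like one and the argument above goes through verbatim for the full pipeline.
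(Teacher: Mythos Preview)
Your proof is correct and follows essentially the same route as the paper's: transport the $\mathcal{L}_1$-circularity to an $\mathcal{L}_2$-state via Lemma~\ref{th:pl2lbis} and condition~(2) of Definition~\ref{def.lin}, then locate it in the saturated state via monotonicity of reduction, the diamond property (Proposition~\ref{prop.diamond}), and Theorem~\ref{thm.invariance}. You have also correctly isolated the two subtleties the paper leaves implicit---that a cycle lies in a single $\sparop$-block so condition~(2) carries it whole into one relation of $\flatt{\P_2'}$, and that $\stackrel{{\tt pl}\mapsto{\tt l}}{\Longmapsto}_1$ and $\stackrel{{\tt pl}\mapsto{\tt l}}{\Longmapsto}_2$ are exact unfoldings introducing no fake dependencies.
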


\begin{proof}
The transformations $\stackrel{{\tt pl} \mapsto {\tt l}}{\Longmapsto}_1$ and $\stackrel{{\tt pl} \mapsto {\tt l}}{\Longmapsto}_2$ perform expansions and do not introduce
inaccuracies. By Lemma~\ref{th:pl2l}, for every $\lam{\por{V}_2,\,\P_2}$ reached by evaluating 
${\cal L}_2$, there is $\lam{\por{V}_1,\,\P_1}$ that is reached by evaluating 
${\cal L}_1$ such that $\lam{\por{V}_1,\,\P_1} \linearized \lam{\por{V}_2,\,\P_2}$.
This guarantees that every circularity in $\lam{\por{V}_1,\,\P_1}$ is also present
in $\lam{\por{V}_2,\,\P_2}$. We conclude by Lemma~\ref{th:pl2lbis} and
Theorem~\ref{thm.invariance}.
\end{proof}

We observe that,  our analysis returns that the ${\tt fibonacci}$ program
is circularity-free.

\subsection{The general case}
\label{sec.generalcase}
In non-pseudo-linear recursive programs, more than one mutual recursive function may have 
several recursive histories. 
The transformation 
$\stackrel{{\tt npl} \mapsto {\tt pl}}{\Longmapsto}$ in Table~\ref{tab.t4}
takes a non-pseudo-linear recursive program and returns a program where the 
``non-pseudo-linearity'' is simpler. Repeatedly applying the transformation,
at the end, one obtains a pseudo-linear recursive program.  

More precisely, let $\bigl(\M_1(\wt{x_1}) = \P_{1}, \cdots , 
\M_\ell(\wt{x_\ell}) = \P_{\ell}, \P \bigr)$ be a non-pseudo-linear recursive program. Therefore,
there are at least two functions with more than one recursive history. One of this 
function is $\M_j$, which is the one that is being explored by the rule 
$\stackrel{{\tt npl} \mapsto {\tt pl}}{\Longmapsto}$. Let also $\M_i$ be another 
function such that $\closure{\M_j} = \closure{\M_i}$ (this $\M_i$ must exists otherwise 
the program would be already pseudo-linear recursive). These constraints are  
those listed in the first line of the premises of the rule.
The idea of this transformation is to defer the invocations
of the functions in $\{ \head{\alpha_1 \M_j}, \cdots ,$ 
$\head{\alpha_{h+1}\M_j} \} \setminus \M_i$, i.e., the functions different from $\M_i$ 
that can be invoked within $\M_j$'s body, to the body of the function $\M_i$. The 
meaning of the second and third lines of the premises of the rule  is to 
identify the $p_k$ different invocations of these $m$ functions ($k \geq m$). 
Notice that every $\alpha_1,\cdots,\alpha_{h+1}$ could be empty, meaning that 
$\M_j$ is directly called.
At this point, what we need to do is (1)  to  store the arguments of each invocation 
of $\M_{i_1},\cdots,\M_{i_m}$ into those of 
an invocation of $\M_i$ -- actually,  a suitable tuple of them, thus the arity of 
$\M_i$ is augmented
correspondingly -- and (2) to perform suitable expansions in the body of $\M_i$.
In order to augment the arguments of the invocations of $\M_i$ that occur 
in the other parts of 
the program, we use the auxiliary rule $\transflam{\M_i,n}$ that extends every 
invocation of $\M_i$ with $n$ additional arguments that are always fresh names. 
The fourth line of the premises calculates the number $n$ of additional 
arguments, based on the number of arguments of the functions that are going 
to be moved into $\M_i$'s body.
The last step, described in the last line of the premises of the rule, is to 
replace the invocations of the functions  $\M_{i_1},\cdots,\M_{i_m}$ with 
invocations of $\M_i$. Notice that, in each invocation, the position of 
the actual arguments is different. In the body of $\M_i$, after the 
transformation, the invocations of those functions will be performed passing 
the right arguments.
\begin{table*}[t]
{\footnotesize
\[
\begin{array}{c}
\bigfract{
\begin{array}{c}
\M \notin \CP
\qquad
\wt{z_1}, \cdots , \wt{z_m} \mbox{\emph{ are $n$-tuple of fresh names}}
\end{array}
}{
\CP[\M(\wt{u_1})] \cdots [\M(\wt{u_m})] \transflam{\M,n} 
\CP[\M(\wt{u_1},\wt{z_1})] \cdots [\M(\wt{u_m},\wt{z_m})]
}
\\
\\
\\
\bigfract{
	\begin{array}{c}
		\rechis{\M_j} = \{ \M_j \M_i \alpha_0 , \M_j \alpha_1 , \cdots , 
		\M_j \alpha_{h+1}\}
	\qquad  
	\M \in \M_i\alpha_0 \qquad \sharp(\rechis{\M}) > 1
	\\
	\{\M_{i_1}, \cdots , \M_{i_m}\} = \{ \head{\alpha_1 \M_j}, \cdots , 
	\head{\alpha_{h+1}\M_j} \} \setminus \M_i
	\\ 
    \qquad
	\P_j = \CP[\M_{p_1}(\wt{u_1})] \cdots [\M_{p_k}(\wt{u_{k}})] 
	\qquad \{ \M_{p_1}, \cdots , \M_{p_k} \} = \{ \M_{i_1}, \cdots , \M_{i_m}\} \qquad
	\M_{i_1}, \cdots , \M_{i_m} \notin \CP
	\\
	n = \sharp(\wt{u_1} \cdots \wt{u_k}) \qquad
	(\P_h \transflam{\M_i,n} \P_h')^{h \in \{ 1, \cdots, \ell+1 \}}
	\qquad 
	\P_j' = \CP'[\M_{i_1}(\wt{u_1})] \cdots [\M_{i_m}(\wt{u_{k}})]
	\\
	\wt{z_1^1}, \cdots , \wt{z_{k}^1}, \cdots ,  \wt{z_1^k},  \cdots , \wt{z_{k}^k},
	 \wt{z_1},  \cdots , \wt{z_{k}},
	\mbox{ are fresh} 
	\\
	\P_j'' = \CP'
	[\M_{i}(\wt{z_1^1},
	\wt{u_1},  \wt{z_2^1},\cdots , \wt{z_{k}^1})] \cdots 
	[\M_{i}(\wt{z_1^k},  \cdots , \wt{z_{k}^k},\wt{u_{k}})]
	 \end{array}
}{
\begin{array}{l}
\bigl( \M_1(\wt{x_1}) = \P_1,\cdots \M_i(\wt{x_i}) = \P_i, \cdots ,
\M_j(\wt{x_j}) = \P_j, \cdots ,
\M_\ell(\wt{x_\ell}) = \P_\ell, \P_{\ell+1} \bigr)
\; \stackrel{{\tt npl} \mapsto {\tt pl}}{\Longmapsto}
\\
\bigl( \M_1(\wt{x_1}) = \P_1',\cdots \M_i(\wt{x_i}, \wt{z_1}, \cdots , \wt{z_k}) = \P_i'\sparop 
	(\prod_{q \in 1..k} \M_{p_q}(\wt{z_q})), \cdots , \M_j(\wt{x_j}) = \P_j'', \cdots ,
\M_\ell(\wt{x_\ell}) = \P_\ell', \P'_{\ell+1} \bigr)
\end{array}
}
\end{array}
\] }
\caption{\label{tab.t4} Non-pseudo-linear to pseudo-linear transformation}
\end{table*}
For example, the $\M''\N''$-program 

\smallskip
 
$\begin{array}{ll}
\bigl( \; & \M''(x,y) =  (x,z)\sparop\M''(y,z) \seq \N''(y,x) \; ,
\\
 &\N''(x,y) = (y,x)\sparop\M''(y,z) \sparop \N''(z,x) \;,
 \\
 & \M''(x_1,x_2) \quad
 \bigr)
\end{array}$

\smallskip

is rewritten into

\smallskip 

$\begin{array}{@{\!}ll}
\bigl( & \M''(x,y) =  (x,z)\sparop\N''(x',y',y,z) \seq \N''(y,x,z',z'') \; ,
\\
 &\N''(x,y,u,v) = (y,x)\sparop\M''(y,z) \sparop \N''(z,x,x',y') \sparop \M''(u,v) \;,
 \\
 & \M''(x_1,x_2) \quad
 \bigr) \; .
\end{array}$

\smallskip

The invocation $\M''(y,z)$ is moved into the body of $\N''$. The function $\N''$ has 
an augmented arity, 
so that its first two arguments
refer to the arguments of the invocations of $\N''$ in the original program, and the last two arguments refer to the invocation of $\M''$. 
Looking at the body of $\N''$, the unchanged part (with the augmented arity of $\N''$)
 covers the first two 
arguments; whilst the 
last two arguments are 
only used for a new invocation of $\M''$.

The correctness of 
$\stackrel{{\tt npl} \mapsto {\tt pl}}{\Longmapsto}$ is demonstrated in a similar
way to the proof of the correctness of $\stackrel{{\tt pl} \mapsto 
{\tt l}}{\Longmapsto}_3$. We begin by defining a correspondence between states 
of a non-pseudo-linear program 
and those of a pseudo-linear one. 

\begin{definition}\label{def.pslin}
Let ${\cal L}_2$ be the pseudo-linear program returned by the 
transformation  of Table~\ref{tab.t4} applied to 
${\cal L}_1$. 
A state $\lam{\por{V}_1,\,\P_1}$ of ${\cal L}_1$ is 
\emph{pseudo-linearized to} a state $\lam{\por{V}_2,\,\P_2}$ of ${\cal L}_2$, 
written $\lam{\por{V}_1,\,\P_1} \pslinearized \lam{\por{V}_2,\,\P_2}$, 
if there exists a surjection $\sigma$ such that:
\begin{enumerate}
\item 
$if (\obj,\objb)\in \por{V}_1$ then $(\sigma(\obj),\sigma(\objb))\in \por{V}_2$.
\item 
if $\flatt{\P_1}=\T_1 \seq \cdots \seq \T_m$ and $\flatt{\P_2}=\T_1' \seq \cdots \seq \T_n'$, then for every $1\leq i\leq m$, there exists $1\leq j\leq n$, such that 
$\sigma( \T_i) \in \T'_j$;
\item if $\M(\wt{x})\in\P_1$ then either (1) 
$\M(\sigma(\wt{x}))$ in $\P_2$ or 
(2) there is $\M(\wt{y}_1\cdots\wt{y}_k)$ in $\P_2$ such that, for some $1 \leq i \leq
k$, $\sigma(\wt{x}) = \wt{y}_{i}$;

\end{enumerate}
\end{definition}

We use the same notational convention for contexts as in Lemma~\ref{th:pl2l}.

\begin{lemma}\label{th:nl2pl}
Let $\lam{\por{V}_1,\,\P_1} \pslinearized \lam{\por{V}_2,\,\P_2}$.
Then, 
$\lam{\por{V}_1,\,\P_1}\lred{}%
\lam{\por{V}'_1,\,\P'_1}$ implies
there exists 
$\lam{\por{V}_2,\,\P_2}\lred{}^{+}%
\lam{\por{V}'_2,\,\P'_2}$
such that $\lam{\por{V}'_1,\,\P'_1} \pslinearized \lam{\por{V}'_2,\,\P'_2}$
\end{lemma}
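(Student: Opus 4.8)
**

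The plan is to prove Lemma~\ref{th:nl2pl} by induction on the derivation of the single reduction step $\lam{\por{V}_1,\,\P_1}\lred{}\lam{\por{V}'_1,\,\P'_1}$, following the same strategy as Lemma~\ref{th:pl2l} but now tracking how the $\stackrel{{\tt npl}\mapsto{\tt pl}}{\Longmapsto}$ transformation relocates invocations of the functions $\M_{i_1},\dots,\M_{i_m}$ into the body of $\M_i$. The surjection $\sigma$ witnessing $\pslinearized$ is the central object: at each step I need to extend it (or compose it with a fresh-name renaming) so that the three conditions of Definition~\ref{def.pslin} are preserved. Since the transformation augments the arity of $\M_i$ with $n$ extra slots that are filled with fresh names in ``pass-through'' positions but carry the actual arguments of the deferred $\M_{p_q}$-invocations in the relevant positions, the key invariant is that whenever $\P_1$ contains an invocation $\M_{p_q}(\wt{x})$ that is ``pending'' (not yet consumed), $\P_2$ contains a matching $\M_i$-invocation whose $q$-th argument block equals $\sigma(\wt{x})$.

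The case analysis on the reduced function in $\P_1$ is as follows. First, if the invoked function is one that was untouched by the transformation (it is neither $\M_i$, nor $\M_j$, nor any $\M_{p_q}$, and not affected by $\transflam{\M_i,n}$), then the same rule fires on the $\P_2$-side and we get a one-step matching reduction; conditions 1--3 are checked directly using that $\sigma$ was already a witness and name creation on both sides can be matched by extending $\sigma$ on the fresh names. Second, if the invoked function is $\M_j$, then in the transformed program the body $\P_j''$ has replaced each $\M_{p_q}(\wt{u_q})$ with an $\M_i$-invocation carrying $\wt{u_q}$ in position $q$; so I unfold $\M_j$ on both sides and argue that the pending-invocation invariant is restored — this is where the bulk of the bookkeeping lives, but it is purely structural. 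Third, and this is the delicate case, if the invoked function is one of the deferred $\M_{p_q}$: on the $\P_1$-side this produces the body of $\M_{p_q}$ directly, whereas on the $\P_2$-side the corresponding step is the unfolding of the $\M_i$-invocation that is storing $\wt{u_q}$, whose body contains $\prod_{q\in 1..k}\M_{p_q}(\wt{z_q})$ — one reduction there releases $\M_{p_q}(\sigma(\wt{u_q}))$, which can then be unfolded, giving a $\lred{}^{+}$ (two or more steps) on the right. The dependency pairs and further invocations generated by $\M_i$'s body (the $\P_i'$ part, possibly with extra fresh-name pairs from $\transflam{\M_i,n}$) are handled by observing, exactly as in the proof of Lemma~\ref{th:pl2l}, that extra pairs/invocations on the pseudo-linear side never violate $\pslinearized$ because Definition~\ref{def.pslin} only requires containment in one direction (each $\T_i$ on the left maps into some $\T_j'$ on the right).

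The main obstacle I anticipate is the third case: ensuring that when several of the deferred invocations $\M_{p_1},\dots,\M_{p_k}$ are reduced in the nonlinear program in some interleaved order, the pseudo-linear program can always schedule the matching sequence of $\M_i$-unfoldings and $\M_{p_q}$-releases to stay in correspondence, \emph{and} that the fresh names generated by $\M_i$'s augmented body (the $\wt{z}$'s introduced by $\transflam{\M_i,n}$, which are ``garbage'' placeholders on the pass-through side) can be consistently folded into the surjection $\sigma$ without ever forcing two distinct left-side names to the same right-side name in a way that breaks injectivity-where-needed for the dependency-pair condition. I expect to handle this by choosing, at each inductive step, $\sigma$ to send every freshly created right-side placeholder name to an arbitrary but fixed left-side name in its ``group'' (mirroring the trick ``$\wt{w}=\sigma(\wt{w_1})=\cdots=\sigma(\wt{w_k})$'' used in Lemma~\ref{th:pl2l}), and then verifying the three conditions componentwise. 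The diamond property (Proposition~\ref{prop.diamond}) is invoked, as in the linear case, to justify that the particular scheduling chosen on the $\P_2$-side is WLOG. The remaining verification is routine and parallels Lemma~\ref{th:pl2l} line by line.
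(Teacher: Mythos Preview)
There is a genuine confusion in your case analysis, rooted in a misreading of Definition~\ref{def.pslin}. Condition~3 of $\pslinearized$ requires that an invocation $\M(\wt{x})\in\P_1$ be matched in $\P_2$ by an invocation of the \emph{same} function $\M$ (possibly with augmented arity); it does \emph{not} allow an $\M_{p_q}$-invocation on the left to be witnessed by an $\M_i$-invocation on the right, which is exactly what your ``pending-invocation invariant'' asserts. Consequently your cases~2 and~3 are inverted with respect to where the $\lred{}^{+}$ is actually needed.

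Concretely: in your case~2 (unfolding $\M_j$), doing one step on each side leaves $\M_{p_q}(\wt{u_q})$ in $\P_1'$ but only the replacement $\M_i(\ldots,\wt{u_q},\ldots)$ in $\P_2'$, so condition~3 of Definition~\ref{def.pslin} is violated and $\pslinearized$ is \emph{not} restored. This is precisely the place where the paper spends the extra steps: after unfolding $\M_j$ on both sides, one must additionally unfold on the pseudo-linear side the freshly appeared $\M_i$-invocations so that the delegated $\M_{p_q}$-calls hidden in $\M_i$'s augmented body become explicit again --- and since $\M_i$'s body may itself have been rewritten by a further round of $\stackrel{{\tt npl}\mapsto{\tt pl}}{\Longmapsto}$, this may have to be iterated. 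Conversely, your case~3 does not arise under the actual hypothesis: if $\M_{p_q}(\wt{x})\in\P_1$ and $\P_1\pslinearized\P_2$, condition~3 already hands you an $\M_{p_q}$-invocation in $\P_2$, which unfolds in a single matching step. The remaining ingredients you list (extending $\sigma$ on fresh names, the one-directional containment in condition~2 absorbing spurious pairs and invocations, and the appeal to the diamond property) are the right ones and agree with the paper's argument.
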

\begin{proof}
\emph{Base case.}
$\P_1$ is the main lam of the nonlinear program, and $\P_2$ its pseudolinear transformation.
\[
\P_1=\CP_1[\M_1(\wt{u}_1)]\cdots[\M_m(\wt{u}_k)],
\]
where $\CP_1$ does not contain any other function invocations, and $m\leq k$, meaning that some of the $\M_i$, $1\leq i\leq m$, can be invoked more than once on different parameters.

After the transformation, $\P_2$ contains the same pairs as $\P_1$ and the same function invocations, but with possibly more arguments:
\[
\P_2=\CP_1[\M_1(\wt{u}_1,\wt{z}_1)]\cdots[\M_m(\wt{u}_k,\wt{z}_k)].
\]
Notice that some of the $\wt{z}_j$, $1\leq j\leq k$, may be empty if the corresponding function has not been expanded during the transformation.
Moreover $\por{V}_1$ and $\por{V_2}$ contains only the identity relations on the arguments, so we have $\por{V}_1\subseteq\por{V}_2$.
Therefore, all conditions of definition~\ref{def.pslin} are trivially verified.

\medskip

\noindent
\emph{Inductive case.}
We have
\[
\P_1=\CP_1[\M_1(\wt{u}_1)]\cdots[\M_m(\wt{u}_k)],
\]
where $\CP_1$ does not contain any other function invocations, and $m\leq k$, meaning that some of the $\M_i$, $1\leq i\leq m$, can be invoked more than once on different parameters.

We have
\[
\P_2=\CP_2[\M_1(\wt{u}_1,\wt{z}_1)]\cdots[\M_m(\wt{u}_k,\wt{z}_k)].
\]
where $\CP_2$ may contain other function invocations, but by inductive hypothesis we know that Definition~\ref{def.pslin} is verified. In particular condition \emph{3} guarantees that at least the invocations of $\M_1,\ldots,\M_m$, with suitable arguments, are in $\P_2$.

Now, let us consider the reduction
\[
\lam{\por{V}_1,\,\P_1}\lred{}\lam{\por{V}'_1,\,\P'_1}.
\]
Without loss of generality, we can assume the reduction step performed an invocation of function $\M_1(\wt{u}_1)$.

We have different cases:
\begin{enumerate}
  \item the function's lam $\P_{\M_1}$ has not been modified by the transformation. In this case the result follows trivially.
  \item the function's lam $\P_{\M_1}$ has been affected only in that some function invocations in it have an updated arity. Meaning that
it was only trasformed by $\transflam{\N,l}$, for some $\N$ and $l$, as a side effect of other function expansions.
It follows that $\flat(\P_{\M_1})=\flat(\P'_{\M_1})$, where $\P'_{\M_1}$ is the body of $\M_1$ after the transoformation has been applied. This satisfies  condition \emph{2} of Definition~\ref{def.pslin}.
Those function invocations that have not been modified satisfy trivially the condition \emph{3} of Definition~\ref{def.pslin}. Regarding the other function invocations we have, by construction, that if $\N(\wt{x})\in\P_{\M_1}$ then $\N(\wt{x},\wt{y})\in\P'_{\M_1}$, where $\wt{y}$ are fresh names. This satisfies condition \emph{3} of Definition~\ref{def.pslin}, as well.
As for condition \emph{1}, we have
\[
\por{V}'_1=\por{V}_1\transclosure(\wt{u}_1<\wt{w}_1),
\]
where $\wt{w}$ are fresh names created in $\P_{\M_1}$, and
\[
\por{V}'_2=\por{V}_2\transclosure(\wt{u}_1,\wt{z_1}<\wt{w}_1,\wt{y}_1,\cdots,\wt{y}_s),
\]
where $\wt{y}_1,\cdots,\wt{y}_s$ are the fresh names augmenting the function arities within $\P'_{\M_1}$.
We choose the same fresh names $\wt{w}_1$ and condition \emph{1} is satisfied.
\item the function's lam $\P_{\M_1}$ has been subject of the expansion of a function.
Let 
\[
\P_{\M_1}=\CP_{\M_1}[\N_1(\wt{v}_1)]\cdots[\N_h(\wt{v}_n)],
\]
where $\CP_{\M_1}$ contains only pairs, then, assuming without loss of generality that $\N_1$ was expanded:
\[
\P'_{\M_1}=\CP_{\M_1}[\N_1(\wt{v}_1,\wt{z}^1_1,\ldots,\wt{z}^1_r)]\cdots[\N_1(\wt{v}_n,\wt{z}^r_1,\ldots,\wt{z}^r_r)],
\]
where $r$ is obtained by subtracting from the number of invocations $n$ the number of occurrences of invocations of $\N_1$ in $\P'_{\M_1}$.

Now, the psedulinear program has to perform the $r$ invocations of $\N_1$ that were not present in the original program, since they have been replaced $r$ invocations of $\N_2\cdots\N_h$, in order to reveal the actual invocations $\N_2\cdots\N_h$ that has been delegated to $\N_1$ body.
By construction, the arguments of the invocations where preserved by the transformation, so that if $\N_2(\wt{x})$ is produced by reduction of the nonlinear program, then the pseudolinear program will produce $\N_2(\wt{x},\wt{y})$, with $\wt{y}$ fresh and possibily empty. This satisfy condition \emph{3} of Definition~\ref{def.pslin}.

However the body of $\N_1$ may have been transformed in a similar way by expanding another method, let us say $\N_2$. Then all the invocations of $\N_2$ in $\N_1$'s body that corresponds to the previously delegated function invocations $\N_2\cdots\N_h$ have to be invoked as well.
This procedure has to be iterated until all the corresponding invocations are encountered. Each step of reduction will produce spurious pairs and function invocations, but all of these will be on different new names.
\end{enumerate}  
\end{proof}

\begin{lemma}\label{th:nl2plbis}
Let $\lam{\por{V}_1,\,\P_1} \pslinearized \lam{\por{V}_2,\,\P_2}$ and
$\lam{\por{V}_1,\,\P_1}\lred{}^{*}
\lam{\por{V}'_1,\,\P'_1}$.
Then there are $\lam{\por{V}'_1,\,\P'_1}\lred{}^{*}
\lam{\por{V}''_1,\,\P''_1}$ and 
$\lam{\por{V}_2,\,\P_2}$ $\lred{}^{*}
\lam{\por{V}'_2,\,\P'_2}$
such that $\lam{\por{V}''_1,\,\P''_1} \pslinearized \lam{\por{V}'_2,\,\P'_2}$
\end{lemma}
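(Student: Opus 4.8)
The plan is to prove this by a straightforward induction on the length $n$ of the reduction $\lam{\por{V}_1,\,\P_1}\lred{}^{*}\lam{\por{V}'_1,\,\P'_1}$, using Lemma~\ref{th:nl2pl} as the single-step simulation of ${\cal L}_1$ by ${\cal L}_2$. First I would write the given reduction as $\lam{\por{V}_1,\,\P_1}=\lam{\por{V}_1^{(0)},\,\P_1^{(0)}}\lred{}\cdots\lred{}\lam{\por{V}_1^{(n)},\,\P_1^{(n)}}=\lam{\por{V}'_1,\,\P'_1}$ and set $\lam{\por{V}_2^{(0)},\,\P_2^{(0)}}=\lam{\por{V}_2,\,\P_2}$. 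Then, for $0\le i<n$, having $\lam{\por{V}_1^{(i)},\,\P_1^{(i)}}\pslinearized\lam{\por{V}_2^{(i)},\,\P_2^{(i)}}$ together with the step $\lam{\por{V}_1^{(i)},\,\P_1^{(i)}}\lred{}\lam{\por{V}_1^{(i+1)},\,\P_1^{(i+1)}}$, Lemma~\ref{th:nl2pl} produces a non-empty reduction $\lam{\por{V}_2^{(i)},\,\P_2^{(i)}}\lred{}^{+}\lam{\por{V}_2^{(i+1)},\,\P_2^{(i+1)}}$ with $\lam{\por{V}_1^{(i+1)},\,\P_1^{(i+1)}}\pslinearized\lam{\por{V}_2^{(i+1)},\,\P_2^{(i+1)}}$. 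Concatenating the $n$ resulting reductions gives $\lam{\por{V}_2,\,\P_2}\lred{}^{*}\lam{\por{V}_2^{(n)},\,\P_2^{(n)}}$ with $\lam{\por{V}'_1,\,\P'_1}\pslinearized\lam{\por{V}_2^{(n)},\,\P_2^{(n)}}$, and the conclusion follows by taking $\lam{\por{V}''_1,\,\P''_1}=\lam{\por{V}'_1,\,\P'_1}$ (no further reduction on the left) and $\lam{\por{V}'_2,\,\P'_2}=\lam{\por{V}_2^{(n)},\,\P_2^{(n)}}$.

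So the statement is essentially a corollary of Lemma~\ref{th:nl2pl}. The freedom to append further reductions $\lam{\por{V}'_1,\,\P'_1}\lred{}^{*}\lam{\por{V}''_1,\,\P''_1}$ in the conclusion --- which I would not use --- is kept for uniformity with Lemma~\ref{th:pl2lbis}, where it is genuinely necessary because there the single-step simulation runs in the opposite direction and one must first expand the pseudo-linear invocations ``at the same level'' to mimic a batched step of an $\M^{\it aux}$-function. Here no such catching up is needed, since by construction of the transformation of Table~\ref{tab.t4} a single invocation of an augmented function $\M_i$ of ${\cal L}_2$ may already spawn, in parallel, the deferred invocations $\M_{p_q}(\wt{z_q})$, so ${\cal L}_2$ can absorb any single step of ${\cal L}_1$.

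The one point requiring care --- and the only mild obstacle I foresee --- is the bookkeeping of the surjection $\sigma$ through the iteration: at stage $i+1$ one must compose the surjection coming from the induction hypothesis with the surjection produced by Lemma~\ref{th:nl2pl}, and check that the composite still satisfies conditions 1--3 of Definition~\ref{def.pslin}, in particular that the fresh names created along the iterated invocations of the augmented functions on the ${\cal L}_2$ side are mapped consistently. This is routine and is handled exactly as in the proof of Lemma~\ref{th:nl2pl}, so no new ideas are required; the inductive skeleton above is the whole argument.
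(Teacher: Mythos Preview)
Your proposal is correct and matches the paper's own proof, which is literally the one-liner ``A straightforward induction on the length of $\lam{\por{V}_1,\,\P_1}\lred{}^{*}\lam{\por{V}'_1,\,\P'_1}$.'' Your observation that the extra slack $\lam{\por{V}'_1,\,\P'_1}\lred{}^{*}\lam{\por{V}''_1,\,\P''_1}$ is not needed here (unlike in Lemma~\ref{th:pl2lbis}) is also right; and your worry about composing surjections is harmless but unnecessary, since Lemma~\ref{th:nl2pl} can be used as a black box at each step, delivering a fresh $\pslinearized$ relation that already carries its own surjection.
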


\begin{proof}
A straightforward induction on the length of $\lam{\por{V}_1,\,\P_1}\lred{}^{*}
\lam{\por{V}'_1,\,\P'_1}$. 
\end{proof}

\medskip

Every preliminary result is in place for the correctness of the transformation
$\stackrel{{\tt npl} \mapsto {\tt pl}}{\Longmapsto}$.

\begin{theorem}
\label{thm.transformation-npl-pl}
Let ${\cal L}_1$ be a non-pseudo-linear program and 
${\cal L}_2$ be the result of the transformations in Table~\ref{tab.t4}.
If ${\cal L}_2$ is circularity-free then
${\cal L}_1$ is circularity-free.
\end{theorem}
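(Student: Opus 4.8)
The plan is to follow the proof of Theorem~\ref{thm.transformation-pl-l} verbatim, but with the relation $\pslinearized$ of Definition~\ref{def.pslin} in place of $\linearized$ and with Lemmas~\ref{th:nl2pl} and~\ref{th:nl2plbis} in place of Lemmas~\ref{th:pl2l} and~\ref{th:pl2lbis}. Call a lam program \emph{circularity-free} when no reachable state of it has a circularity in the sense of Definition~\ref{def.circularity}. I would prove the contrapositive, showing that a circularity in some reachable state of ${\cal L}_1$ forces a circularity in some reachable state of ${\cal L}_2$.

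So suppose $\lam{\por{I}_{\P},\,\P}\lred{}^{*}\lam{\por{V}_1,\,\P_1}$ is a reduction of ${\cal L}_1$ ending in a state with a circularity. First I would set up the simulation. The initial states $\lam{\por{I}_{\P},\,\P}$ of ${\cal L}_1$ and of ${\cal L}_2$ are $\pslinearized$-related: the transformation of Table~\ref{tab.t4} alters the main lam only through the arity-extending rule $\transflam{\M_i,n}$, which merely appends fresh arguments to some invocations, so the identity renaming (suitably extended) is a surjection witnessing the three clauses of Definition~\ref{def.pslin}; this is precisely the base case inside the proof of Lemma~\ref{th:nl2pl}. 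By induction on the length of the reduction, applying the inductive part of Lemma~\ref{th:nl2pl} at each step, I obtain a reachable state $\lam{\por{V}_2,\,\P_2}$ of ${\cal L}_2$ with $\lam{\por{V}_1,\,\P_1}\pslinearized\lam{\por{V}_2,\,\P_2}$, witnessed by a surjection $\sigma$. Now I would transport the circularity: writing $\flatt{\P_1}=\T_1\seq\cdots\seq\T_m$ (Proposition~\ref{prop.normalform}), Definition~\ref{def.circularity} gives a block $\T_i$ containing a cycle $(x_1,x_2)\sparop(x_2,x_3)\sparop\cdots\sparop(x_k,x_1)$; by clause~2 of Definition~\ref{def.pslin} the whole of $\sigma(\T_i)$ sits inside a single block $\T'_j$ of $\flatt{\P_2}$, so $\T'_j$ contains the closed walk $(\sigma(x_1),\sigma(x_2))\sparop\cdots\sparop(\sigma(x_k),\sigma(x_1))$ in its dependency graph. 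If the $\sigma(x_\ell)$ are not all equal this closed walk contains a proper directed cycle; if they all coincide it is a self-loop $(\sigma(x_1),\sigma(x_1))$. Either way $\P_2$, hence the reachable state $\lam{\por{V}_2,\,\P_2}$, has a circularity, contradicting the circularity-freedom of ${\cal L}_2$; so ${\cal L}_1$ is circularity-free.

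I expect the genuine difficulty to be entirely inside Lemma~\ref{th:nl2pl}, which I am allowed to assume: there one must show that a step of ${\cal L}_1$ unfolding an invocation can be matched by finitely many steps of ${\cal L}_2$ that unfold the corresponding arity-augmented invocation of $\M_i$ together with the invocations it now carries ``at the same level'', while redefining $\sigma$ on the newly created names so that the three clauses of Definition~\ref{def.pslin} are preserved. At the level of Theorem~\ref{thm.transformation-npl-pl} itself the only point needing care is the last observation above, namely that a surjection which merges names but respects the $\flatt{\cdot}$-block structure sends circularities to circularities; the ``a nontrivial closed walk contains a directed cycle'' remark should be spelled out there. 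Note that, unlike in the pseudo-linear-to-linear case, Lemma~\ref{th:nl2plbis} is not needed for this direction: the simulation of Lemma~\ref{th:nl2pl} already runs from ${\cal L}_1$ to ${\cal L}_2$, the favourable direction, so no ``catch-up'' step is required (Lemma~\ref{th:nl2plbis} is only the analogue of Lemma~\ref{th:pl2lbis}, which was needed in Theorem~\ref{thm.transformation-pl-l} because there the simulation ran the other way).
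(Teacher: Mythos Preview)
Your argument is correct and follows the paper's own proof essentially verbatim: use Lemma~\ref{th:nl2pl} to simulate any ${\cal L}_1$-computation by an ${\cal L}_2$-computation ending in a $\pslinearized$-related state, and then transport the circularity along the surjection $\sigma$ via clause~2 of Definition~\ref{def.pslin}. Your observation that Lemma~\ref{th:nl2plbis} is superfluous here is right: unlike Lemma~\ref{th:pl2l}, Lemma~\ref{th:nl2pl} already simulates in the direction ${\cal L}_1\to{\cal L}_2$, so iterating it along the ${\cal L}_1$-reduction suffices and no catch-up extension is needed; the paper's invocation of Lemma~\ref{th:nl2plbis} is at most a convenience (its conclusion specialises to what you obtain by induction on Lemma~\ref{th:nl2pl} with a zero-step extension). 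One small simplification: your case split on whether the $\sigma(x_\ell)$ are all equal is unnecessary, since Definition~\ref{def.circularity} does not require the names $x_1,\ldots,x_m$ to be pairwise distinct, so $(\sigma(x_1),\sigma(x_2))\sparop\cdots\sparop(\sigma(x_k),\sigma(x_1))$ is already a circularity as written.
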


\begin{proof}
By Lemma~\ref{th:nl2pl}, for every $\lam{\por{V}_1,\,\P_1}$ reached by evaluating 
${\cal L}_1$, there is $\lam{\por{V}_2,\,\P_2}$ that is reached by evaluating 
${\cal L}_2$ such that $\lam{\por{V}_1,\,\P_1} \pslinearized \lam{\por{V}_2,\,\P_2}$.
This guarantees that every circularity in $\lam{\por{V}_1,\,\P_1}$ is also present
in $\lam{\por{V}_2,\,\P_2}$. We conclude by Lemma~\ref{th:nl2plbis}. 
\end{proof}

\end{document}